\newtheorem{theorem}{Theorem}[section]
\newtheorem{lemma}[theorem]{Lemma}
\newtheorem{proposition}[theorem]{Proposition}
\newtheorem{conjecture}[theorem]{Conjecture}
\DeclareMathAlphabet{\mathpzc}{OT1}{pzc}{m}{it}
\def\@font@info#1{}
\theoremstyle{remark}
\newtheorem{remark}{Remark}[section]
\newenvironment{rmk}{\begin{remark}}{\hfill$\Diamond$\end{remark}}
\theoremstyle{definition}
\newtheorem{definition}[theorem]{Definition}
\numberwithin{equation}{section}
\newcommand{\ostar}{\mathbin{\mathpalette\make@circled\star}}
\newcommand{\make@circled}[2]{%
  \ooalign{$\m@th#1\smallbigcirc{#1}$\cr\hidewidth$\m@th#1#2$\hidewidth\cr}%
}
\newcommand{\smallbigcirc}[1]{%
  \vcenter{\hbox{\scalebox{0.77778}{$\m@th#1\bigcirc$}}}%
}
\newcommand{\beq}{\begin{eqnarray}}
\newcommand{\eeq}{\end{eqnarray}}
\def\g{\mathfrak g}
\def\h{\mathfrak h}
\def\G{\mathfrak{G}}
\def\C{\mathfrak{C}}
\def\bbZ{\mathbb{Z}}
\def\R{\mathrm{P}}
\def\bbC{\mathbb{C}}
\newcommand{\cA}{{\cal A}}
\newcommand{\cC}{{\cal C}}
\newcommand{\cD}{{\cal D}}
\newcommand{\cE}{{\cal E}}
\newcommand{\cF}{{\cal F}}
\newcommand{\cH}{{\cal H}}
\newcommand{\cM}{{\cal M}}
\newcommand{\cO}{{\cal O}}
\newcommand{\cV}{{\cal V}}
\newsavebox{\@brx}
\newcommand{\llangle}[1][]{\savebox{\@brx}{\(\m@th{#1\langle}\)}%
  \mathopen{\copy\@brx\kern-0.5\wd\@brx\usebox{\@brx}}}
\newcommand{\rrangle}[1][]{\savebox{\@brx}{\(\m@th{#1\rangle}\)}%
  \mathclose{\copy\@brx\kern-0.5\wd\@brx\usebox{\@brx}}}
\newcommand{\id}{\operatorname{id}}
\begin{document}

\title{Combinatorial quantization of 4d 2-Chern-Simons theory I: \\ the Hopf category of higher-graph operators}
\author[1]{{ \sf Hank Chen}\thanks{hank.chen@uwaterloo.ca}\thanks{chunhaochen@bimsa.cn}}

\affil[1]{\small Beijing Institute of Mathematical Sciences and Applications, Beijing 101408 , China}

\maketitle

\bigskip

\begin{abstract}
2-Chern-Simons theory, or more commonly known as 4d BF-BB theory with gauged shift symmetry, is a natural generalization of Chern-Simons theory to 4-dimensional manifolds. It is part of the bestiary of higher-homotopy Maurer-Cartan theories. In this article, we present a framework towards the combinatorial quantization of 2-Chern-Simons theory on the lattice, taking inspiration from the work of Aleskeev-Grosse-Schomerus three decades ago. The central geometric input is a "2-graph" $\Gamma^2$ embedded in a 3d Cauchy slice $\Sigma$, which has equipped the structure of a discrete 2-groupoid. Upon such 2-graphs, we model the extended Wilson surface operators in 2-Chern-Simons holonomies as Crane-Yetter's \textit{measureable fields}. We show that the 2-Chern-Simons action endows these 2-graph operators --- as well as their quantum 2-gauge symmetries --- the structure of a Hopf category, and that their associated higher $R$-matrix gives it a categorical quasitriangularity structure, which we call the {\it cobraiding}. This is an explicit realization of the categorical ladder proposal of Baez-Dolan, in the context of Lie group 2-gauge theories on the lattice. Moreover, we will also analyze the lattice 2-algebra on the graph $\Gamma$, and extract the observables from it.
\end{abstract}

\newpage

\tableofcontents

\newpage

\section{Introduction}
Over the past century, it was discovered that there is a very interesting interplay between low-dimensional geometry topology and physics. In particular, the work of Witten \cite{WITTEN1990285} revealed that the Wilson line observables in the 3-dimensional Chern-Simons theory (as well as its boundary integrable field theory \cite{KNIZHNIK198483}) computed 3-manifold invariants associated to knot complements. On the other hand, the geometry of framed knots and ribbons up to isotopy --- also known as \textit{skein theory} --- are well-known \cite{SHUM199457,FREYD1989156} to admit a description in terms of the so-called \textit{ribbon categories}. These are purely algebraic data, defined by monoidal categories equipped with additional rigidity and braiding structures. The computation of polynomial knot invariants from such purely algebraic input has also been formalized \cite{Kauffman1987StateMA,Turaev:1992}.

The stage set by this "low-dimensional triangle", between 3d topological quantum field theories (TQFTs)/2d integrable systems, knot invariants and categorical homotopy algebra, has a central player: the theory of \textit{quantum group Hopf algebras} \cite{drinfel1988,Woronowicz1988,Majid:1996kd} and the (unitary) modular ribbon category of its representations. The seminal works of Reshetikhin-Turaev \cite{Reshetikhin:1991tc,Reshetikhin:1990pr} in particular explained in great detail how the structure of quantum group Hopf algebras --- particularly those of the Drinfel'd-Jimbo type coming from quantum deformations \cite{Jimbo:1985zk,Drinfeld:1986in}, such as $U_q\mathfrak{sl}_2$ arising out of the $SU(2)_k$ Chern-Simons theory --- gave rise to invariants of framed knots and tangles. This formulation came to be known as the "Reshetikhin-Turaev functor"; the idea that, conversely, \textit{all} 3-2-1 functorial TQFTs for a given target \cite{Atiyah:1988,lurie2008classification} are determined by such ribbon functors is known as the \textit{(1-)tangle hypothesis} \cite{Baez:1995xq}. These ideas have also been applied very successfully to quantize (2+1)-dimensional gravity with cosmological constant \cite{Mizoguchi:1991hk,Bonzom:2014bua,Pranzetti:2014xva,Livine:2016vhl,Majid:2008iz}, which are known classically to be equivalent to a certain type of Chern-Simons theory \cite{Witten:1988hc,Freidel:2004nb,Meusburger:2003ta,Dupuis:2020ndx}.

\medskip

Direct computations of the 3-manifold quantum invariants involved in the above story, on the other hand, is a notoriously difficult problem itself. A way to make this problem less challenging came in the form of \textit{combinatorial state sum models} by taking a piecewise linear (PL) approximation of the underlying manifold\footnote{This is due to a classic theorems of Whitehead, which states that smooth manifolds have a unique PL structure given by its triangulation.}. This procedure computes the TQFT partition function by breaking it into \textit{local} pieces of "admissible" algebraic/categorical data \cite{Cui_2017}, which are invariant under the so-called combinatorial Pachner moves \cite{Pachner1991Pachner,Dijkgraaf1991}. This idea has been very successfully applied to not only compute the quantum scattering amplitudes in 3d Regge gravity \cite{GirelliOecklPerez:2001PachnerSpinFoam,Crane:2003ep,FREIDEL2000237}, but also to characterize topological phases in condensed matter theory \cite{Levin_2005,Wen2016,Lan2013,KitaevKong_2012}. 

It is known that, in the case of the Turaev-Viro TQFT\footnote{This is related to the Reshetikhin-Turaev TQFT through the Drinfel'd centre of its input category: $Z_{RT}^\cC = Z_{TV}^{Z_1\cC}$.} with the quantum group $U_q\g$ and its representation category as algebraic input, these combinatorial local pieces in the corresponding Barrett-Westbury state sum model \cite{Barrett1993} are given by the \textit{quantum $6j$-symbols} \cite{Turaev:1992hq}. The relationship of these $6j$-symbols to the 3d chain mail invariants in skein theory has also been studied in \cite{ROBERTS}. On the other hand, these $6j$-symbols can also be obtained as scattering amplitudes in a {\it discrete} version of Chern-Simons theory --- that is, we have a way to compute the combinatorial 3-simplex amplitudes directly \textit{without} prior knowledge of skein theory and surgery theory. This is thanks to the foundational works of Alekseev-Grosse-Schomerus \cite{Alekseev:1994pa,Alekseev:1994au}, where the full combinatorial Hamiltonian quantization of discrete Chern-Simons holonomies was pinned down. These works serve as the inspiration of this paper.

\subsection{Motivation}
The success of the above relationship between physics, categorical algebra and topology begs the question of how these correspondences would look like in higher dimensions. Based on the categorical ladder proposal of Baez-Dolan \cite{Baez:1995ph}, as well as the cobordism hypothesis proven in \cite{lurie2008classification}, it is expected that higher dimensional physics and geometry is described by a certain "higher-dimensional algebra". Each of the corners of the above triangle has seen such a "categorification" in recent years,
\begin{enumerate}
    \item categories $\rightarrow$ weak $n$-categories \cite{Lurie:2009,Kong:2014qka,Johnson-Freyd:2020usu,Baez:1997},
    \item knot polynomials $\rightarrow$ knot homology \cite{Khovanov:2000,Elias2010ADT,webster2013knot,Rouquier2005CategorificationOS},
    \item 3d Chern-Simons theory $\rightarrow$ \textbf{4d 2-Chern-Simons theory} \cite{Zucchini:2021bnn,Soncini:2014,Chen:2022hct},
\end{enumerate}
and it has been postulated that a "categorical quantum group" --- with the structure of a Hopf monoidal category \cite{Crane:1994ty,Baez:1995xq,Pfeiffer2007,Green:2023qqr,Chen:2023tjf,BAEZ1996196,neuchl1997representation} --- governs their correspondences. However, how these ideas are related have not yet been made clear: the key issue seems to be that each of these corners have their own different notions of "higher-dimensional algebra": respectively, they are (1) the 2-vector spaces of Kapranov-Voevodsky \cite{Kapranov:1994}, (2) the Soergel bimodules \cite{liu2024braided}, and finally (3) the 2-vector spaces of Baez-Crans \cite{Baez:2003fs}. 

Though, it is known from homotopy theory that any algebraic description of framed 2-tangles must have some "higher categorical" flavour \cite{getzler1998higher,BAEZ2003705}. Indeed, the discovery of the crossed-complex model for a higher categorical version of groups, called \textit{2-groups/categorical groups}, dates back to the 40's by Whitehead \cite{Whitehead:1941}, in the context of homotopy 2-types \cite{Brown,Ang:2018rls}.
\begin{definition}
    A \textbf{(Lie) 2-group} $\mathbb{G}=(G,\mathsf{H},t,\rhd)$ is a (Lie) group crossed-module \cite{Baez:2003fs,Chen:2012gz,chen:2022}, consisting of a pair of (Lie) groups $\mathsf{H},G$, a (Lie) group homomorphism $t:\mathsf{H}\rightarrow G$ and a (smooth) action $\rhd :G\rightarrow\operatorname{Aut}\mathsf{H}$ satisfying the following algebraic conditions
\begin{equation*}
    t(x\rhd y) = xt(y)x^{-1},\qquad t(y)\rhd y' = yy'y^{-1},\qquad \forall~x\in G,~y\in\mathsf{H}.
\end{equation*}
Several equivalent formulations of Lie 2-groups can be given; more will be explained in \textit{Remark \ref{smooth2grp}}.
\end{definition}
\noindent Applications of 2-groups, both Lie and finite, to physics have also been recently studied extensively \cite{Cordova:2018cvg,Benini_2019,Chen:2022hct,Delcamp:2023kew,Bartsch:2022mpm,Bartsch:2023wvv,Chen2z:2023,huang2023tannaka,Huang:2024}.

Based on this current state of affairs, it is then natural to study the geometry of principal Lie 2-group $\mathbb{G}$-bundles (with connection) and its associated higher-gauge theory \cite{Baez:2004in,Wockel2008Principal2A,Nikolaus2011FOUREV,Schommer_Pries_2011}. Over a 4-dimensional manifold $X$, the topological field theory arising from such a categorical gauge principle is known as the \textbf{4d 2-Chern-Simons theory},
\begin{equation*}
    S_{2CS}[A,B]= 2\pi k\int_X \langle B,F(A)-\frac{1}{2}\mu_1B\rangle,
\end{equation*}
whose fundamental fields are given by a polyform of degree-1, $A\in\Omega^1(X) \otimes\g, B\in\Omega^2(X)\otimes\h$, valued in the Lie 2-algebra $\operatorname{Lie}\mathbb{G}=\G=\h\xrightarrow{\mu_1} \g$ corresponding to the underlying (complex, connected, simply-connected) Lie 2-group $\mathbb{G} = \mathsf{H}\xrightarrow{t}G$. Here, $\langle-,-\rangle:\G^{\otimes 2}\rightarrow\bbC[1]$ is a degree-1 non-degenerate pairing form. See \cite{Martins:2010ry,Chen:2022hct,Zucchini:2021bnn,Soncini:2014,Song_2023,Mikovic:2016xmo,Radenkovic:2020weu} for more details on 2-Chern-Simons theory. 

It is worth mentioning that they are part of the bestiary of higher-dimensional \textit{homotopy Marer-Cartan theories} \cite{Jurco:2018sby,Chen:2024axr,Schenkel:2024dcd}, which are \textit{higher derived} generalizations of Chern-Simons theory. See \S \ref{weak2gauthy} for a brief overview of the \textit{weak} version of $S_{2CS}$. Other 4d higher-gauge theories (which may {not} be topological) have also appeared in various guises throughout theoretical physics \cite{Baez:2002highergauge,Sati:2009ic,walker2012,Bullivant:2016clk,Delcamp:2017pcw,Dubinkin:2020kxo,Song:2021}.

\medskip

The current series of papers is dedicated towards answering the following: 
\begin{quote}
\centering
{\em {\large How much does the 4d 2-Chern-Simons theory know about the geometry of 2-tangles and the topology of 4-manifolds?}} 
\end{quote}
\noindent The motivation for starting the quantization of $S_{2CS}$ from the combinatorial perspective of the \textit{discrete} holonomies is that it preps us for an explicit computation of its 4-simplex scattering amplitudes and invariants on a lattice, without having first knowing how to do handlebody surgery theory on 4-manifolds. This would provide an explicit state sum model for a 4d topological 2-gauge theory, which can be understood as a Lie 2-group generalization of the Yetter-Dijkgraaf-Witten TQFT encompassed by the seminal work of \cite{Douglas:2018}. 

While the idea of using higher-dimensional gauge groups \cite{Baez:2004} and homotopy 2-types \cite{Martins:2006hx} to produce 4d TQFTs is not new (it dates back to the 90's \cite{Yetter:1993dh}), many of the explicit examples constructed in the literature so far had only used \textit{finite} 2-groups (see eg. \cite{Kapustin:2013uxa,,Mikovic:2011si,Bullivant:2016clk,Bullivant:2017sjz,Bochniak_2021,Chen2z:2023}), so the resulting TQFTs are always of Yetter-Dijkgraaf-Witten type. These are known to be too simple to produce any exotic 4d invariants \cite{Reutter:2020bav}.

\subsubsection*{A conjecture on the 4d Crane-Yetter TQFT.} An additional motivation for this work is the following. It was argued in \cite{Baez:1995ph} that 4d BF-BB theory with Lie group $G=SU(2)$, which is a special case of 2-Chern-Simons theory on the \textit{inner automorphism 2-group} $\operatorname{Inn}G=G\xrightarrow{\id}G$ (see eg. \cite{Kim:2019owc}), quantizes to a(n oriented) theory which is equivalent to the Crane-Yetter-Broda TQFT \cite{Crane:1994ji}. The formal argument given there, however, was flawed,\footnote{To be more precise, the argument in \cite{Baez:1995ph} was based on a formal path integration over configurations of \textit{only} the 2-form gauge field $B$. This cannot be done in 2-gauge theory, however, since the tuple $(A,B)$ forms a $G$-multiplet, and hence the path integral measure $D[A,B]\sim D[A]D[B]$ cannot be split up without fixing some gauge.} hence a detailed study of the higher-representation theory associated to the 2-Chern-Simons observables is necessary in order to shed light on Baez's claim.

To be more precise, recall that Crane-Yetter TQFT is based on the input pre-modular 2-category $\operatorname{Mod}(\operatorname{Rep}U_q\frak{sl}_2)$; see also \S 3.4.1 in \cite{Douglas:2018}. Secondly, since it is known that the 4d Crane-Yetter-Broda TQFT admits a state sum construction in terms of the so-called "15$j$-symbols" \cite{Crane:1993if}, a direct verification of Baez's conjecture can also be obtained by computing the lattice scattering amplitudes in 4d BF-BB theory. 
\begin{conjecture}\label{baezconjecture}
    \textbf{(Implicitly made in \cite{Baez:1995ph}).}
    \begin{itemize}
    \item \textbf{Algebraic version}:     There is a (ribbon/pre-modular) equivalence of 2-categories
    \begin{equation*}
        \operatorname{Mod}(\operatorname{Rep}U_q\frak{sl}_2) \simeq \operatorname{2Rep}(\mathbb{U}_{q}\operatorname{inn}(\frak{sl}_2)),
    \end{equation*}
    where $\mathbb{U}_{q}\operatorname{inn}(\frak{sl}_2)$ is the Hopf category corresponding to the quantization of $\operatorname{Inn}SU(2)$.\footnote{We will give in \S \ref{2gthopf} a definition of the "categorical quantum enveloping algebra $\mathbb{U}_q\G$" associated to a Lie 2-group $\mathbb{G}$ in the context of the current framework.}
    \item \textbf{Piecewise-linear version}: Given a closed 4-simplex $T^4$, the lattice 2-Chern-Simons scattering amplitudes on $T^4$ coincides with the 15$j$ symbols of Crane-Yetter.
\end{itemize}
\end{conjecture}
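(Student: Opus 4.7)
The strategy splits along the two versions of the conjecture, both of which are designed to be accessible from the combinatorial Hopf-category framework developed in this paper and its sequels. The plan is first to construct the Hopf category $\mathbb{U}_q\operatorname{inn}(\mathfrak{sl}_2)$ explicitly as the lattice gauge 2-algebra of 2-Chern-Simons theory on an elementary 2-cell with structure 2-group $\operatorname{Inn}SU(2) = SU(2)\xrightarrow{\id}SU(2)$, and to read off its cobraiding from the higher $R$-matrix established here. Once $\mathbb{U}_q\operatorname{inn}(\mathfrak{sl}_2)$ is identified, the algebraic version reduces to an equivalence between its 2-category of 2-representations and $\operatorname{Mod}(\operatorname{Rep}U_q\mathfrak{sl}_2)$, while the piecewise-linear version reduces to a direct evaluation of the 2-Chern-Simons state sum on a standardly triangulated 4-simplex.

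For the algebraic part, I would follow the categorical ladder. Since $\operatorname{Inn}G$ is skeletal with both strata equal to $G$, one expects the combinatorial quantization to yield two copies of $U_qG$ glued by the $\rhd$-action, inducing on the upper stratum the natural module-category structure over $\operatorname{Rep}U_q\mathfrak{sl}_2$. I would make this precise by (i) identifying the object-level algebra of $\mathbb{U}_q\operatorname{inn}(\mathfrak{sl}_2)$ with $U_q\mathfrak{sl}_2$ via the restriction of the Hopf-category construction to the 1-skeleton of $\Gamma^2$, (ii) showing that the measurable hom-fields carry compatible $U_q\mathfrak{sl}_2$-bimodule structures coming from the quantum 2-gauge symmetry, and (iii) verifying that the higher $R$-matrix descends under the 2-representation 2-functor to the modular braiding on $\operatorname{Mod}(\operatorname{Rep}U_q\mathfrak{sl}_2)$ as described by M\"uger.

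For the piecewise-linear part, I would triangulate $T^4$ so that its 2-skeleton $\Gamma^2$ carries ten triangular plaquettes meeting at five vertices, apply the Hopf-category state sum to the corresponding simplicial 2-groupoid, and contract the resulting measurable fields along the five tetrahedral boundaries. The key computation is to show that, after inserting the categorical Plancherel decomposition on the plaquettes and $U_q\mathfrak{sl}_2$-intertwiners along the edges, the resulting amplitude reorganizes into the Crane-Yetter-Broda $15j$-symbol, that is, a sum over ten spin labels of a product of five quantum Racah $6j$-symbols. This step relies on the comonoidal structure of the Hopf category combined with a higher Peter-Weyl decomposition for measurable fields on $\operatorname{Inn}SU(2)$.

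The main obstacle in both versions is the still-incomplete understanding of cobraided Hopf categories in the \emph{Lie} rather than finite setting. Concretely, one must make precise in what sense $\operatorname{2Rep}(\mathbb{U}_q\operatorname{inn}(\mathfrak{sl}_2))$ is a premodular 2-category so that the algebraic equivalence is well-posed, and one must control the analytic convergence of the 4-simplex state sum, since $SU(2)$ is compact but infinite. I expect both difficulties to be addressed by developing a categorified Tannaka-Krein reconstruction adapted to Crane-Yetter measurable fields together with a unitary $\ast$-structure on the Hopf category --- the analogue for $\mathbb{U}_q\operatorname{inn}(\mathfrak{sl}_2)$ of what Alekseev-Grosse-Schomerus did for $U_q\mathfrak{sl}_2$. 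These ingredients are planned for subsequent papers in this series and together should close the loop between the combinatorial data presented here and the skein-theoretic side of the Crane-Yetter-Broda TQFT.
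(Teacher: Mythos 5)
The statement you are attempting is presented in the paper as a \emph{conjecture} (attributed to Baez), not a theorem: the paper offers no proof of it, and explicitly positions the entire framework it develops — the Hopf category of 2-graph states, the lattice 2-algebra, the categorical quantum enveloping algebra $\mathbb{U}_q\G$ — as groundwork that might \emph{eventually} shed light on the conjecture, with the 2-representation theory deferred to a companion paper and the 4-simplex scattering amplitudes deferred to the next paper in the series. So there is no proof in the paper to compare against, and your proposal should be judged on whether it actually closes the conjecture. It does not.

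Concretely, your text is a research program, not a proof, and the program's load-bearing steps are exactly the ones you defer. In the algebraic version, step (i) — identifying the object-level algebra of $\mathbb{U}_q\operatorname{inn}(\mathfrak{sl}_2)$ with $U_q\mathfrak{sl}_2$ — is asserted rather than argued, and it sits precisely on top of the unresolved issue the paper flags: the mismatch between the Baez--Crans-style 2-vector spaces underlying the Hopf 2-algebra/decategorification picture (``hypothesis (H)'') and the Kapranov--Voevodsky-style 2-category $\operatorname{Mod}(\operatorname{Rep}U_q\mathfrak{sl}_2)$ on the other side of the claimed equivalence. Without a precise definition of $\operatorname{2Rep}(\mathbb{U}_q\operatorname{inn}(\mathfrak{sl}_2))$ as a premodular 2-category — which you yourself list as an ``obstacle'' to be addressed by a categorified Tannaka--Krein reconstruction ``planned for subsequent papers'' — the equivalence is not even a well-posed statement, let alone a proven one. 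Likewise, in the piecewise-linear version, the ``key computation'' reorganizing the 4-simplex amplitude into a sum over ten spins of five quantum $6j$-symbols is named but never performed; it requires the dynamical delta-operator constraints, a higher Peter--Weyl decomposition for measurable fields on $\operatorname{Inn}SU(2)$, and control of convergence of the direct integrals, none of which exist yet in the framework. A proof proposal that ends by saying the missing ingredients ``should close the loop'' in future work is an honest roadmap, and a reasonable one, but it leaves the conjecture exactly as open as the paper does.
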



\subsection{Overview and results}
In \S \ref{1cs}, we will give a brief review of the quantization of Chern-Simons on the lattice by adapting the formalism of \cite{Alekseev:1994pa} to the language groupoids and functors. This "coherent" setting serves as the template for the framework that we shall develop in \S \ref{2cs}, in which the \textit{discrete} degrees-of-freedom of $S_{2CS}$, living on the edges and faces of a lattice in a codimension-1 Cauchy slice $\Sigma$, is described. 

Then in \S \ref{quantumcategory}, based on the semiclassical Lie 2-bialgebra \cite{Bai_2013,Baez:2003fs}/Poisson-Lie 2-group \cite{Chen:2012gz,Chen:2013} symmetries of $S_{2CS}$ \cite{Chen:2022hct}, we deduce the deformation quantization of the underlying structure Lie 2-group $\mathbb{G}$ and develop the configuration space of discrete 2-Chern-Simons theory --- as well as its categorical quantum gauge symmetries --- by taking inspiration from \cite{Alekseev:1994pa}. 

Since we are dealing with \textit{infinite} Lie 2-groups, these Hopf categories are in some sense infinite-dimensional. As such, the usual theory of finite-dimensional 2-Hilbert spaces $\mathsf{2Hilb}$ \cite{Baez1996HigherDimensionalAI} is not enough. Here, we will develop our framework using an infinite-dimensional version of $\mathsf{2Hilb}$, given by the bicategory $\mathsf{Meas}$ of Crane-Yetter  \textit{measureable categories} \cite{Crane:2003ep,Yetter2003MeasurableC,Baez:2012}. This work is therefore a marriage of both higher-categorical algebra and functional analysis. In the companion paper \cite{Chen:2025?}, the author shows that the 2-category $\operatorname{2Rep}(\tilde{\cC};\tilde R)$ of linear finite semisimple $\tilde{\cC}$-module categories inherits a {\it rigid tensor} structure from this *-operation.

In \S \ref{lattice2alg}, we construct the \textbf{lattice 2-algebra} $\mathscr{B}^\Gamma$ as a categorical semidirect product \cite{Fuller:2015}. It allowed us to extract the 2-holonomy observables on the lattice. The geometry and orientation of the  2-graph $\Gamma^2$ \cite{stehouwer2023dagger,ferrer2024daggerncategories} is then shown to induce a certain *-operation on the lattice 2-algebra $\mathscr{B}^\Gamma$.  

\subsubsection*{Results}
We will prove that the fundamental degree-of-freedom in the quantum lattice 2-gauge theory has equipped a certain Hopf categorical structure, described most naturally in the framework of \textit{internal categories} \cite{douglas2016internalbicategories}.
\begin{theorem}
    Let $\Gamma^2$ denote the 2-groupoid of 2-graphs associated to a lattice $\Gamma\subset \Sigma$ embedded in a 3-dimensional Cauchy slice $\Sigma$ of $X$. 
    \begin{enumerate}
        \item The 2-graph operators $\cC$ on $\Gamma$ has the structure of a {Hopf \textit{co}category} internal to $\mathsf{Meas}_q$.
        \item Under certain regularity conditions, the quantum 2-gauge transformations $\tilde{\cC}$ on $\cC$ has the structure of a {Hopf category} internal to $\mathsf{Meas}_q$.
    \end{enumerate}
    Moreover, both are equipped with a "coraiding", which is a higher categorical analogue of a quasitriangulairty structure (see \textbf{Definition \ref{cobraidingdef}}).
\end{theorem}
\noindent Here, "$\mathsf{Meas}_q$" is a certain "non-commutative" version of the Crane-Yetter measureable categories $\mathsf{Meas}$, described in more detail in \S \ref{hopfopalg}. It can be thought of as the "sheafy" counterpart of the bicommutant categories of Henrique-Pennys \cite{Henriques2017-gm}.

Now it is known that the 2-Chern-Simons action $S_{2CS}$ admits symmetries described by a Lie 2-bialgebra $(\G;\delta)$ \cite{Bai_2013}, and that connected, simply-connected Poisson-Lie 2-groups are in bijection with Lie 2-bialgebras \cite{Chen:2012gz}. By extracting the Hopf structures explicitly, we are then able to prove the following semiclassical limit.
\begin{theorem}
    Assuming the conditions in \textbf{Definition \ref{hypH}}, the category $\cC=\mathfrak{C}_q(\mathbb{G})$ on a PL 2-disc $\Gamma=D^2$ admits a Poisson-Lie 2-group $(C(\mathbb{G});\{-,-\})$ as a semiclassical limit.
\end{theorem}
\noindent We will state this result more clearly and prove it in \S \ref{semiclassical}. The author finds it important to emphasize here (and also in \textit{Remark \ref{hypHconstruction}}) that the conditions in \textbf{Definition \ref{hypH}} need not be invoked in constructions of 4d TQFTs from \textit{finite} 2-groups or Hopf categories.

\medskip

Briefly, the  technical assumptions in \textbf{Definition \ref{hypH}} can be understood (see the end of \S \ref{coprod} as well as \S \ref{hopf2algproblems}) as the existence of a certain \textit{decategorification} map arising from higher derived quantization schemes \cite{Freed:1994ad,Zucchini:2021bnn,Jurco:2018sby,Gaiotto:2024gii,Calaque:2021sgp}. It posits a fundamental relationship between Baez-Crans 2-vector spaces $\mathsf{2Vect}^{BC}=\operatorname{Cat}_\mathsf{Vect}$ \cite{Baez:2003fs} and (an infinite-dimensional analogue of) the Kapranov-Voevodsky 2-vector spaces $\mathsf{2Vect}^{KV}$ \cite{Kapranov:1994,Baez1996HigherDimensionalAI}). The former model the $L_\infty$-algebra structure in higher-gauge theory, while the latter model topological orders and topological defects.

\subsubsection*{Acknowledgement} 
The author would also like to thank Yilong Wang, Jinsong Wu, Hao Zheng, Zhi-Hao Zhang, and Florian Girelli for enlightening discussions throughout the completion of this work.

\section{Graph operators pr{\'e}cis}\label{1cs}
Let us first briefly recall the discrete quantization of Chern-Simons/BF theory. Let $X$ be a framed smooth 3-manifold and let $G$ be a compact Lie group, assumed to be simple. The Chern-Simons partition function is of course written as
\begin{equation*}
    Z(X) = \int D[A] e^{ i2\pi kS_{CS}[A]}=\int D[A] e^{i2\pi k\int_X \langle A,dA+\frac{1}{3}[A,A]\rangle},
\end{equation*}
where $A$ is a $G$-connection on $X$ and $\langle-,-\rangle$ is the Killing form on $\g=\operatorname{Lie}G$. One way to make sense of this partition function is to perform a discretization procedure: we triangulate $X$ and truncate/localize the connection data onto the oriented edges of the dual cells, by defining the holonomy degrees-of-freedom
\begin{equation*}
    h_e = P\exp \int_e A,\qquad e\in T^1_X.
\end{equation*}
Taking an arbitrary 2d Cauchy surface $\Sigma\subset X$ equipped with an induced triangulation $T_\Sigma$, we consider the graph $\Gamma$ Poincar{\'e} dual to the triangulation $T_\Sigma$. An admissible $G$-decorated graph $G^\Gamma$ is then defined by a functor $\Gamma^1\rightarrow G$ such that $\prod_{e\in  c}g_e=1$ on a closed graph cycle $c$ (ie. satisfying the flatness condition). 

We shall in the following consider $\Gamma$ to be planar and non-self-intersecting. Following the philosophy of \cite{Delcamp:2016yix}, the physical Hilbert space on $\Gamma$ is given by the linear span of $\mathbb{C}$-valued functions $\psi: G^{\Gamma^1} \rightarrow \mathbb{C}$ equipped with a certain well-defined (ie. convergent) inner product,
\begin{equation*}
    \mathcal{H} = L^2(G^{\Gamma^1})/\sim,
\end{equation*}
modulo gauge transformations $g_{(01)} \mapsto a_1g_{(01)}a_0^{-1}$, where $a: \Gamma^0\rightarrow G$ are $G$-valued gauge parameters localized on the vertices of $\Gamma$, denoted by $G^{\Gamma^0}$. Treating $L^2(G^{\Gamma^1})$ as a left-regular representation of $G^V$ such that
\begin{equation*}
    a\rhd \psi(\{h_e\}_e) = \psi(\{a_{s(e)}h_ea_{t(e)}^{-1}),\qquad s,t:\Gamma^1\rightarrow \Gamma^0, 
\end{equation*}
this gauge invariance condition can be enforced by a gauge-averaging procedure: 
\begin{equation*}
    \Psi = \int_G \big[\prod_{\Gamma^0} da\big] a\rhd \psi.
\end{equation*}
The quotient modulo gauge transformations, $L^2(G^{\Gamma^1})//G^V$, serves as the basis for the spin network construction \cite{Baez:1994hx}.

\subsection{Coproduct and the antipode} 
We begin with the classical treatment. Given a fixed directed graph $\Gamma\subset \Sigma$, there is a group product on the decorated graphs
\begin{equation*}
    (\{h_e\}_e,\{h'_e\}_e)\mapsto \{h_eh'_e = (hh')_e\}_e
\end{equation*}
which fuses the decorations on each edge $e\in\Gamma^1$. Pulling back yields a coproduct on the function algebra $C(G^\Gamma)$ --- however, we wish to make this coproduct sensitive to the composition of the holonomies, as well as the geometry of $\Gamma^1$. 

We do this through the following construction. Let $\psi_e$ denote the \textit{local graph operators} $\psi_e(\{h_{e'}\}_{e'}) = h_e$, which outputs the (matrix elements of the) $G$-holonomy at $e$. Suppose the edges $e_1,e_2\in \Gamma$ are composable, such that they can be attached into another edge $e=e_1\cup e_2\in\Gamma$, then we can write in Sweedler notation
\begin{equation}
    (\Delta\psi_e) = \sum\delta_{t(e_1),s(e_2)}(\psi_{(1)})_{e_1}\otimes (\psi_{(2)})_{e_2},\qquad e=e_1\cup e_2,\label{partialcoprod}
\end{equation}
where $s(e),t(e)$ denote the source and target vertices of an edge $e$, and $\psi_{(1)},\psi_{(2)}$ are functions on $G$ for which
\begin{equation*}
    \sum \psi_{(1)}(g)\psi_{(2)}(h) = \psi(gh),\qquad g,h\in G,.
\end{equation*}
By the groupoid structure of the graph complex $\Gamma = \Gamma^1\rightrightarrows \Gamma^0$, the formula \eqref{partialcoprod} extends to a map\footnote{Note that for each edge $e=v\to v'\in \Gamma^1$, it can be viewed as a "composite" with respect to the identity, constant edge, $e=e\cup 1_{v}=1_{v'}\cup e$. For more details, see \S \ref{2graphdef}.}
\begin{equation*}
\Delta: C(G^{\Gamma}) \to C(G^\Gamma)\,\bar\otimes\, C(G^\Gamma)\,,
\end{equation*}
where $\bar\otimes$ denotes the completed topological tensor product \cite{Woronowicz1988}.

\begin{rmk}\label{truecoproduct}
    Consider the following two cases: (i) $\Gamma$ denotes a graph complex $\Gamma$ composed of a single edge $e$ starting and ending on a single vertex. In this case, $C(G^\Gamma)=C(G)$ is the usual $C^*$-algebra. (ii) we take a colimit over refinements of the graph $\Gamma$, such that for every edge $e\in\Gamma^1$ there exists a graph $\Gamma'$ and an embedding $\Gamma\hookrightarrow \Gamma'$ such that $e = e'_1\cup e_2'$ for some $e_1',e_2'\in\Gamma';$ in this case $\lim_\Gamma C(G^\Gamma)$ approaches the character variety of $G$; see also \textit{Remark \ref{fockrosly}}.
\end{rmk}

A counit for this coproduct can be seen to be clearly given by the trivial decorated graph $\epsilon(\psi) = \psi(\{1_e\}_e)$. This geometric interpretation for $\Delta$ also endows $C(G^\Gamma)$ with an antipode $S:C(G^\Gamma)\rightarrow C(G^\Gamma)$, given in the classical case by
\begin{equation*}
    (S\psi)(\{h_e\}_e) = \psi(\{h_e^{-1}\}_e) = \psi(\{h_{\bar e}\}_e),
\end{equation*}
which can be interpreted as an orientation reversal operation, such that the usual coalgebra axioms
\begin{equation*}
     (S\otimes 1)\circ\Delta =\epsilon = (1\otimes S)\circ\Delta
\end{equation*}
are satisfied. This gives the function algebra $C(G^\Gamma)$ the structure of a Hopf algebra.


We will then introduce a quantum deformation of the above structures from the data of the Chern-Simons action. As is well-known, these data consist of a Lie algebra cocycle $\psi$ and the associated classical $r$-matrix $r\in\g\otimes\g$ on the Lie algebra $\g=\operatorname{Lie}G$ \cite{WITTEN1990285,Dupuis:2020ndx,Meusburger:2021cxe}.\footnote{The skew-symmetric part of $r$ arises from the $A^3$ interaction term, and the symmetric part arises from the canonical symplectic form $\omega(A,A) = \int_\Sigma\langle\delta A,\delta\partial_tA\rangle$ on the moduli space of flat $G$-connections.}  The skew-symmetric part of $r$ equips $G$ with a Poisson bracket \cite{Semenov1992,Meusburger:2021cxe}, which specifies a quantum deformation of the product on $C(G)$ along $q\sim e^{i\hbar}$ where $\hbar =\frac{2\pi}{k}$. This gives $C_q(G)$ a quasitriangular Hopf algebra structure, called the \textit{quantum coordinate ring} \cite{Woronowicz1988,Majid:1996kd,Grabowski1995}. 

Here, we wish to introduce this quantum deformation to the configuration space $C(G^\Gamma)$, hence we need to extend the semiclassical Poisson bracket onto the graph $\Gamma$. We will formalize this directly from the geometry and intersections of edges in $\Sigma$.

\subsection{Quantum deformation on the lattice}
Written as above, the coproduct is the cocommutative one in $C(G)$ up to the orientation of the glued edges $e,e'$. By combining this coproduct with the Poisson bracket extracted from the Chern-Simons action, we arrive at the \textit{combinatorial Poisson bracket} on $C_q(G^\Gamma)$.

Explicitly on local graph operators $\psi_e$, this Poisson bracket takes the same form as that given in \cite{Alekseev:1994pa},
    \begin{equation}
        \{\psi_e,\psi_{e'}\}_\text{dis} =\frac{2\pi}{k}(\delta_{t(e),s(e')} r\psi_e \otimes\psi_{e'} - \delta_{s(e),t(e')}\psi_e \otimes \psi_{e'} r^T) \equiv \mu\big([r,\Delta\psi_{e\cup e'}]_c\big),\label{heis-bracket}
    \end{equation}
where $e\cup e'$ is a composite edge. Here, $\mu$ is the product on $C(G^\Gamma)$ and $[-,-]_c$ is the commutator. 

The full quantum $R$-matrix $R\in C_q(G^\Gamma)\hat\otimes C_q(G^\Gamma)$, for which $R\sim 1 + i\hbar r+\dots$ admits an expansion as a power series in $\hbar$, then gives rise to the $q$-deformed product $\star$, whose $\star$-commutator can be expressed in the form
\begin{equation}
    \psi_e\star\psi_{e'}-\psi_e\star\psi_{e'} = \mu\big([R,\Delta\psi_{e\cup e'}]_c\big).\label{heis-product}
\end{equation}
In the context of compact quantum groups, these expressions \eqref{heis-bracket}, \eqref{heis-product} for the Poisson bracket and the quantum product has also appeared previously in the literature \cite{Woronowicz1988,Majid:1996kd,Dupuis:2020ndx,Meusburger:2021cxe}.

\medskip

The coproduct compatible with $\star$, which we shall also denote by $\Delta$, then satisfies the following intertwining relation \cite{Grabowski1995}
\begin{equation*}
    R \Delta\psi  = (\sigma\circ\Delta)\psi R,\qquad \psi\in C_q(G^\Gamma),
\end{equation*}
where $\sigma: C_q(G^\Gamma)\,\bar\otimes\, C_q(G^\Gamma)\rightarrow C_q(G^\Gamma)\,\bar\otimes \,C_q(G^\Gamma)$ is a swap of (topological) tensor factors. For a quantum double (cf. \cite{Delcamp:2016yix}), this leads to the definition of Kitaev ribbon operators, in which the graph $\Gamma$ is "thickened" in order to keep track of the actions of $R,R^T$.

\begin{rmk}\label{fockrosly}
    If one takes a colimit over refinements of the graphs, then the discrete holonomies $G^\Gamma$ modulo gauge transformations $G^{\Gamma^0}$ "approaches" the character variety $\operatorname{Ch}(G)=\operatorname{Hom}(\pi_1\Sigma,G)//G$, and the Poisson bracket $\{-,-\}_\text{dis}$ approaches the canonical Fock-Rosly one \cite{Fock:1998nu} on functions of $\operatorname{Ch}(G)$ arising from Chern-Simons theory. These can be made more precise, but we are not concerned with this issue at the time.
\end{rmk}


When the function algebra $C(G)$ inherits a canonical Poisson structure from the symplectic manifold $T^*G\cong \g^*\rtimes G$, then our above prescription underlies the combinatorial quantization of 3d BF theory and the spin-networks construction \cite{Baez:1994hx}. The case for Chern-Simons theory, on the other hand, directly makes use of the Hopf $C^*$-algebra $H^\Gamma$ generated by the graph holonomy operators, where the quasitriangularity structure arises from the quantum $R$-matrix. 

This $C^*$-algebra $(H^\Gamma,\ast,\Delta)$, together with its gauge transformations $G^{\Gamma^0}$, is the main player in \cite{Alekseev:1994pa}; with both taken together, it is called the "lattice algebra $\mathcal{B}^\Gamma$ of Chern-Simons theory". Hence their categorical analogues will be the star of this paper. The space of Wilson line observables extracted out of $\mathcal{B}^\Gamma$ is the main ingredient in the computation of 3-simplex scattering amplitudes in lattice Chern-Simons theory \cite{Alekseev:1994au}; this will play a more prominent role in a future work.

\subsection{A coherent formulation of graph operators}\label{coherence}
In order to lift the above formulation to 2-groups and the categorical setting, we require a "coherent" version of the story. Toward this, we will treat the (1-truncated) graph complex $\Gamma = \Gamma^1\rightrightarrows\Gamma^0$ as a groupoid, equipped with structure maps $s,t: (01)\mapsto \{0,1\}$ sending an edge to its endpoints. A decorated graph $G^\Gamma$ is then equivalent to a functor $F: \Gamma \rightarrow BG$ between groupoids, where $BG = G\rightrightarrows \ast$ is the pointed Lie groupoid with 1-morphisms labelled by $G$. 

Indeed, a functor $F$ specifies an assignment of the trivial point $\ast$ to a point $0\in\Gamma^0$, and a group element $h_e\in G$ to an edge $e\in\Gamma^1$. By thinking of the graph complex $\Gamma$ as the 1-truncation of a cell complex on $\Sigma$, we then see that all 2-graphs are assigned the identity. This enforces the flatness condition $h_{(01)}h_{(12)} = h_{(02)}$ for any ordered 2-simplex $(012)$ (or, more generally, around any closed face).

Thence, a natural transformation $\eta:F\Rightarrow F'$ assigns a group element $\eta_0 = a_0\in G$ to a vertex of the graph, such that the naturality condition implies $$h_{(01)}' = F'(01) = a_0^{-1}F(01)a_1=a_0^{-1}h_{(01)}a_1,$$ which is precisely a gauge transformation. In other words, the functor category $\operatorname{Fun}_\mathsf{Grpd}(\Gamma,BG)$ has objects decorated graphs $G^\Gamma$ and 1-morphisms the gauge transformations. This functor category itself forms a groupoid, since all gauge transformations are invertible. We shall without loss of generality denote by this functor category $\operatorname{Fun}_\mathsf{Grpd}(\Gamma,BG)$ by $G^\Gamma$. 

graph operators are therefore given by another functorial construction, $G^\Gamma\rightarrow \bbC$, where we consider $\bbC$ as a trivial category with only identity endomorphisms and no nonzero non-endomorphisms (ie. the discrete category on $\bbC$). The functor category $\operatorname{Fun}(G^\Gamma,\bbC)$ is 0-truncated; we think of the collection $\cA$ of the objects in $\operatorname{Fun}(G^\Gamma,\bbC)$ as the $q$-deformed $C^*$-algebra $C_q(G^\Gamma)$ described above, and its morphisms as the quantum gauge transformations. In this way, we see that $\cA=C_q(G^\Gamma)$ admits an action by the group
\begin{equation*}
    G^{\Gamma^0}= \coprod_{F,F'}\operatorname{Hom}_{G^\Gamma}(F,F'),
\end{equation*}
or more precisely the Hopf algebra generated by $G^{\Gamma^0}$, formed by the hom-sets of $G^\Gamma$ via pre-composition. Invariant states/observables can therefore be defined as the \textit{equivariantization} $\operatorname{Fun}(G^\Gamma,\bbC)^{G^{\Gamma^0}}$ --- namely taking homotopy fixed points then truncating. The induced essential surjection $\cA\rightarrow \cA^{\Gamma^0}$ is given precisely by the Haar integration/group averaging over $G^{\Gamma^0}$.

\subsection{Definition of a 2-graph}\label{2graphdef}
Motivated by the above groupoid description of a graph, we now introduce a way in which a graph complex --- together with the data of polygonal faces --- can be seen as a 2-groupoid. Given a graph complex $\Gamma$, corresponding to an underlying graph dual to the triangulation of a 3-manifold $\Sigma$, for instance, we perform a 2-truncation instead of a 1-truncation. 

The resulting discrete 2-gorupoid, denoted by $\Gamma$, has the following data.
\begin{itemize}
    \item sets of vertices, edges and polygonal faces $\Gamma^0,\Gamma^1,\Gamma^2$, with each face $f\in\Gamma^2$ equipped with a choice of a boundary \textit{root edge} $e_\ast$,
    \item maps $s_1,t_1: \Gamma^1\to \Gamma^0$ which sends an edge to its endpoints, and $s_2,t_2: \Gamma^2\to \Gamma^1$ sending a face to its boundary, comprised of its root edge $e_\ast=s_2(f)$ as the source and the rest as "target", such that
    \begin{equation}
        s_1\circ s_2 = s_1\circ t_2,\qquad t_1\circ s_2 = t_1\circ t_2\,.\label{2st}
    \end{equation}
    \item $\Gamma^1\rightrightarrows \Gamma^0$ forms the usual graph groupoid under the composition of edges $\cup: \Gamma^1 \times_{\Gamma^0}\Gamma^1\to \Gamma^1\,$ with the "constant edge" at $v\in \Gamma^0$ serving as the identity $1_v: v\to v$,
    \item $\Gamma^2$ has equipped a groupoid structure given by the composition $\cup_v: \Gamma^2 \times_{\Gamma^1}\Gamma^2\to \Gamma^2\,,$ which attaches two faces $f,f'\in\Gamma^2$ along the identified edge $t_2(f)=s_2(f')$ "vertically", with the "constant PL homotopy" $\id_f: e\Rightarrow e$ serving as the 2-gorupoid identity,
    \item the inverses corresponding to the composition laws $\cup_v,\cup_h$ are given by the orientation reversals of edges and faces, $\bar e= e^{-1},~ \bar f= f^{-1}.$
\end{itemize}
Crucially, the set $\Gamma^2$ of polygonal faces has \textit{another} composition law: one which is induced by the attaching along the edges. This is defined as a map
\begin{equation*}
    \cup_h:\Gamma_2\, \times_{\Gamma^0} \Gamma_2\to\Gamma_2
\end{equation*}
where the pullback over $\Gamma^0$ is given in terms of the maps $\Gamma^2\to \Gamma^0$  given in \eqref{2st}. 

Aside from the obvious associativity and unity laws satisfied by the various composition and identities, we have the following crucial consistency condition: the \textbf{interchange law}. Namely, for any collection of four faces $f_1,f_2,f_3,f_4\in\Gamma^2$, we must have the equation
\begin{equation}\label{graphinterchange}
    (f_1\cup_h f_2)\cup_v(f_3\cup_hf_4) = (f_1\cup_vf_3)\cup_h(f_2\cup_vf_4)
\end{equation}
whenever the compositions on both sides make sense. These are typical structures and coherence conditions satisfied by a 2-groupoid. 


We call $\Gamma = \Gamma^2\rightrightarrows\Gamma^1\rightrightarrows \Gamma^0$ a \textbf{2-graph complex}. For the rest of this paper, by a "lattice" we will simply mean the 2-graph complex $\Gamma$ embedded in $\Sigma$, as well as its underlying 2-groupoid structure, defined in \S \ref{2graphdef}.

\section{2-Chern-Simons theory on the lattice}\label{2cs}
Let $\mathbb{G}=\mathsf{H}\xrightarrow{t}G$ denote a strict Lie 2-group. We will assume the associated Lie 2-algebra $\frak G =\operatorname{Lie} \mathbb{G}=\h\xrightarrow{t}\g$ is \textit{balanced} (terminology from \cite{Zucchini:2021bnn}): namely it has equipped a non-degenerate invariant pairing $\langle-,-\rangle: \frak{G}^{\otimes 2}\rightarrow\bbC[1]$ of \textit{degree-1}. In other words, $\langle-,-\rangle$ is only supported on $\g\otimes\h \oplus \h\otimes \g$.

\begin{rmk}\label{smooth2grp}
    Here, by "strict" we mean that the associator and unitor morphisms (as one typically see in monoidal categories \cite{maclane:71}) are trivial. Several equivalent \cite{Baez:2004,Crane:2003gk,Porst2008Strict2A,Pfeiffer2007,Ludewig:2023} descriptions of Lie 2-groups that we shall make use of here are (i) a category internal to the category of Lie groups $\mathsf{LieGrp}$, (ii) a Lie group crossed-module $\mathbb{G} = \mathsf{H}\xrightarrow{t}G$, and (iii) a 2-group object in the category of Lie groupoids $\mathsf{LieGrpd}$. These all have "strictness" built-in, and make it clear that Lie 2-groups $\mathbb{G}$ come with a smooth topology. In order to describe a weak variant of Lie 2-groups, on the other hand, one considers 2-group objects in the bicategory of bibundles $\mathsf{Bibun}$, instead of $\mathsf{LieGrpd}$: this is a {\bf smooth 2-group} \cite{Schommer_Pries_2011}, in which the associator and unitor morphisms can be weakened in the smooth setting. 
\end{rmk}

Given a 4-manifold $M^4$, the partition function is given formally by
\begin{equation*}
    Z(M^4) = \int D[A,B] e^{i2\pi k \operatorname{2CS}(A,B)} = \int D[A,B]e^{i2\pi k\int_{M^4}\langle B,F(A)-\frac{1}{2}tB\rangle},
\end{equation*}
where $(A,B)\in\Omega^1\otimes \g\oplus \Omega^2\otimes\h$ is a $\mathbb{G}$-connection on $M^4$. The classical equations of motion are given by fake- and 2-flatness
\begin{equation*}
    F(A) -tB = 0,\qquad d_AB = 0,
\end{equation*}
and the gauge symmetries are parameterized by a polyform $(h,\Gamma)\in C^\infty\otimes G\oplus\Omega^1\otimes\h$ such that
\begin{equation*}
    A\mapsto A^{(h,\Gamma)}= \operatorname{Ad}_g^{-1}A + g^{-1}dg + t\Gamma,\qquad B\mapsto B^{(h,\Gamma)}=g^{-1}\rhd B + d_{A^g}\Gamma-\frac{1}{2}[\Gamma,\Gamma].
\end{equation*}
This in particular reproduces the 4d BF theory when $t=0$ and the 4d BF-BB theory when $t=1$. In the latter case, the shift symmetry $A\mapsto A +t\Gamma $ has been {\it gauged} by this derived formalism.

In the following, we will demonstrate the raison d'{\^ e}tre behind the coherent formulation in \S \ref{coherence}: one can put a "2-" in front of every appropriate noun, and obtain a description of 2-Chern-Simons theory.

\subsection{Discrete 2-gauge theory}\label{disc2hol}
Now the story of trying to discretize this theory tentatively goes in the same way as in the ordinary Chern-Simons case: given a triangulation of the 3d Cauchy surface $\Sigma$ of $M^4$, we (i) decorate the dual 2-graph $\Gamma$ (cf. \S \ref{2graphdef}) with the data of $\mathbb{G}$, (ii) define functions on them, and then (iii) mod out the (2-)gauge transformations. We shall once again assume our graphs (which now contains edges and faces) are non-self-intersecting.

To make this precise, the coherent formulation of graph operators becomes very useful: 
\begin{definition}
    An (admissible) \textbf{decorated 2-graph} is a 2-functor $F: \Gamma \rightarrow \textbf{B}\mathbb{G}$ between the 2-graph complex $\Gamma$, treated as a 2-groupoid, to the algebraic delooping $\textbf{B}\mathbb{G} = \mathbb{G}\rightrightarrows\ast$. For an oriented 2-simplex $(012)$, for instance, this is the data of 
    \begin{equation*}
        F(i) = \ast, \qquad F(ij) = h_{(ij)}\in G,\qquad F_{(012)} = b_{(012)} \in\mathsf{H}\,.
    \end{equation*}
    For each closed polygonal face $f = (e_*,e_1,\dots,e_p)\in\Gamma^2$ with $p$-number of edges, we have the \textit{fake-flatness} condition \cite{Bullivant:2017sjz}
    \begin{equation*}
        \prod_{i=1}^p h_{e_i} = h_{e_*}t(b_f),
    \end{equation*}
    where $e_*\in \partial f$ is the distinguished {root edge} of $f$. In other words, the edges are glued together according to the 2-groupoid structure of $\Gamma$ given in \S \ref{2graphdef}.
\end{definition}
\noindent Since $\Gamma$ is by construction 2-truncated, $F$ assigns the trivial value $1$ to a contractible 3-cell:
\begin{equation*}
    \prod_{f\in \partial V} b_{f} =1,\qquad V \text{ contractible 3-cell}.
\end{equation*}
This gives the \textit{2-flatness} kinematical condition \cite{Martins:2006hx,Bullivant:2016clk} on the decorated 2-graph.

Notice if $\Gamma$ is a "fundamental 2-graph", ie. $\Gamma^2$ is a PL 2-disc consisting of a single face with a single 1-graph boundary, then $\mathbb{G}^{\Gamma}$ is a single copy of $\mathbb{G}$. As an abuse of notation, we will denote by $\mathbb{G}^\Gamma$ the 2-groupoid of discrete $\mathbb{G}$-holonomies in the following.

\begin{rmk}
    It is important to emphasize here that the fake- and 2-flatness conditions are imposed kinematically as Gauss constraints on the states, while the dynamical constraint defined by the delta operators in the scattering amplitude involve the discretized versions of the 2-Bianchi identities $$d_A(F-tB)=0,\qquad d_A(d_AB)=0.$$ This is true for all values of the $t$-map; in particular, for $t=\id$ the 1-Bianchi identity $d_AF=0$ that appears on-shell $F=B$ is in fact a {\it kinematical} 2-flatness condition, and hence is {not} part of the dynamical constraint on scattering amplitudes.
\end{rmk}

A {\it pseudo}natural transformation $\eta:F\Rightarrow F'$ assigns an element of $G$ to a vertex, and an element of $\mathsf{H}\rtimes G$ to an edge, such that several diagrams commute. Working this all out gives.
\begin{definition}
    A \textbf{2-gauge transformation} between two decorated 2-graphs is a pseudonatural transformation $\eta:F\Rightarrow F'$. On an oriented 1-simplex $(01)$, for instance, this is the data of
    \begin{equation*}
        \eta_i = a_i\in G,\qquad \eta_{(01)} = \gamma_{(01)}\in \mathsf{H}\,.
    \end{equation*}
    On every oriented face $f\in\Gamma^2$ rooted at the source edge $e_\ast: v_0\to v_1$, we have
    \begin{equation}
        h_{e_\ast}' = a_{v_0}^{-1}h_{e_\ast} t(\gamma_{e_\ast})a_{v_1},\qquad b_{f}' = a_{v_0}^{-1}\rhd((h_{e_\ast}\rhd \gamma_{s_s})^{-1}(a_{v_1}\rhd b_{f})\gamma_{e_\ast})\,.\label{2gt}
    \end{equation}
\end{definition}
\noindent In more compact notation, following the blob model of bicategories in \cite{Baez:2004}, these are actions by conjugation
\begin{equation*}
    h'_{e}\xrightarrow{b_{f}'} = (a_{v_0}\xrightarrow{\gamma_{e}})^{-1}\cdot (h_{e}\xrightarrow{b_{f}})\cdot (a_{v_1}\xrightarrow{\gamma_{e}}) \equiv \operatorname{hAd}_{(a,\gamma)_e}^{-1}(h_{e},b_{f}),
\end{equation*}
under the \textit{horizontal} composition $\cdot$ in $\mathbb{G}^\Gamma$. Note the target of $a\xrightarrow{\gamma}$ is determined by $at(\gamma)$, so we did not write them down.

We are not done yet. For pseudonatural transformations $\eta,\eta':F\Rightarrow F'$ between 2-functors, there is the notion of {\it modifications} $m: \eta\Rrightarrow\eta'$. This defines the notion of {\bf secondary gauge transformations} or \textbf{ghosts-of-ghosts}, ie. redundancy between 2-gauge transformations. This is the data of an element of $\mathsf{H}\rtimes G$ on each vertex $v$ such that 
\begin{equation*}
    a_{v}' = a_vt(m_v),\qquad \gamma_{e}' = m_{v_0}^{-1}\gamma_{e}m_{v_1}
\end{equation*}
for each edge $e: v_0\to v_1 $ in $\Gamma^1$. In other words, this is the conjugation action
\begin{equation*}
    a'_{v_0}\xrightarrow{\gamma'_{e}} \,= (a_{v_0}\xrightarrow{m_{v_0}} )^{-1}\circ (a_{v_0}\xrightarrow{\gamma_{(01)}}) \circ (a_{v_1}\xrightarrow{m_{v_1}} ) \equiv \operatorname{vAd}_{m_0}^{-1}(a_{v_0},\gamma_{e})
\end{equation*}
under \textit{vertical} composition $\circ$ in $\mathbb{G}^\Gamma$. This describes fully the 2-categorical structure determined by the 2-functors $\operatorname{Fun}(\Gamma,\textbf{B}\mathbb{G})$.

As an abuse of notation, we will denote by $\mathbb{G}^{\Gamma^1}$ the monoidal categories formed by the pseudonatural natural transformations and the secondary gauge transformations --- namely, they are the 2- and 1-morphisms of $\operatorname{Fun}(\Gamma,\textbf{B}\mathbb{G})$.

\medskip

    Note the strictness of the underlying Lie 2-group $\mathbb{G}$ here means that the hom-categories in $\mathbb{G}^\Gamma$ are strict monoidal, and hence we can truncate them and treat $\mathbb{G}^\Gamma$ as a 1-groupoid. Once we have done this, the 1-morphisms in it are then labelled by secondary-gauge equivalence classes of 2-gauge transformations. We shall see that this dramatically simplifies much of our discussions in the next sections.

\subsubsection{Weak 2-gauge theory based on weakly-associative smooth 2-groups}\label{weak2gauthy} We pause here to make several comments about the weakly associative setting. It is well-known that \textit{finite} 2-groups are classified up to equivalence by its {\it Ho{\` a}ng data} $(G=\operatorname{coker}t,A=\operatorname{ker}t,\tau)$ \cite{Nguyen2014CROSSEDMA,Ang2018,Baez2023HoangXS}, where $\tau\in H^3(N,A)$ is a group 3-cohomology class called the \textbf{Postnikov class}.  In this context, the 3-cocycle $\tau$ can be thought of as an associator isomorphism $\tau(g_1,g_2,g_3)\in A$ over $g_1g_2g_3\in G$, which only has a component proportional to the identity \cite{Wen:2019,Cui_2017,Chen:2023tjf}. 

There had been numerous works in the literature which studied higher-group gauge theories built from such Ho{\` a}ng data, and they led to the so-called {\it 2-group Dijkgraaf-Witten TQFTs} \cite{Zhu:2019,Kapustin:2013uxa,Kapustin2017,Bullivant:2019tbp,Bullivant:2017sjz,Chen2z:2023}. These can be understood as 4d Douglas-Reutter TQFTs \cite{Douglas:2018} built out of the symmetric 2-category $\operatorname{2Rep}(G,A,\tau)$ of the 2-representations \cite{Bartsch:2022mpm} of the 2-group.

\medskip

To describe associators $\tau(g_1,g_2,g_3): (g_1g_2)g_3\rightarrow g_1(g_2g_3)$ with non-identity components, we must work in the context of weakly associative smooth 2-groups. Here, the objects $G$ of $\mathbb{G}$ no longer form groups, as its monoidal structure is no longer associative $ (g_1g_2)g_3\neq g_1(g_2g_3)$. Although the associativity of the morphisms is retained, its composition law is modified: namely for $(g_1,\alpha_1)\circ(g_2,\alpha_2)\circ (g_3,\alpha_3)$ to be composable, we require
\begin{equation*}
     \tau(g,t(\alpha_1),t(\alpha_2)): (g_1t(\alpha_1))t(\alpha_2)\rightarrow g_1(t(\alpha_1\alpha_2)) = g_1(t(\alpha_1)t(\alpha_2)).
\end{equation*}
Despite the abundance of literature on finite 2-group Dijkgraaf-Witten theories, and despite the fact that we do have the proper setting of \textit{smooth 2-groups} \cite{Schommer_Pries_2011} mentioned previously to talk about smooth associator morphisms, the "weak 2-gauge theory" built out of such smooth 2-groups are much less well-understood. Though, the form of the action is known \cite{Jurco:2018sby,Chen:2024axr,Schenkel:2024dcd},
\begin{equation*}
        S_{w2CS}[A,B] = \int_X \langle B,F(A)-\frac{1}{2}tB\rangle + \frac{1}{4!}\langle \kappa(A),A\rangle,
\end{equation*}
as well as the local kinematical data \cite{Kim:2019owc,Zucchini:2021bnn,Soncini:2014} (the so-called "weak 2-connections" and their 2-gauge transformations). These fields are described by the structure of a weak Lie 2-algebra $\G=\h\xrightarrow{t}\g$ with a {\it Jacobiator} $\kappa: \g^{\times 3}\rightarrow \h$ \cite{Chen:2012gz,Chen:2013,BaezRogers}, which sources the covariant 2-curvature \cite{Soncini:2014,Baez:2004in,Cordova:2018cvg,Benini_2019},
\begin{equation*}
    d_AB - \frac{1}{3!}\kappa(A,A,A) = 0.
\end{equation*}

\subsubsection{Closure of the weak 2-gauge algebra}\label{descendant}
An issue encountered in weak 2-Chern-Simons theory is that the 2-gauge algebra, in the weakened context, does not close off-shell of the fake-flatness condition \cite{Kim:2019owc,Soncini:2014}. In our combinatorial setting, on the other hand, we can directly take the effects of the associator $\tau$ into account.

Recall that the 1- and 2-morphisms in the 2-functor 2-category $\operatorname{Fun}(\Gamma,\textbf{B}\mathbb{G})$ are given by the pseudonatural transformations and the modifications which govern 2-gauge transformations on the lattice. By direct computation, one can see that the monoidality of the composition of the 2-gauge transformation \eqref{2gt} is witnessed by an invertible modification,
  \begin{equation*}
      m_v(a,h):\operatorname{hAd}^{-1}_{(a_1,\gamma_1)}\circ\operatorname{hAd}^{-1}_{(a_2,\gamma_2)} \Rightarrow \operatorname{hAd}^{-1}_{(a_1a_2,\gamma_1(a_1\rhd\gamma_2))},
  \end{equation*}
given by a vertical conjugation $\operatorname{vAd}_{\tau(a_1,a_2,h)}^{-1}$, where $(h,b)$ denotes the source of the 2-gauge transformation $(a_2,\gamma_2)$. In terms of field theory, this is known as the \textit{first descendant} of $\tau$ \cite{Kapustin:2013uxa,Chen:2022hct,Kim:2019owc}, which in the continuum defines a 2-cocycle which depends on the holonomy $h_e$, as well as the gauge parameter $a_v$.

If such modifications have non-identity components, which is indeed the case when $\mathbb{G}$ has equipped a \textit{weak} associator $\tau$, then one cannot 2-truncate $\mathbb{G}^\Gamma$ to obtain an algebra in the usual sense. This explains why the weak 2-gauge symmetries in general does not close \textit{as algebras} --- {the higher-gauge symmetries form monoidal categories in general}! In the continuum, this fact was also noted in \cite{Soncini:2014,Zucchini:2021bnn} from the  BRST perspective.



\subsection{Configuration space of lattice 2-Chern-Simons theory}\label{configs}
We now work to construct the "2-algebra of 2-graph operators" by making use of the coherent formulation. One is then tempted to study the functions on the $\mathbb{G}$-holonomies, denoted $C(\mathbb{G}^{\Gamma})$. However, as we will explain in \S \ref{coprod}, this formulation suffers from various mathematical and physical drawbacks.

As such, we instead consider a higher categorical notion of $\mathbb{C}$ --- namely the category $\mathsf{Hilb}$ of complex vector (Hilbert) spaces. The goal is to study maps which assigns a \textit{vector space} to a $\mathbb{G}$-holonomy, and a linear isomorphism to a (secondary gauge equivalence class) of the 2-gauge transformation on them. 


\begin{definition}\label{2graphopsguide}
    A \textbf{2-graph operator with covariance data} is the tuple $\Phi=(\phi,\varphi)$ consisting of the following data:
    \begin{enumerate}
        \item a map $\phi:\mathbb{G}^\Gamma\to\mathsf{Hilb}$ that assigns to each decorated 2-graph $\{(h_e,b_f)\}_{(e,f)}\in\mathbb{G}^{\Gamma}$ a Hilbert space $\phi_{\{(h_e,b_f)\}_{(e,f)}}\in\mathsf{Hilb}$, and
        \item a map $\varphi$ which assigns to each (secondary gauge equivalence class of) 2-gauge transformation $\eta:\{(h_e,b_f)\}_{(e,f)}\rightarrow \{(h_e',b_f')\}_{(e,f)}$ a linear isomorphism $\varphi_\eta=\Lambda:\phi_{\{(h_e,b_f)\}_{(e,f)}}\xrightarrow{\sim}\phi_{\{(h_e,b_f)\}_{(e,f)}}'=\phi_{\{(h'_e,b'_f)\}_{(e,f)}}$  defined by \eqref{2gt}.
    \end{enumerate}
    A \textbf{morphism} between 2-graph operators with covariance data is an assignment of linear maps $\psi:\phi_{\{(h_e,b_f)\}_{(e,f)}}\to \tilde{\phi}_{\{(h_e,b_f)\}_{(e,f)}}$ to each decorated 2-graph, which intertwines the 2-gauge transformations; formally, we have $\psi\circ    \varphi = \tilde{\varphi} \circ\psi$.
\end{definition}
\noindent



\medskip

The above definition pins down, at least algebraically, the properties that 2-graph operators should have. But since we are dealing with \textit{Lie} 2-groups and \textit{infinite-dimensional} Hilbert spaces, certain functional analytic conditions must also be given. As such, the above definition should be treated as more of a "guide" towards the actual definition of 2-graph operators.

In the following section, we shall recall the framework which properly treats analytic properties.

\subsubsection{Square-integrable functors: measureable fields}\label{measureablefields}
Similar to the case of the discretized Chern-Simons theory, the 2-graph operators should form an "infinite-dimensional 2-algebra". For this, it is useful to consider \textbf{measureable categories} of Crane-Yetter \cite{Crane:2003ep,Yetter2003MeasurableC,Baez:2012}.
\begin{definition}
    A \textbf{measurable category} $\mathcal{H}^X$ is a $C^*$-category with the following.
    \begin{itemize}
        \item The objects are measurable fields $H^X$, which is the data of a measure space $(X,\mu)$ together with an assignment $x\mapsto H_x$ of (infinite-dimensional) Hilbert spaces $(H_x,\langle-,-\rangle_x)$ for each $x\in X$ such that one has a subspace $\cM_H\subset \coprod_x H_x$ defined by:
    \begin{enumerate}
        \item the norm function $x\mapsto |\xi_x|_{H_x}=\sqrt{\langle \xi_x,\xi_x\rangle_{H_x}}$ is $\mu$-measurable,
        \item if $\eta\in\coprod_xH_x$ is such that $x\mapsto \langle \eta_x,\xi_x\rangle_{H_x}$ is $\mu$-measurable for all $\xi\in\cM_H$, then $\eta\in\cM_H$,
        \item there exists a sequence $\{\xi_i\}\subset\cM_H$ that $\{(\xi_i)_x\}_i\subset H_x$ is dense for all $x$.
    \end{enumerate}
    \item A morphism $f: H^X\rightarrow H'^X$ is a $X$-family $f_x: H_x\rightarrow H'_x$ of bounded linear operators such that $f(\cM_H)\subset\cM_{H'}$.
    \end{itemize}
\end{definition}
Just as in the case of {\it finite}-dimensional 2-Hilbert spaces \cite{Baez1996HigherDimensionalAI}, a measureable field $H^X$ has hom's given by a $C^*$-algebra of bounded linear operators. But here, these $C^*$-algebras are indexed by a measure space $X$, instead of just a finite set of basis elements. The collection of all measurable categories form a 2-category $\mathsf{Meas}$. We shall in the following cast $\mathfrak{C}(\mathbb{G}^{\Gamma})\in\mathsf{Meas}$ as a measurable category. 


There is an analogue of the integration operation for measureable fields \cite{Yetter2003MeasurableC,Baez:2012}.
\begin{proposition}
    The \textbf{direct integral} $\cH=\int_X^\oplus d\mu_x H_x$ is a functor $\int^\oplus_X d\mu_X(-): \cH^X\rightarrow\mathsf{Hilb}$.
\end{proposition}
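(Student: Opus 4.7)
The plan is to construct the direct integral separately on objects and on morphisms of $\cH^X$, and then verify the two functoriality axioms. On an object $H^X = (X,\mu,\{H_x\}_{x\in X},\cM_H)$, I would set
\begin{equation*}
    \int_X^\oplus d\mu_x\, H_x \;=\; \left\{\xi\in\cM_H \;:\; \int_X |\xi_x|_{H_x}^2\, d\mu(x)<\infty\right\}\Big/\sim_\mu,
\end{equation*}
where $\sim_\mu$ identifies sections agreeing $\mu$-almost everywhere, equipped with the inner product $\langle \xi,\eta\rangle = \int_X \langle \xi_x,\eta_x\rangle_{H_x}\, d\mu(x)$. Condition (1) in the definition of $\cM_H$ makes $x\mapsto |\xi_x|_{H_x}$ measurable, and condition (2) together with the polarization identity delivers measurability of $x\mapsto \langle\xi_x,\eta_x\rangle_{H_x}$, so the inner product is well defined; Cauchy–Schwarz ensures it is finite on square-integrable sections.

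First I would verify that this yields a genuine Hilbert space: the vector space structure is inherited pointwise, Hermiticity and positive-definiteness (after quotienting by $\sim_\mu$) are immediate, and completeness follows from the standard direct-integral argument --- pass to an $L^2$-Cauchy sequence, extract an a.e.-convergent subsequence using a dominating section built from condition (3) (existence of a dense countable family in $\cM_H$), and dominated convergence produces the limit inside $\cM_H$.

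Next, for morphisms $f: H^X\to H'^X$, I would set
\begin{equation*}
    \Big(\int_X^\oplus d\mu_x\, f_x\Big)\xi \;=\; (f_x \xi_x)_{x\in X}, \qquad \xi\in\int_X^\oplus H_x.
\end{equation*}
The hypothesis $f(\cM_H)\subset\cM_{H'}$ guarantees $(f_x\xi_x)_x$ is a measurable section of $H'^X$, and the $C^*$-categorical norm on $\operatorname{Hom}_{\cH^X}(H^X,H'^X)$ --- which is the $\mu$-essential supremum of $x\mapsto\|f_x\|_{\mathrm{op}}$ in Crane--Yetter's framework --- controls $\|\int^\oplus f_x\xi\|^2 \le \|f\|_\infty^2 \|\xi\|^2$, so the integral lands in $\int_X^\oplus H'_x$ as a bounded operator. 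Functoriality is then pointwise: identity families give identities on direct integrals, and $(g_x\circ f_x)\xi_x = g_x(f_x\xi_x)$ holds $\mu$-a.e., which passes to $\int^\oplus(g\circ f) = (\int^\oplus g)\circ(\int^\oplus f)$ after quotienting.

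The main obstacle I anticipate is the boundedness step: in the definition of $\cH^X$ as written, a morphism is only required to be an $X$-family of bounded operators preserving measurable sections, with no a priori essential bound on $x\mapsto \|f_x\|_{\mathrm{op}}$. Without such a bound, $\int^\oplus f_x$ need not extend to a bounded operator, and the functor would take values in a category of (possibly unbounded) linear maps instead of $\mathsf{Hilb}$. I would resolve this by invoking the standard convention --- implicit in the $C^*$-category structure referenced in the definition --- that the morphism spaces of $\cH^X$ are precisely the \emph{decomposable} bounded operators, i.e.\ those with essentially bounded pointwise operator norm; this is the same convention used in von Neumann's theory of direct integrals and in Yetter's construction of $\mathsf{Meas}$, and it makes $\int_X^\oplus d\mu_x(-)$ a well-defined functor into $\mathsf{Hilb}$ as claimed.
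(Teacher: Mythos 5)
Your construction is correct and coincides with the standard one the paper defers to: the paper gives only the object-level definition (square-integrable sections modulo $\mu$-a.e.\ equivalence with the $L^2$ inner product) and cites Yetter and Baez et al.\ for the rest, so your pointwise definition on morphisms, the completeness argument via conditions (2)--(3), and the functoriality check are exactly the details left implicit there. Your observation that boundedness of $\int_X^\oplus d\mu_x\, f_x$ requires a $\mu$-essential bound on $x\mapsto\|f_x\|_{\mathrm{op}}$ --- which the printed definition of a morphism in $\cH^X$ does not literally guarantee --- is a genuine subtlety, and the convention you invoke (morphism spaces consist of decomposable operators with essentially bounded fibrewise norm, which is what makes the hom-spaces $C^*$-algebras) is indeed the one used in the references the paper relies on.
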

\noindent The Hilbert space $\cH=\int_X^\oplus d\mu_x H_x$ associated to the measurable field $H^X\in\cH^X$ is defined as the space of $\mu$-a.e. equivalent classes of $L^2$-integrable sections $\psi\in\cM_H$ equipped with the inner product $$\langle\psi,\psi'\rangle = \int_X d\mu_x \langle\psi_x,\psi'_x\rangle_{H_x}<\infty.$$


\subsubsection{Haar measures on locally compact Lie 2-groups} 
Now in order to apply the notion of measureable fields to \textbf{Definition \ref{2graphopsguide}}, we must first make $\mathbb{G}$ into a measure space. We will do this by following \cite{Williams2015HaarSO}.

\begin{definition}
    Consider $\mathbb{G}$ as a locally compact Hausdorff groupoid. A \textbf{Haar system} on $\mathbb{G}$ is a $G$-family $\{\nu^a\mid a\in G\}$ of positive Radon measures $\nu^a$ supported on $\mathsf{H}$ (considered as the source fibre of $a\in G$) such that for all compactly supported $f\in C_c(\mathsf{H})$,
    \begin{enumerate}
        \item the assignment $a\mapsto \int_{\mathsf{H}}d\nu^a(\gamma) f(\gamma)$ is continuous, and
        \item for all $\gamma\in\mathbb{G}$, with source $a$ and target $t(\gamma)a$, then $\int_\mathsf{H}d\nu^{a}(\gamma')f(\gamma') = \int_\mathsf{H}d\nu^{t(\gamma)a}(\gamma')f(\gamma')$.
    \end{enumerate}
    Note the second condition implies the left-invariance of $\nu^a$ by groupoid (vertical) multiplication.
\end{definition}
\noindent Now let $\mathbb{G}$ be a locally compact Hausdorff Lie 2-group (ie. both $\mathsf{H},G$ are locally compact Hausdorff, and the structure maps $s,t$ are smooth). We require the Haar system $\{\nu^a\}_a$ on $\mathbb{G}$ to be compatible with the group structure. 

Explicitly, this means that for each $a\in G$ and each measureable subset $A\subset\mathsf{H}$, the map $\nu^b(A)\mapsto \nu^{ab}(A)$ is measureable. Take a usual Haar measure $\sigma$ on $G$, the following measure
\begin{equation*}
    d\mu_{(a,\gamma)} = d\sigma(a)d\nu^a(\gamma)
\end{equation*}
is then left-invariant under horizontal \textit{whiskering}: for each compactly-supported $\bbC$-valued $f\in C_c(\mathbb{G})$ and $a\in G$, we have
\begin{align*}
    \int_\mathbb{G}d\mu_{(ab,a\rhd \gamma)}f(b,\gamma) &= \int_G d\sigma(ab) \int_\mathsf{H} d\nu^{ab}(a\rhd\gamma)f(b,\gamma) \\
    &= \int_G d\sigma(b) \int_\mathsf{H} d\nu^{t(\gamma)b}(\gamma)f(a^{-1}b,a^{-1}\rhd \gamma) = \int_\mathbb{G}d\mu_{(b,\gamma)}f(a^{-1}b,a^{-1}\rhd\gamma),
\end{align*}
where we have made a change of variable $(ab,a\rhd \gamma)\mapsto (b,\gamma)$. This provides an invariant Haar measure on the 2-group $\mathbb{G}$. We now condense this notion into a proper definition.

First, notice the above condition implies that the Haar system $\Omega_\mathsf{H}= \{\nu^a\mid a\in G\}$, understood as a subspace of all measureable functions $\cM_\mathsf{H}$ on $\mathsf{H}$, is a measureable $G$-representation. We call such Haar systems {\it $G$-equivariant}.
\begin{definition}
    A \textbf{2-group Haar measure} $\mu$ is a Radon measure equipped with a \textit{disintegration} \cite{Pachl_1978,Baez:2012} $\{\nu^a\}_{a\in G}$ along the source map $s:\mathbb{G}\rightarrow G$ into a $G$-equivariant Haar system,
\begin{equation*}
    \int_\mathbb{G} d\mu(\zeta) f(\zeta) = \int_G d\sigma(a) \int_{s^{-1}G} d\nu_a(\gamma) f(a,\gamma),\qquad \forall~ f\in C(\mathbb{G}),~\zeta=(a,\gamma)\in\mathbb{G},
\end{equation*}
such that $\sigma = \mu\circ s^{-1}$ is itself a Haar-Radon measure on $G$.
\end{definition}
\noindent Recall the family of measures $\{\mu^a\}_a$ exist (as ordinary Radon measures) due to the disintegration theorem \cite{Pachl_1978}, while the $G$-equivariance is an extra algebraic condition.

In the following, we will always assume that the Haar measure $\mu$ is also Borel: namely that all $\mu$-measureable subsets are open in the smooth topology of $\mathbb{G}$

\subsubsection*{Volumes of Lie 2-groups.}
Let us now try to compute the volume $\mu(\mathbb{G})$ of the compact Lie 2-group $\mathbb{G}$. Formally, this can be understood as the sum of all the volumes $\operatorname{vol}_a(\mathsf{H})=\int_\mathsf{H}d\nu^a(\gamma)$ of $\mathsf{H}$ determined by the Haar system. To ensure that this definition is well-defined, we require the map $a\mapsto \operatorname{vol}_a(\mathsf{H})$ itself to be $\sigma$-measureable. This is implied precisely by the $G$-equivariance of the Haar system: the map $\nu^b(A)\mapsto \nu^{ab}(A)$ is measureable in $a\in G$. The map
\begin{equation*}
    \nu^1(\mathsf{H}) \mapsto \nu^a(\mathsf{H}) = \operatorname{vol}_a(\mathsf{H})
\end{equation*}
is thus measureable in $a\in G$, thanks to the compactness of $\mathsf{H}$. The continuity of multiplication $b\mapsto ab$ in $G$ then implies that the function $a\mapsto \operatorname{vol}_a(\mathsf{H})$ is $\sigma$-measureable,
and hence the {\it Haar volume}
\begin{equation*}
    \mu(\mathbb{G}) = \int_Gd\sigma(a)\operatorname{vol}_a(\mathsf{H})<\infty
\end{equation*}
is well-defined. 

This allows us to normalize the measure $\mu$ such that $\mu(\mathbb{G})=1$; or in other words, all integration operations of the form $\int_\mathbb{G}d\mu,\int_\mathbb{G}^\oplus d\mu$ comes with an implicit factor of $1/\mu(\mathbb{G})$. In the following, we will assume that all 2-group Haar measures are normalized in this way.

\subsection{Categorical 2-group functions as measurable fields}\label{2grpfunc}
With the Haar measure $\mu$ on $\mathbb{G}$ in hand, we can now construct a model for the "Hilbert space-valued 2-group functions" $\mathfrak{C}(\mathbb{G})$. In accordance with the datum $\phi$ in \textbf{Definition \ref{2graphopsguide}},  elements in it should be maps that assign entire 2-group elements to a Hilbert space. 

\begin{rmk}\label{not-2group}
    Note we are aiming for simply an assignment of a Hilbert space to each $\mathbb{G}$-decorated 2-graphs. This is in contrast to the formulation in terms of the functor category $\operatorname{Fun}(\mathbb{G},\mathsf{Hilb})$, as studied in \cite{huang2023tannaka}, as we do not a priori consider our assignments as functorial. The framework here is instead more closely related to the "2-groupoid algebra" described in \cite{Bullivant:2019tbp}. 
\end{rmk}

We begin by modelling elements $\phi\in\mathfrak{C}(\mathbb{G})$ as a measurable field $H^X$ over $X= (\mathbb{G},\mu)$; recall $\mu$ is assumed to be Borel. For each $(g,a)\in\mathbb{G}$, we assign a (finite-dimensional) Hilbert space 
\begin{equation*}
    H_{(g,a)}(\phi) = \phi(g,a),\qquad \langle -,-\rangle_{H_{(g,a)}} = \langle-,-\rangle_{\phi(g,a)},
\end{equation*}
called the \textit{stalk} at $(g,a)$, equipped with a fibrewise inner product. Next, we take the space $\cM_{H}\subset \coprod_{(g,a)} H_{(g,a)}$ of measurable sections to consist of vectors $\xi_{(g,a)}$ such that the norm map $\mathbb{G} \rightarrow \R:(g,a)\mapsto |\xi_{(g,a)}|_{H_{(g,a)}}$ is continuous (with respect to the smooth topology of $\mathbb{G}$).

Throughout the rest of this paper, we shall assume the existence of $\mu$-measureable covering $U\rightarrow \mathbb{G}$ by Borel open sets. We will restrict to a smaller, better behaved Hilbert fields in this paper (see also \cite{TRENTINAGLIA2010750}).
\begin{definition}\label{catfunctions}
    Let $X=(\mathbb{G},\mu)$ denote a compact Lie 2-group equipped with a Haar measure $\mu$, and consider the measureable category $\cH^X\in\mathsf{Meas}$ on it. The \textbf{categorical function algebra} $\C(\mathbb{G})\subset \cH^X$ on $\mathbb{G}$ is the full additive measureable subcategory consisting of measureable fields $H^X$ whose direct integral $$\Gamma_c(H^X): U\mapsto \left(\int_U^\oplus d\mu(g,a)H_{(g,a)}\right)$$ defines a \textit{quasicoherent sheaf}, ie. locally presentable projective $C^\infty(X)$-module, of Hilbert spaces.
\end{definition}
\noindent In other words, we consider $\C(\mathbb{G})$ to be the minimal Abelian completion of the Crane-Yetter measureable category $\cH^\mathbb{G}$ on $(\mathbb{G},\mu)$ within the bicategory $\mathsf{Meas}$.

Now assuming $\mathsf{Meas}$ has all pullbacks/pushouts of measureable categories, we can define the following.
\begin{definition}\label{coproducts}
    Let $C(\mathbb{G}^{\times 2}) \cong C(\mathbb{G})~\bar \otimes~ C(\mathbb{G})$ denote the completed topological tensor product. 
    \begin{enumerate}
        \item We define $\C(\mathbb{G})\times\C(\mathbb{G})$ as the category of measureable Hilbert $C(\mathbb{G})\bar \otimes C(\mathbb{G})$-modules. Together with the equivalence $\mathfrak{C}(\mathbb{G}^{\times 2})\simeq \C(\mathbb{G})\times\C(\mathbb{G})$ \cite{Yetter2003MeasurableC}, the map $ \Delta_h$ defines the additive \textbf{horizontal coproduct/coproduct functor}
    \begin{equation*}
    \Delta_h:\C(\mathbb{G})\rightarrow \C(\mathbb{G})\times\C(\mathbb{G}).
\end{equation*} 
        \item For $\mathbb{G}_1=(\mathsf{H}\rtimes G)$, define by $(\mathbb{G}_1\times_G\mathbb{G}_1,\mu\times_\sigma\mu)$ the pullback  measure space along the surjective submersive source/target maps $s,t: \mathbb{G}_1\rightarrow \mathbb{G}_0=G$. Define by $\C(\mathbb{G}_1)\times^{\C(G)}\C(\mathbb{G}_1)$ the measureable category on $(\mathbb{G}_1\times_G\mathbb{G}_1,\mu\times_\sigma\mu)$, such that it fits into the pushout square
\[\begin{tikzcd}
	{\C(G)} & {\C(\mathbb{G}_1)} \\
	{\C(\mathbb{G}_1)} & {\C(\mathbb{G}_1)\times^{\C(G)}\C(\mathbb{G}_1)}
	\arrow["{s^*}", from=1-1, to=1-2]
	\arrow["{t^*}"', from=1-1, to=2-1]
	\arrow["\lrcorner"{anchor=center, pos=0.125}, draw=none, from=1-1, to=2-2]
	\arrow[from=1-2, to=2-2]
	\arrow[from=2-1, to=2-2]
\end{tikzcd}\]
along the pullbacks $s^*,t^*: \C(G)\to \C(\mathbb{G}_1)$. The map $ \Delta_v$ defines the \textbf{vertical coproduct/cocomposition}
\begin{equation*}
    \Delta_v: \C(\mathbb{G}_1) \rightarrow \C(\mathbb{G}_1)\times^{\C(G)}\C(\mathbb{G}_1).
\end{equation*}
    \end{enumerate}
\end{definition}
\noindent Note that in this categorical language, the analogue of the coproduct density conditions for compact quantum groups \cite{Woronowicz1988} is hidden in the measureable equivalence $\mathfrak{C}(\mathbb{G}^{\times 2})\simeq \C(\mathbb{G})\times\C(\mathbb{G})$.


\begin{proposition}
    The coproducts satisfy the following \textbf{cointerchange relations}
    \begin{equation}
        (\Delta_v\times\Delta_v)\circ \Delta_h \cong (1\times \sigma\times 1)\circ (\Delta_h\times\Delta_h)\circ \Delta_v\label{cointerchange}
    \end{equation}
    where $\sigma$ is a swap of Cartesian product factors.
\end{proposition}
\begin{proof}
    This follows immediately from the interchange relation on $\mathbb{G}$.
\end{proof}

\subsection{Classical 2-graph operators and 2-gauge transformations}\label{covarrep}
By the same procedure as in \S \ref{2grpfunc}, we shall define 2-graph operators in $\C(\mathbb{G}^{\Gamma})$ as measurable fields $H^{{\Gamma^2}}$. Given a Haar measure on $\mathbb{G}$, we can then define a corresponding measure $\mu_{\Gamma}$ on $\mathbb{G}^{\Gamma}$ given by
\begin{equation*}
    d\mu(h,b)_{\Gamma^2} = \prod_{e\in\Gamma^1}d\sigma(h_e) \prod_{f\in\Gamma^2}d\nu^{h_e}(b_f),
\end{equation*}
where the product in the second factor is over faces $f\in\Gamma^2$ for which $e$ is its source edge. {As an abuse of notation, we shall use $(h,b)_{(e,f)}$ to simply denote a configuration $\{(h_e,b_f)\}_{e\in\Gamma^1,f\in\Gamma^2}$ of 2-graph decorations here.} We will assume $\Gamma$ is finite, such that $H^X$ still admits an interpretation as a measureable (quasicoherent) sheaf on $X=\mathbb{G}^{\Gamma}$. 


\medskip

Let us now define a measureable model for the covariance datum $\varphi$, as guided by \textbf{Definition \ref{2graphopsguide}}. From \S \ref{configs}, there is an induced action of the monoidal groupoid $\mathbb{G}^{\Gamma^1}$ of 2-gauge transformations on the 2-graph operators $\mathfrak{C}(\mathbb{G}^{\Gamma})$. As such, we define $\varphi=\Lambda$ to be the {\it infinite-dimensional} $\mathbb{G}^{\Gamma^1}$-module  structure  \cite{Baez:2012} of $\C(\mathbb{G}^\Gamma)$,
\begin{equation}
    \Lambda: \mathbb{G}^{\Gamma^1}\times \mathfrak{C}(\mathbb{G}^{\Gamma}) \rightarrow  \mathfrak{C}(\mathbb{G}^{\Gamma})\,,\label{2gauactino}
\end{equation}
where $X=\mathbb{G}^{\Gamma}$. To characterize $\Lambda_{(a,\gamma)}$ in terms of the horizontal conjugation action $\operatorname{hAd}^{-1}_{(a,\gamma)}$ given by a 2-gauge transformation \eqref{2gt}, we shall leverage Prop. 46 of \cite{Baez:2012}, which states the following. 



\begin{proposition}\label{pullbackmeas}
    All measureable automorphisms on a measureable category $\cH^X$ over $(X,\mu)$ are measureably naturally isomorphic to one induced by pulling back a measureable map $f:X\rightarrow X$.
\end{proposition}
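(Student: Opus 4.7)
The strategy is to exploit a Gelfand-type duality between the measure space $(X,\mu)$ and the commutative $C^*$-algebra of functions living inside $\cH^X$, by adapting the classical theorem that measure-preserving automorphisms of $L^\infty(X,\mu)$ are induced by $\mu$-measureable maps $X\to X$ (modulo null sets). The key observation is that the measureable category $\cH^X$ is "fibred over $X$" in such a way that the base $X$ can be reconstructed from purely categorical data, and then any measureable autofunctor must descend to an automorphism of that base.

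More concretely, the first step is to recover the algebra of functions on $X$ from $\cH^X$. Let $\mathbf{1}^X\in\cH^X$ denote the trivial measureable field with constant fibre $\bbC$, together with the canonical subspace $\cM_{\mathbf{1}}\subset\coprod_x\bbC$ of $\mu$-measureable sections. Its endomorphism algebra $\operatorname{End}_{\cH^X}(\mathbf{1}^X)$ is then canonically isomorphic (as a commutative $C^*$-algebra) to the algebra $L^\infty(X,\mu)$ of bounded $\mu$-measureable functions on $X$. Given any measureable automorphism $\Phi: \cH^X\to\cH^X$, rigidity of the unit forces $\Phi(\mathbf{1}^X)\cong\mathbf{1}^X$, and so $\Phi$ induces a $C^*$-algebra automorphism $\Phi_*:L^\infty(X,\mu)\xrightarrow{\sim} L^\infty(X,\mu)$. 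By the classical measure-theoretic Gelfand duality (Mackey's theorem), this $\Phi_*$ is the pullback $f^*$ along a unique (up to $\mu$-null set) $\mu$-measureable automorphism $f:X\to X$.

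The second step is to promote the natural isomorphism $\Phi(\mathbf{1}^X)\cong (f^*\mathbf{1})^X=\mathbf{1}^X$ from the unit to all objects. Here one uses that every $H^X\in\cH^X$ is a module over $\operatorname{End}_{\cH^X}(\mathbf{1}^X)=L^\infty(X,\mu)$ via the fibrewise scalar action on $\cM_H$, and that this module structure determines $H^X$ up to measureable isomorphism through its sheaf $\Gamma(H^X)$ of $L^2$-sections (by a measureable Serre--Swan-type argument applied to the Hermitian bundle description of \S\ref{2grpfunc}). Since $\Phi$ is $\bbC$-linear and respects the $\operatorname{End}(\mathbf{1}^X)$-action up to the twist $\Phi_*=f^*$, one obtains a canonical intertwiner $\Phi(H^X)\cong f^*H^X$ for every object, natural in $H^X$ by compatibility of $\Phi$ with morphisms in $\cH^X$. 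Extending this to $L^2$-completions and checking that the resulting natural transformation is measureable (i.e.\ its components are norm-bounded and measureable in $x\in X$) is routine given that $f$ itself is $\mu$-measureable.

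The main technical obstacle is step two: while step one is a fairly standard commutative Gelfand-duality argument, the promotion to non-trivial measureable fields requires one to know that the $L^\infty(X,\mu)$-module structure is \emph{faithful enough} to reconstruct $H^X$, i.e.\ that the assignment $H^X\mapsto \Gamma(H^X)$ from measureable fields to $L^\infty(X,\mu)$-modules is fully faithful modulo $\mu$-null sets. This is exactly the content of the measureable Serre--Swan correspondence, and becomes delicate for infinite-dimensional/unbounded stalks; for our setting it suffices to restrict to the sheaves of locally free projective $C^\infty(X)$-modules already considered in \S\ref{2grpfunc}, where the correspondence is clean and $\Phi\cong f^*$ follows without further hypotheses.
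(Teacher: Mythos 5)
First, note that the paper does not actually prove this proposition: it is quoted as Prop.~46 of \cite{Baez:2012}, where measureable functors $\cH^X\rightarrow\cH^Y$ are first classified as ``matrix'' (integral-kernel) functors determined by a $Y$-indexed family of measures on $X$ together with a measureable field of Hilbert spaces on $Y\times X$, and invertibility is then shown to force those measures to be Dirac masses concentrated on the graph of a measureable map $f$. Your proposal is therefore necessarily a different route, and its overall architecture (recover $L^\infty(X,\mu)$ categorically, apply point-realization of measure-algebra automorphisms, then reconstruct on all objects via the module structure and the a.e.\ classification of measureable fields by their fibre-dimension function) is a reasonable one.

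However, there is a genuine gap at the very first step: the assertion that ``rigidity of the unit forces $\Phi(\mathbf{1}^X)\cong\mathbf{1}^X$.'' The measureable category $\cH^X$ carries no monoidal structure here, so $\mathbf{1}^X$ is not a unit of anything and has no rigidity to invoke; it is just one object among many. Worse, this step is essentially the whole content of the proposition: a general measureable functor in the sense of Yetter/Baez et al.\ sends the constant field $\mathbf{1}^X$ to the direct integral $x\mapsto\int^\oplus_X d\mu_x(y)\,K_{x,y}$ of its kernel, which is \emph{not} isomorphic to $\mathbf{1}^X$ unless the kernel is already of pullback type --- so assuming $\Phi(\mathbf{1}^X)\cong\mathbf{1}^X$ comes close to assuming the conclusion. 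The step can be repaired, but it requires an actual argument: either (i) replace $\operatorname{End}(\mathbf{1}^X)$ by the centre $\operatorname{End}(\operatorname{id}_{\cH^X})\cong L^\infty(X,\mu)$, which \emph{is} canonically preserved by any autoequivalence and feeds directly into your Mackey/point-realization step, or (ii) characterize $\mathbf{1}^X$ intrinsically up to isomorphism, e.g.\ as a generator with commutative endomorphism algebra (commutativity forces a.e.\ one-dimensional fibres, and the generator property rules out fields supported on a proper subset of $X$). With either repair, your second step goes through as sketched --- the twisted $L^\infty$-equivariance pins down the fibre dimensions of $\Phi(H^X)$ as those of $H^X$ pulled back along $f$, and measureable fields over a standard measure space are determined a.e.\ by their dimension function --- though you should also record that the point-realization theorem requires $(X,\mu)$ to be standard, which holds here since $X=\mathbb{G}^{\Gamma^2}$ is a manifold with a Radon measure.
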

\noindent We introduce the following notion. 
\begin{definition}\label{concretify}
    We say that the 2-gauge transformation action $\Lambda$ \eqref{2gauactino} is \textbf{realized concretely} on $\mathfrak{C}(\mathbb{G}^{\Gamma})$ iff
\begin{enumerate}
    \item $\Lambda_\zeta \cong (\operatorname{hAd}_\zeta)^*$ are measureable naturally isomorphic by \textbf{Proposition \ref{pullbackmeas}} for {each} decorated 1-graph $\zeta\in\mathbb{G}^{\Gamma^1}$,
    \item for each 2-graph operator $\phi=\Gamma_c(H^X)$, the pullback $(\operatorname{hAd}_\zeta)^*(\phi)=\Gamma_c(H_\zeta^X)$ induces a(n essentially) bounded linear operators $U_\zeta=U_\zeta^\phi: \Gamma_c(H^X)\rightarrow \Gamma_c(H_{\zeta}^X)$ on the measureable sections, and
    \item  the operator norm map $\zeta\mapsto |U_\zeta|$ is $\mu_{\Gamma^1}$-measureable for each $\phi\in\C(\mathbb{G}^{\Gamma})$.
\end{enumerate}
\end{definition}
\noindent In the following, we will always assume that $\mathbb{G}^{\Gamma^1}$ is realized concretely on $\mathfrak{C}(\mathbb{G}^{\Gamma})$. 


We require $H_{(1,{\bf 1}_1)}^X=H^X$ and $U_{(1,{\bf 1}_1)}=\id$ on the unit decorated 1-graph. The strict associativity of $\mathbb{G}$ means that there are canonical ($\mu_{\Gamma}$-a.e.) identifications of bundles
\begin{gather*}
    \alpha^\Lambda_{(a,\gamma),(a',\gamma')}:\Lambda_{(a,\gamma)}(\Lambda_{(a',\gamma')}H^X) \cong \Lambda_{(a,\gamma)\cdot (a',\gamma')}H^X,\\ \alpha^\Lambda_{(a,\gamma),(a',\gamma')}\circ (U_{(a,\gamma)}U_{(a',\gamma')})=U_{(a,\gamma)\cdot(a',\gamma')}\circ \alpha^\Lambda_{(a,\gamma),(a',\gamma')}
\end{gather*}
over $X$, which implements the monoidality of the $\mathbb{G}^{\Gamma^1}$-representation $\Lambda$, where $\cdot$ is the composition of 2-gauge transformations defined in \S \ref{disc2hol}. 

\begin{rmk}\label{projrep}
    The above sheaf isomorphism $\alpha^\Lambda_{(a,\gamma),(a',\gamma')}$ is known as the \textit{module associator} for $\Lambda$. They are required to satisfy the \textit{module pentagon} relations \cite{etingof2016tensor,Bartlett:2009PhD,Bartsch:2022mpm,Bartsch:2023wvv,Delcamp:2023kew} against the tensor product associator on $\mathfrak{C}(\mathbb{G}^{\Gamma})=\C(\mathbb{G}^{\Gamma})$. In the context of lattice 2-gauge theory, it is directly induced by the modifications $m:\operatorname{hAd}^{-1}\circ\operatorname{hAd}^{-1}\Rrightarrow \operatorname{hAd}^{-1}$ described in \S \ref{weak2gauthy}, and hence can in general have non-identity components in weak 2-Chern-Simons theory. In the strict theory, they can be normalized to a $U(1)$-phase.
\end{rmk}

There is an analogous construction for the {\it right} 2-gauge transformation $\mathrm{P}: \mathbb{G}^{\Gamma^1}\times  \mathfrak{C}(\mathbb{G}^{\Gamma}) \rightarrow\mathfrak{C}(\mathbb{G}^{\Gamma})$, which is nothing but the left action $\Lambda$ of the opposite $(\mathbb{G}^{\Gamma^1})^\text{op}$ with the sources and targets swapped. We represent this action by the {\bf contragredient 2-representation} $\mathrm{P}$ of $\Lambda$. All constructions in the following have an adjoint counterpart, hence we shall mainly focus on $\Lambda$.

We shall return to the 2-gauge transformations in \S \ref{quantum2gautransfo}, where we will make $\mathbb{G}^{\Gamma^1}$ into an additive (Hopf) measureable category, and endow $\Lambda$ with the structure of a monoidal action functor.


\section{Categorical quantum 2-graph operators}\label{quantumcategory}
Let us now consider what sort of structure the 2-graph operators comes equipped with. We start by studying the geometry of planar graphs on $\Sigma$ with no self-intersections, where we keep track of faces. The goal of this section is to extend the \textbf{Definition \ref{coproducts}} of the "local" coproducts to the geometric setting on the 2-graph $\Gamma$.

\subsection{Geometry of 2-graphs}\label{2graphcoprod}
In analogy with the 3d case, the (partial) coproduct(s) on 2-graph operators should be defined from  pulling back the attaching operations on the 2-graph $\Gamma$. Recall that these attaching operations are modelled by a 2-groupoid structure $\cup_v,\cup_h$ in \S \ref{2graphdef}.

To describe them, we first describe the product structures on the 2-holonomies $\mathbb{G}^{\Gamma}$ (see \cite{Martins:2006hx,Bullivant:2019tbp,Bochniak:2020vil}),
\begin{equation*}
    \cdot_h:\mathbb{G}^{\Gamma}\times \mathbb{G}^{\Gamma} \rightarrow \mathbb{G}^{\Gamma},\qquad \cdot_v:\mathbb{G}_1^{\Gamma^2} \times_{G^{\Gamma^1}} \mathbb{G}_1^{\Gamma^2} \rightarrow \mathbb{G}_1^{\Gamma^2}\,,
\end{equation*}
where $\mathbb{G}_1=(\mathsf{H}\rtimes G)$ and $\mathbb{G}_1^{\Gamma^2} \times_{G^{\Gamma^1}} \mathbb{G}_1^{\Gamma^2}$ denotes the pullback along the source/target maps $s,t: (\mathsf{H}\rtimes G)^{\Gamma^2}\to G^{\Gamma^1}$ on the decorated 2-graphs. 

More precisely, if the face $(e,f)\in\Gamma^2$ is given by the {\it horizontal} gluing $\cup_h$ of the half-faces $(e_1,f_1),(e_2,f_2)$, then we put 
$$ ((h,b)\cdot_h(h',b'))_{(e,f)} =(h_{e_1},b_{f_1})\cdot_h(h_{e_2}',b_{f_2}') = (h_{e_1}h'_{e_2},b_{f_1}(h'_{e_1}\rhd b'_{f_2})).$$ Similarly, if $(e,f)$ is given by the \textit{vertical} gluing $\cup_v$ of the half-faces such that the target edge $e'_1$ of $f_1$ coincides with the source edge $e_1$ of $f_1$, then we have $$((h,b)\cdot_v(h',b'))_{(e,f)} = (h_{e_1},b_{f_1})\cdot_v(h'_{e_2},b'_{f_2}) = (h_{e_1},b_{f_1}b'_{f_2}).$$ The pullback $\mathbb{G}_1^{\Gamma^2} \times_{G^{\Gamma^1}} \mathbb{G}_1^{\Gamma^2}$ means that $\cdot_v$ is only well-defined if $h_{e_1'}=h_{e_2}$.  Now if $(e_i,f_i)$ for $i=1,\dots,4$ are 2-graphs which are mutually horizontally/vertically composable, then \eqref{graphinterchange} in conjunction with the strict interchange law in $\textbf{B}\mathbb{G}$, we have the interchange relation
\begin{equation*}
    ((h_{e_1},b_{f_1})\cdot_h(h_{e_2},b_{f_2}))\cdot_v((h_{e_3},b_{f_3})\cdot_h(h_{e_4},b_{f_4})) = ((h_{e_1},b_{f_1})\cdot_v(h_{e_3},b_{f_3}))\cdot_h((h_{e_2},b_{f_2})\cdot_v(h_{e_4},b_{f_4}))
\end{equation*}
for the decorated 2-graphs. Equality is achieved on-shell of the 2-flatness condition.  

\medskip

Pulling back the products $\cdot_h,\cdot_v$ as in \textbf{Definition \ref{coproducts}}, we can then define the desired geometric,  coproducts:
\begin{equation*}
    \Delta_h: {\mathfrak{C}}(\mathbb{G}^{\Gamma})\rightarrow\mathfrak{C}(\mathbb{G}^{\Gamma})\times\mathfrak{C}(\mathbb{G}^{\Gamma}),\qquad \Delta_v: {\mathfrak{C}}(\mathbb{G}^{\Gamma})\rightarrow\mathfrak{C}(\mathbb{G}^{\Gamma})\times^{\C(G^{\Gamma^1})}\mathfrak{C}(\mathbb{G}^{\Gamma})\,,
\end{equation*}
where $\C(\mathbb{G}^{\Gamma})\times^{\C(G^{\Gamma^1})}\mathfrak{C}(\mathbb{G}^{\Gamma})$ denotes the pushout. In analogy with \eqref{partialcoprod}, the 2-groupoid identity $\id_e$, which can be viewed as a constant PL homotopy of the edge $e$, allows us to view all faces $f\in\Gamma^2$ as "composite" and make use of the formulas given in \textbf{Definition \ref{2graphfusion}} below.

\medskip
 
To describe them more explicitly, we make use of the \textbf{localized 2-graph operators} $\phi_{(e,f)}$ of a given measureable sheaf $\phi$, whose global sections are multiplied by the characteristic function (ie. a Kronecker delta) at the single 2-graph face $(e,f)\in\Gamma^2$.

\begin{definition}\label{2graphfusion}
    Let $\phi_{(e,f)}$ denote the 2-graph operator localized on $(e,f)\in\Gamma^2$. The \textbf{2-graph coproducts} $\Delta_h,\Delta_v$ on $\mathfrak{C}(\mathbb{G}^{\Gamma})$ read, in Sweedler notation,
    \begin{align*}
    &\Delta_h(\phi_{(e,f)}) = \bigoplus_h \delta_{t(e_1),s(e_2)}(\phi_{(1)})_{(e_1,f_1)}\times(\phi_{(2)})_{(e_2,f_2)},\qquad (e,f) = (e_1,f_1)\cup_h(e_2,f_2),\\ &\Delta_v(\phi_{(e,f)}) = \bigoplus_v\delta_{e_2,e_1\ast \partial f_1} (\phi_{(1)})_{(e_1,f_1)} \times(\phi_{(2)})_{(e_2,f_2)},\qquad (e,f)=(e_1,f_1)\cup_v(e_2,f_2),
\end{align*}
where "$\bigoplus_h$" denotes a direct sum over all 2-graph operators such that we have a sheaf isomorphism
\begin{equation*}
    \bigoplus_h \phi_{(1)}(\{(h_{e_1},b_{f_1})\}_{(e_1,f_1)})\otimes \phi_{(2)}(\{(h_{e_2},b_{f_2})\}_{(e_2,f_2)}) \cong \phi(\{(h_{e_1},b_{f_1})\cdot_h(h_{e_2},b_{f_2})\}_{(e,f)})
\end{equation*}
over $X$. Similarly for "$\bigoplus_v$". 
\end{definition}
\noindent In the following, it would be convenient to use the shorthand
\begin{equation}
    \bigoplus_h\phi_{(1)}\times\phi_{(2)} = \phi_{(1)}\times_h\phi_{(2)},\qquad \bigoplus_v\phi_{(1)}\times\phi_{(2)} = \phi_{(1)}\times_v\phi_{(2)}.\label{shorthand}
\end{equation}
Clearly, we recover \textbf{Definition \ref{coproducts}} when $\Gamma=D^2$ is a PL 2-disc --- ie. it consists of a single face $(e,f)$ and a single vertex $v$. 

\begin{proposition}\label{cointerchangeprop}
    These coproducts are strict coassociative, and satisfy the cointerchange \eqref{cointerchange}.
\end{proposition}
\begin{proof}
    These follow directly from properties of the decorated 2-graphs.
\end{proof}


\begin{rmk}
    The geometric interpretation of the cointerchange relation \eqref{cointerchange} is the consistency of decomposing a 2-graph operator in two different ways, when it is localized on a 2-graph obtained by gluing four composable faces. The vertex of this gluing is precisely a \textit{triple point} in the handlebody decomposition of 3-manifolds \cite{Sakata2022-il}. By the 2-flatness condition, an invertible homotopy $\beta$ will be assigned to witness \eqref{cointerchange}, which we have chosen to be the identity (this is doable in the strict case; for weak 2-Chern-Simons theory, this homotopy $\beta$ will have non-identity components). This is how 2-group gauge theory can detect the triple linking of surfaces; see also \S 7 of \cite{PUTROV2017254}.
\end{rmk}

Following {\bf Definition \ref{2graphfusion}}, we shall see in the following how we can introduce a 2-group version of the Fock-Rosly Poisson bracket, which will lead to a quantum deformation of these coproducts.

\subsubsection{Combinatorial 2-group Fock-Rosly Poisson bracket}
We now study a higher but discrete version of the Fock-Rosly Poisson structure, and see how it helps in defining the quantum deformation of the measureable sections "$\C_q(\mathbb{G}^{\Gamma})$" modelling the 2-graph operators. We shall see how the data of the 2-Chern-Simons action deforms these coproduct structures, and in particular equips $\C_q(\mathbb{G}^{\Gamma})$ with the structures of a \textbf{Hopf measureable cocategory}. The use of Hopf (co)categories to construct 4d TQFTs is not a new concept \cite{Crane:1994ty,Pfeiffer2007,Baez:1995xq}, but we provide here an explicit construction from the underlying 4d topological 2-Chern-Simons action.



As in the usual 3d Chern-Simons case, the 2-Chern-Simons action determines a {\it classical 2-$r$-matrix} \cite{Bai_2013} of odd degree 1,
\begin{equation*}
    r=r' - D_tr_0 \in (\G\otimes \G)_1,\qquad D_{t}r=(t\otimes 1 - 1\otimes t)r=0,
\end{equation*}
where $r_1\in \h\otimes\h$ with
\begin{equation*}
    D_tr_1 = (t\otimes 1 + 1\otimes t)r_1.
\end{equation*}
The symmetric part comes from the symplectic form
\begin{equation*}
    \omega(A,B) =\frac{k}{2\pi} \int_\Sigma \langle \delta B,\delta A\rangle
\end{equation*}
of 2-Chern-Simons theory, while the skew-symmetric part can be read off from the interaction terms $\langle B,[A,A]-tB\rangle$ \cite{chen:2022}. 

This identifies a Lie 2-algebra cobracket $\delta = [-,r]$ \cite{Bai_2013}, from which one can construct a bivector field $\Pi\in \frak{X}^2$ on $\mathbb{G}$. This bivector field $\Pi$ can be shown to be multiplicative, with respect to both the group and groupoid structures, precisely when the cocycle condition for $\delta$ is satisfied \cite{Chen:2023integrable}. This makes $(\mathbb{G},\Pi)$ into a \textit{Poisson-Lie 2-group} \cite{Chen:2012gz}.

\begin{rmk}\label{deg0cybe}
    The solutions $r\in\G^{\otimes 2}_1$ to the 2-graded classical Yang-Baxter equations $\llbracket r,r\rrbracket=0$ on a strict Lie 2-algebra $\G = \h\xrightarrow{t}\g$, where $\llbracket-,-\rrbracket$ is the \textit{graded} Schouten bracket, was analyzed in \cite{Bai_2013}. It was found that the image $(1\otimes t)r = (t\otimes 1)r$ of $r$ under the Lie 2-algebra structure map $t:\h\rightarrow\g$ is a solution of the \textit{ordinary} classical Yang-Baxter equations on $\g$. Based on this observation, it was shown in \cite{Bai_2013,chen:2022} that: (i) there is a one-to-one correspondence between ordinary classical $r$-matrices for a simple Lie algebra $\g$ and classical 2-graded $r$-matrices for its inner automorphism Lie 2-algebra $\G = \operatorname{inn}\g = \g\xrightarrow{\id}\g$; (ii) there is a one-to-one correspondence between ordinary classical $r$-matrices for the \textit{semidirect product} $V\rtimes \g$, where $V$ is an Abelian $\g$-module, and classical 2-graded $r$-matrices for $\G = V\xrightarrow{0}\g$.
\end{rmk}

It is also worth mentioning that the quadratic 2-Casimirs were studied in \cite{chen:2022}, while \cite{Cirio:2012be} examined solutions of the classical 2-Yang-Baxter equations in the context of \textit{weak} Lie 2-algebras.

\medskip

In analogy with the usual Drinfel'd-Jimbo deformation quantization, the data of the classical 2-$r$-matrix is expected to give rise to quantum deformations of \textit{both} the product and the coproduct on $\mathfrak{C}(\mathbb{G})$. This deformation is controlled to first order by the classical 2-$r$-matrix  $r\in\mathfrak{G}_1^{\otimes 2}$ and its graded transpose $r^T$,
\begin{equation*}
    r = \sum r_1\otimes r_2,\qquad r^T_{12} = \sum r_2\otimes r_1.
\end{equation*}
By interpreting elements of the Lie 2-algebra $\frak{G}$ as derivations on functions $C(\mathbb{G})$ of $\mathbb{G}$ (see \textbf{Proposition \ref{derivationaction}} and \cite{Chen:2012gz} for more details), we can extend it to act on measureable global sections. 

This allows us to introduce the following combinatorial Poisson brackets of Fock-Rosly type. 
\begin{definition}
    The \textbf{combinatorial 2-group Fock-Rosly Poisson bracket} on a measureable sheaf $\Gamma_c(H^X)$ over $X=(\mathbb{G}^{\Gamma},\mu_{\Gamma})$ is
    \begin{align}
    & \{\xi_{(e,f)},\xi_{(e',f')}\} = \frac{2\pi}{k}(-\cdot-)\left(\delta_{t(e),s(e')} r (\xi_{(e,f)} \xi_{(e',f')}) - \delta_{s(e),t(e')}(\xi_{(e,f)} \xi_{(e',f')}) r^T\right),\label{2fockrosly}
\end{align}
where $\xi_{(e,f)}$ denotes a global section of a localized 2-graph operator $\phi_{(e,f)}$; equivalently, it is the localization of a global section $\xi\in \Gamma_c(H^X)=\phi$ to the face $(e,f)\in\Gamma^2$. More precisely, this is $\chi_{(e,f)}^{[2]}\xi$ where $\chi^{[2]}_{(e,f)}$ is the characteristic function on $(e,f)$.
\end{definition}
\noindent The (strict) Jacobi identity of $\{-,-\}$ follows from the 2-graded classical Yang-Baxter equations satisfied by $r$ \cite{Bai_2013}.

Suppose $\Gamma^2$ is the 2-graph complex underlying a 2-disc, such that it consist of a single face and a single edge starting and ending on a single vertex. Then the brackets \eqref{2fockrosly} recover the corresponding "fundamental 2-graded Poisson structure" $\{-,-\}$ on the function algebra of $\mathbb{G}$ \cite{Chen:2012gz,Chen:2023integrable}.


Let $(e,f)\cup_{h,v}(e',f')\in \Gamma$ denote horizontal//vertical composite polygonal faces from $(e,f),(e',f')$. Taking inspiration from \eqref{heis-bracket}, it will be useful to rewrite the bracket in the following way,
\begin{equation}
   \{\xi_{(e,f)},\xi_{(e',f')}\} \equiv \frac{2\pi}{k}(-\cdot -)[r,\Delta_h(\xi_{(e,f)\cup_{h}(e',f')})]_c,\label{2fockrosly-heis}
\end{equation}
where $\Delta_h$ is the horizontal 2-graph coproduct introduced in \textbf{Definition \ref{2graphcoprod}} and $[-,-]_c$ is the commutator with respect to the (commutative) product on the space of sheaves on $\Gamma_c(H^X)$. 

Of course, if the two faces $(e,f),(e',f')$ are too "far apart"/delocalized --- then we interpret $ \{\xi_{(e,f)},\xi_{(e',f')}\} =0$ as the zero section over $X=\mathbb{G}^{\Gamma}$. We shall use this Poisson bracket to introduce a deformation quantization of $\mathfrak{C}(\mathbb{G}^{\Gamma})$ in \S \ref{hopfopalg}.

\subsubsection{Antipodes and 2-$\dagger$ unitarity}\label{2dagger}
With the introduction of the coproducts above, we now leverage the geometry of 2-graphs once more to define the \textit{antipode} functor $S_\mathrm{v,h}$ on $\mathfrak{C}(\mathbb{G}^{\Gamma})$. Specifically, $S$ is induced from \textit{orientation reversal}, as inspired from \cite{Alekseev:1994pa,Delcamp:2016yix}.

Following {\bf Example 5.5} of \cite{ferrer2024daggerncategories}, we take the 2-graph $\Gamma^2$ as a framed piecewise-linear (PL) 2-manifold. The PL-group $\operatorname{PL}(2) = O(2) = SO(2)\rtimes \bbZ_2$ tells us directly what the 2-dagger structure on $\Gamma$ is --- $\dagger_2$ is given by the orientation reversal $\bbZ_2$ subgroup and $\dagger_1$ is a $2\pi$-rotation in the framing $SO(2)$-factor. 

Crucially, these daggers are involutive $\dagger_2^2 = \id,~\dagger^2_1 \cong \id$ and they {\it strongly commute} 
\begin{equation}
    \dagger_2\circ \dagger_1 = \dagger_1^\text{op}\circ\dagger_2\label{commutedagger}.
\end{equation}
For edges in $\Gamma^1$, on the other hand, $\dagger_2$ implements an orientation reversal $e^{\dagger_2} = \bar e$ while $\dagger_1$ rotates its framing: if $\nu$ is a trivialization of the normal bundle along the embedding $e\hookrightarrow \Sigma$, then $(e,\nu)^{\dagger_1} =(e,-\nu)$. Let us denote this frame rotation by the shorthand $e^T = (e,-\nu)$.

We denote the induced maps on the measureable Lie 2-groups by
$X=\mathbb{G}^{\Gamma}\xrightarrow{\sim} \overline{X}^{\mathrm{h,v}} = \mathbb{G}^{(\Gamma^2)^{\dagger_2,\dagger_1}}$. Recall the action of the 2-gauge transformations $\Lambda$ given by bounded linear operators $U$ form \S \ref{covarrep}.
    \begin{definition}\label{unitary2hol}
Define the \textbf{antipode functors}
\begin{equation}
  S_v:\C(\mathbb{G}^{\Gamma})\rightarrow  \C(\mathbb{G}^{\Gamma})^{{\text{op}}},\qquad S_h: \C(\mathbb{G}^{\Gamma})\rightarrow \C(\mathbb{G}^{\Gamma})^{\text{m-op,c-op}},\label{antilinear}
\end{equation} 
      where "$-^{{\text{op}}}$" denotes taking the opposite cocategory, and "$-^{\text{m-op,c-op}}$" denotes taking the reverse monodal/comonoidal structure. The \textbf{2-$\dagger$ unitarity of the 2-holonomies} is the property that:
        \begin{itemize}
            \item For each 2-graph operator in $\mathfrak{C}(\mathbb{G}^{\Gamma})$, we have stalk-wise for each $\mathrm{z}=\{(h_e,b_f)\}_{(e,f)}\in\mathbb{G}^{\Gamma}$,
            \begin{align*}
        (S_h\phi)_{\mathrm{z}} &= \bar\phi_{\mathrm{z}^{\dagger_1}},\qquad \mathrm{z}^{\dagger_1}=\{(h_{e^{\dagger_1}},b_{f^{\dagger_1}})\}_{(e,f)}\\ (S_v\phi)_{\mathrm{z}} &= \phi^T_{\mathrm{z}^{\dagger_2}},\qquad \mathrm{z}^{\dagger_2} = \{(h_{e^{\dagger_2}},b_{f^{\dagger_2}})\}_{(e,f)}
    \end{align*}
        where $\bar\phi$ is the measureable field $(H^*)^X$ complex linear dual to $\phi$, and $\phi^T$ is the same sheaf underlying $\phi\in\C(\mathbb{G}^{\Gamma})$ but equipped with the adjoint sheaf morphisms. 
    \item For the 2-gauge transformation operators $U_\zeta$ introduced in \S \ref{covarrep}, we have pointwise for each $\zeta=\{(a_v,\gamma_{e}\}_{(e,v)}\in\mathbb{G}^{\Gamma^1}$ (recall $e^T = (e,-\nu)$ denotes a frame rotation of an edge),
    \begin{align*}
            U_{\tilde S_h\zeta} &= \bar U_{\zeta^{\dagger_1}},\qquad \zeta^{\dagger_1}=\{ (a_{v'}\xrightarrow{\gamma_{\bar e}}a_v)\}_{(a,v)},\\ U_{\tilde S_v\zeta}&= U^\dagger_{\zeta^{\dagger_2}},\qquad \zeta^{\dagger_1}=\{(a_{v}\xrightarrow{\gamma_{e^T}}a_{v'})\}_{(a,v)}
    \end{align*}
    where $\bar U_\zeta$ is the complex conjugate operator and $U_\zeta^\dagger$ is the (Hermitian) adjoint.
         \end{itemize}
    \end{definition}
\noindent These 2-$\dagger$-unitarity properties will come into play once again in \S \ref{*op}.

In the following section, we shall introduce a quantum deformation of the 2-graph operators from $\C(\mathbb{G}^{\Gamma})$ to $\C_q(\mathbb{G}^{\Gamma})$, which promotes the antipode $S_h$ in \eqref{antilinear} to be generally non-involutive.

\subsection{Quantum deformation on the lattice}\label{categoricalquantumdeformations}
Recall in the case of the coordinate ring $C(G)$ for an ordinary Lie group $G$, a quantum deformation $\star$ of its commutative product can be introduced from the data of a classical $r$-matrix on $\g=\operatorname{Lie}G$, such that the $\star$-commutator $ [-,-]_\star$ is controlled to first order in $\hbar$ by \eqref{2fockrosly-heis} \cite{Woronowicz1988,Majid:1996kd,Semenov1992}. 

We are now tasked with two goals:
\begin{enumerate}
    \item  \textit{quantize} the classical 2-$r$-matrix $r$ on $\G$ to a quantum 2-$R$-matrix $R$, which act as quantum deformations of the structure \eqref{2fockrosly},
\begin{equation*}
    R\sim 1+ i\hbar r +o((i\hbar)^2)
\end{equation*}
(see also \textit{Remark \ref{2rmatrixboostrap}} later).
    \item \textit{categorify} the deformed $\star$-product to the entire measureable category $\mathfrak{C}(\mathbb{G}^{\Gamma})$.
\end{enumerate}
The first point, for $C(\mathbb{G})$, was studied in \cite{Chen:2023tjf}, which led to the motivating example for the notion of a 2-$R$-matrix there. Let us briefly recall this result.

\subsubsection{Hopf 2-algebras and Baez-Crans 2-vector spaces}\label{coprod}
A \textbf{Baez-Crans 2-vector space} $V\in \mathsf{2Vect}^{BC}=\operatorname{Cat}_\mathsf{Vect}$ is a category internal to $\mathsf{Vect}$ \cite{Baez:2003fs,Kemp_2025}. The central motivation for considering this setting is that Baez-Crans 2-vector spaces are the backdrop for the sort of $L_\infty$-(bi)algebras \cite{Chen2z:2023,Bai_2013,Chen:2013} and Hopf $A_\infty$-algebras \cite{Wagemann+2021,Chen:2023integrable} that arise in principal higher-bundles \cite{Baez:2004in,schreiber2013connectionsnonabeliangerbesholonomy,Nikolaus2011FOUREV,Wockel2008Principal2A,Kim:2019owc} and higher-gauge theory \cite{Baez:2002jn,Soncini:2014,Song:2021,chen:2022}. 

Indeed, a Lie 2-algebra is nothing but a Lie algebra object in $\mathsf{2Vect}^{BC}$ \cite{Baez:2003fs}, and this inspires the following notion introduced in \cite{Chen:2023tjf}.
\begin{definition}
    A \textbf{Hopf 2-algebra} $A$ is a Hopf algebra object in $\mathsf{2Vect}^{BC}$.
\end{definition}
\noindent A crucial characterization of Baez-Crans 2-vector spaces was given in \cite{Baez:2003fs}.
\begin{proposition}\label{2chain}
    There is an equivalence $\mathsf{2Vect}^{BC}\simeq \mathsf{2Ch}(\mathsf{Vect})$ with the 2-truncated 2-category of 2-term chain complexes.
\end{proposition}
\noindent One side of the equivalence, which associates a Baez-Crans 2-vector space $V=V_1\overset{s}{\underset{t}{\rightrightarrows}} V_0$ with a 2-term chain complex $V_{-1}\xrightarrow{\mu_1}V_0$, is given by
$$V_{-1} = \operatorname{ker}s\subset V_1,\qquad \mu_1(y) = x'-x,$$
where $y:x\rightarrow x'\in V_1$ is a morphism. 

Due to \textbf{Proposition \ref{2chain}}, the coproduct $\Delta= (\Delta_0,\Delta_1): A[1]\rightarrow A^{\otimes 2}$ on a Hopf 2-algebra $A=A_{-1}\xrightarrow{\mathsf{t}}A_0$ is a \textit{differential graded} map, satisfying
    \begin{equation}
    (\mathsf{t}\otimes 1 + 1\otimes \mathsf{t})\circ \Delta_{-1} = \Delta_0\circ \mathsf{t},\qquad (\mathsf{t}\otimes 1-1\otimes  \mathsf{t})\circ\Delta_0 =0.\label{coprodequiv}
\end{equation}
The structure of the classical 2-$r$-matrix on $\G$ \cite{Bai_2013,Chen:2023integrable} then suggests the following quasitriangularity structure.\footnote{In the ordinary quantum groups case, there is strictly speaking an obstruction to deformation quantization \cite{Drinfeld:1986in,Semenov1992}. These are known to live as a certain degree-3 cohomology class of the underlying *-algebra, but such obstructions vanish for solution of the classical Yang-Baxter equation \cite{LAZAREV1996141}. Similarly here: since the Lie 2-algebra cocycle $\delta$ under consideration is determined by a solution to the 2-graded classical Yang-Baxter equations  \cite{Bai_2013}, we expect such obstructions to vanish.}
\begin{definition}\label{2Rmat}
    Let  $A=A_{-1}\xrightarrow{\mathsf{t}}A_0$ denote a Hopf 2-algebra. A \textbf{2-$R$-matrix} $R$ for $A$ is an element $R\in A_0\otimes A_{-1}\oplus A_{-1}\otimes A_0$ which satisfies the following graded conditions:
    \begin{itemize}
        \item the \textbf{intertwining relation}
        \begin{equation*}
            \Delta^\text{op}(\mathrm{z})R = R\Delta(\mathrm{z}),\qquad\forall~ \mathrm{z}\in A,
        \end{equation*}
        \item the \textbf{equivariance condition}
        \begin{equation}
            (\mathsf{t}\otimes 1-1\otimes \mathsf{t})R =0,\label{Rmatequiv}
        \end{equation}
        and
        \item \textbf{quasitriangularity conditions}
            \begin{equation}
        (\Delta\otimes 1)R = R_{13} R_{12},\qquad  (1\otimes\Delta)R = R_{13}R_{23},\label{quasit}
    \end{equation}
    \item the \textbf{antipode conditions}
    \begin{equation*}
        ((1\otimes S)R)_{12}\cdot R_{23} = R_{12}\cdot ((S\otimes 1)R)_{23} = \eta_1\eta_3,
    \end{equation*}
    where the legs labelled by "2" are contracted, and
    \item the obvious counit conditions $(1\otimes\epsilon)R=(\epsilon\otimes 1)R=\eta$.
    \end{itemize}
\end{definition}
\noindent It is not hard to show that $\bar R=(\mathsf{t}\otimes1)R=(1\otimes\mathsf{t})R$ satisfies
\begin{equation*}
    \bar\Delta_0^\text{op}(x)\bar R = \bar R\bar\Delta_0(x),\qquad \forall x\in A_0,
\end{equation*}
where $\bar \Delta_0 = (\mathsf{t}\otimes 1)\circ\Delta_0=(1\otimes \mathsf{t})\circ\Delta_0$. In fact, at degree-0, $(A_0,S_0,\bar\Delta_0,\bar R)$ is an ordinary Hopf algebra equipped with a $R$-matrix. 

Clearly, by thinking of $A$ as a category internal to $\mathsf{Vect}$, then a(n invertible) 2-$R$-matrix induces, through the conjugation action, a bimonoidal automorphism
\begin{equation*}
    \mathcal{R} = \operatorname{ad}_R: A\otimes A\rightarrow A\otimes A,\qquad \mathcal{R}\circ\Delta = \Delta^\text{op}.
\end{equation*}
This is the crucial insight that we shall leverage below.

\begin{rmk}\label{2rmatrixboostrap}
 The condition \eqref{Rmatequiv} above implies that a 2-graded $R$-matrix for $\mathfrak{C}(\mathbb{G})$ can be "boostrapped" from a $R$-matrix for $C(G)$ at degree-0. Further, if $\mathbb{G}=\operatorname{Inn}G$ were the inner automorphism 2-group of a compact simple Lie group $G$ with $t=\id$, then 2-$R$-matrices for $\operatorname{Inn}G$ has a direct bijective correspondence with quantum $R$-matrices for $C(G)$ through \eqref{Rmatequiv}. As is well-known, solutions of the quantum Yang-Baxter equations have been extensively studied since the 80's \cite{Jimbo:1985zk,Drinfeld:1986in,,Majid:1996kd,Woronowicz1988,reshetikhin2010lecturesintegrability6vertexmodel,Zhang:1991,Larsson:2002pk,sym15091623}. This observation is useful for constructing explicit examples of 2-Chern-Simons TQFTs.
\end{rmk}

The main result in \cite{Chen:2023tjf} verifies that a quasitriangular 2-$R$-matrix on $A$ endows the 2-category of 2-representations of $A$ a braided monoidal structure. {\it Example 2.12.1} there states that the $\bbC$-valued function algebra $A=C(\mathbb{G}) = C(G)\xrightarrow{t^*}C(\mathsf{H})$ is a commutative {Hopf 2-algebra}. 

Moreover, Appendix B of \cite{Chen:2023tjf} proves the following.
\begin{proposition}\label{hopf2-algdeformation}
    There exists a non-commutative quasitriangular Hopf 2-algebra equipped with a 2-$R$-matrix $R$, denoted $(C_q(\mathbb{G});R)$, which at first order in $\hbar$ reduces to the Poisson-Lie 2-group $(C(\mathbb{G}),\{-,-\})$ \cite{Chen:2012gz}. Further,  the Poisson bracket $\{-,-\}$ is canonically induced by a classical 2-$r$-matrix $r$ on the associated Lie 2-algebra $\G$ \cite{Bai_2013}.
\end{proposition}
\noindent The associativity of $\star$ thus follows from the strict Jacobi identity satisfied by the combinatorial 2-Fock-Rosly brackets \eqref{2fockrosly}.

\begin{rmk}\label{quasihopf}
        In the weakly-associative case, we must deal semiclassically with (at least) a quasi-Lie 2-bialgebra, namely a Lie 2-bialgebra with non-trivial cohomotopy map \cite{Chen:2013}. It is known \cite{Chen:2012gz} that such a structure integrates to a \textit{quasi-Poisson-Lie 2-group}, which has equipped a multiplicative trivector field $\eta$ witnessing the Jacobi identity for the graded Poisson brackets. From the above construction, it then stands to reason that $\eta$ gives rise to an associator for the quantum deformed monoidal structure on $\mathfrak{C}(\mathbb{G}^{\Gamma})$ (see \S \ref{quantummonoidal} later). 
\end{rmk}


\begin{tcolorbox}[breakable]
    \subsubsection*{Motivation for categorification.} The reader may wonder why we have chosen to work in the higher-categorical context $\mathfrak{C}(\mathbb{G})$, instead of just working with the simpler function Hopf 2-algebra $C(\mathbb{G})$. Aside from the categorical ladder philosophy  \cite{Baez:1995xq} (fig. \ref{fig:1}), there are multi-fold reasons:
    \begin{enumerate}
        \item Mathematically, categorification bypasses many issues suffered by the Baez-Crans 2-vector spaces $\mathsf{2Vect}^{BC}$ and the Morita context it defines (see \S \ref{hopf2algproblems}),
        \item Physically, Hopf 2-algebras based on Baez-Crans 2-vector spaces is \textit{not} sufficient to describe correlation functions of Wilson surfaces and 2-holonomies, even in the strict case (see \S \ref{2holstates}),
        \item Practically, categorification is necessary in order to detect $k$-invariants and higher-codimensional defects arising from higher-gauge symmetry; for instance, the module associators $\alpha^\Lambda$ (see \textit{Remark \ref{projrep}}) would not be present otherwise.
    \end{enumerate}
    
\end{tcolorbox}



\subsubsection{Lifting the quantum product to measureable sheaves}\label{quantummonoidal}
To categorify the above structures, we need to invoke the main result of \cite{Bursztyn2000DeformationQO}:
\begin{theorem}\label{deformedsection}
    Let $X$ denote a Riemannian manifold. A fixed $\star$-product on $C(X)$ determines uniquely (up to isometry on $X$) a $\star$-product on the smooth sections $\Gamma(X,E)$ of a vector bundle $E\rightarrow X$. The resulting sheaf of $\star$-deformed global sections, denoted $\Gamma(X,E)[[\hbar]]$, is a $C(X)\otimes_\bbC \bbC[[\hbar]]$-module $C^*$-algebras.
\end{theorem}
\noindent Since the \textit{classical} 2-graph operators $\phi\in\mathfrak{C}(\mathbb{G}^{\Gamma})$ are modelled as modules over the structure sheaf $\mathcal{O}_X= C(\mathbb{G}^{\Gamma})$ on $X=(\mathbb{G}^{\Gamma},\mu_{\Gamma})$, then once we know the $\star$-deformation $C(\mathbb{G}^{\Gamma})\rightsquigarrow C_q(\mathbb{G}^{\Gamma})$ through \eqref{2fockrosly} and \textbf{Proposition \ref{hopf2-algdeformation}}, we can leverage \textbf{Theorem \ref{deformedsection}} to categorify the $\star$-product and the 2-$R$-matrix.\footnote{Technically, we would like to use a generalization of \textbf{Theorem \ref{deformedsection}} to quasicoherent sheaves. It was noted in \cite{Bursztyn2000DeformationQO} that their results hold for sheaves given by generic projective modules.}

\medskip

Denote by $\Gamma_c(H^X)[[\hbar]]$ the sheaf of \textit{formal power series} in the global sections over $X=(\mathbb{G}^{\Gamma},\mu_{\Gamma})$. We now construct a categorical \textit{deformed tensor product} from the underlying $\star$-product. For this, it would be useful to recall the following general fact about sheaves \cite{Forster1967}.
\begin{theorem}
    There is a canonical isomorphism $\Gamma(X,\cF\otimes_{\cO_X}\cF') \cong \Gamma(X,\cF)\otimes_{\cO_X}\Gamma(X,\cF')$ for any sheaves $\cF,\cF'$ over $X$.
\end{theorem}
\noindent When applied to sheaves of global sections in $\mathfrak{C}(\mathbb{G}^{\Gamma})$, this means that there are canonical isomorphisms
\begin{equation}
    \phi\otimes\phi'=\Gamma_c(H^X)\otimes\Gamma_c(H'^X)\cong \Gamma_c((H\otimes H')^X)\label{canoniso}
\end{equation}
of $C(X)$-modules for each $\phi,\phi'\in\mathfrak{C}(\mathbb{G}^{\Gamma})$, where $X=(\mathbb{G}^{\Gamma},\mu_{\Gamma})$.


\medskip

Our goal is to promote the canonical isomorphism \eqref{canoniso} to the quantum deformed case, with $q=e^{i\hbar}$, by using \textbf{Theorem \ref{deformedsection}}. This will require the assumption that there exists a "decategorification" $\lambda: \mathfrak{C}(\mathbb{G}^{\Gamma})\mapsto C_q(X)= C(X)\otimes\bbC[[\hbar]]$ sending measureable $C_q(X)$-module $\star$-algebras to $C_q(X)$. See \textbf{Definition \ref{hypH}} later for more details on this assumption.
\begin{definition}\label{ostarprod}
    The \textbf{deformed tensor product} is a monoidal structure $\ostar: \mathfrak{C}(\mathbb{G}^{\Gamma})\times\mathfrak{C}(\mathbb{G}^{\Gamma})\rightarrow \mathfrak{C}(\mathbb{G}^{\Gamma})$ on the 2-graph operators $\mathfrak{C}(\mathbb{G}^{\Gamma})$ such that 
    \begin{enumerate}
        \item we have natural sheaf isomorphisms
        \begin{equation}
            \Gamma_c(H^X)[[\hbar]]\ostar \Gamma_c(H'^X)[[\hbar]] \cong \Gamma_c((H\otimes H')^X)[[\hbar]],\label{deformedtensor}
        \end{equation}
        linear over $\bbC[[\hbar]]$, for all $\Gamma_c(H^X),\Gamma_c(H'^X)\in\mathfrak{C}(\mathbb{G}^{\Gamma})$ and
        \item $\lambda$ is "(strictly) monoidal", in the sense that on the essential image of $\ostar$, we have an isomrphism \begin{equation}
            \lambda(-\ostar-)\cong (\lambda -\star\lambda-)\label{ostardiagram}
        \end{equation}
        of algebra in $\mathsf{2Vect}^{BC}=\operatorname{Cat}_\mathsf{Vect}$.
    \end{enumerate}
    At the same time, we will assume that $\lambda$ fits in a commutative diagram for the \textit{coproducts}, analogous to the above. We call $\ostar$ the \textbf{lift} of $\star$ along the decategorification $\lambda$.
\end{definition}
In the undeformed case, we of course recover the usual tensor product $\ostar=\otimes$ and the commutative ring $C(X)$; see \textit{Remark \ref{E1modA}}. This will be important in \S \ref{semiclassical} later.

The first part of this definition allows us to define the $\star$-deformed product between global sections of \textit{any} two measureable sheaves, and the second part states that $\ostar$ is determined up to isomorphism by this $\star$-product. 
\begin{proposition}\label{sclimit}
    The natural sheaf isomorphism \eqref{deformedtensor} gives rise to a commutative square 
\begin{equation}
    \begin{tikzcd}
	{\mathfrak{C}(\mathbb{G}^{\Gamma})\times \mathfrak{C}(\mathbb{G}^{\Gamma})} & {\mathfrak{C}(\mathbb{G}^{\Gamma})} \\
	{\mathfrak{C}(\mathbb{G}^{\Gamma})\times \mathfrak{C}(\mathbb{G}^{\Gamma})} & {\mathfrak{C}(\mathbb{G}^{\Gamma})}
	\arrow["\ostar", from=1-1, to=1-2]
	\arrow[from=1-1, to=2-1]
	\arrow["\cong", Rightarrow, from=1-1, to=2-2]
	\arrow[from=1-2, to=2-2]
	\arrow["\otimes"', from=2-1, to=2-2]
\end{tikzcd},\nonumber
\end{equation}
where the vertical arrows are given by "evaluating" at $\hbar=0$; $(-)_0: \Gamma_c(H^X)[[\hbar]] \mapsto \Gamma_c(H^X)$.
\end{proposition}
\begin{proof}
    Let $\phi=\Gamma_c(H^X)[[\hbar]]$ and $\phi'=\Gamma_c(H'^X)[[\hbar]]$ be objects in $\mathfrak{C}(\mathbb{G}^{\Gamma})$. We have the natural isomorphism
    \begin{equation}
        (\phi\ostar\phi')_0 \stackrel{\eqref{deformedtensor}}{\cong} \Gamma_c((H\otimes H')^X) \cong \Gamma_c(H^X)\otimes \Gamma_c(H'^X)=(\phi)_0\otimes (\phi')_0
    \end{equation}
    provided by the universal property \eqref{canoniso} of the tensor product.
\end{proof}

\begin{rmk}\label{E1modA}
    Let us describe a simpler incarnation of $\lambda$ in a more concrete way. Consider the Kapranov-Veovodsky model of "2-vector spaces" $\mathsf{2Vect}=\mathsf{2Vect}^{KV}$ \cite{Kapranov:1994} given by fully-fualizable \textit{finite} semisimple linear categories $C\in\mathsf{2Vect}.$ Denote by $\mathsf{sepMor}$ the Morita bicategory of separable $\bbC$-algebras, it is well-known that there is an equivalence $\mathsf{sepMor}\simeq\mathsf{2Vect}$ given by sending $A\mapsto \operatorname{Mod}(A)$ to its category of modules \cite{etingof2016tensor}. Define the following {2-functor} (the subscript "gl" stands for "global")
    \begin{equation*}
        \lambda_\text{gl}: \mathsf{2Vect}\rightarrow \mathsf{sepAlg}_\bbC,\qquad C\simeq\operatorname{Mod}(A) \mapsto A
    \end{equation*}
    which is left-adjoint to the equivalence  $\mathsf{sepMor}\simeq\mathsf{2Vect}$. The analogue of the diagram \eqref{ostardiagram} in this context is saying that $\lambda_\text{gl}$ preserves the monoidal structure of a \textit{particular} algebra object $C\in\mathsf{2Vect}$, such that
    \begin{equation*}
        \lambda_\text{gl} (-\ostar -) = \lambda_\text{gl}(-)\star \lambda_\text{gl}(-),\qquad\begin{cases}
            \ostar: C\boxtimes C\rightarrow C \\ 
           \star: A\otimes A\rightarrow A 
        \end{cases}.
    \end{equation*}
    In other words, under the the Deligne tensor product, the algebra structure $\star: A\otimes A\to A$ lifts to its modules
    \begin{equation*}
        \ostar: \operatorname{Mod}(A)\boxtimes \operatorname{Mod}(A) \simeq \operatorname{Mod}(A\otimes A)\to \operatorname{Mod}(A)\,.
    \end{equation*}
    For us, we require a "sheafy", local-systems version of $\lambda_\text{gl}$, which sends $\mathcal{O}_X$-module algebras over $X$ to the structure sheaf $\mathcal{O}_X$ itself. The roles of $\mathfrak{C}(\mathbb{G}^{\Gamma})$ and its structure sheaf $\mathcal{O}_X= C_q(X)=C(X)\otimes\bbC[[\hbar]]$, for $q=e^{i\hbar}$, correspond respectively to the particular algebra objects $C\in\mathsf{2Vect}$ and $A\in\mathsf{Alg}_\bbC$.
\end{rmk}

\begin{definition}\label{hypH}
    \textbf{Hypothesis (H)} is the assumption that there exists a limit-preserving 2-functor $\mathsf{Meas}\rightarrow \mathsf{Vect}$ (possibly infinite-dimensional vector spaces) such that
    \begin{enumerate}
        \item it induces the decategorification $\lambda:\operatorname{Cat}_\mathsf{Meas}\rightarrow \operatorname{Cat}_\mathsf{Vect}=\mathsf{2Vect}^{BC}$,
        \item for any 2-graph complex $\Gamma$, the induced map $\lambda: \C_q(\mathbb{G}^{\Gamma})\mapsto C_q(\mathbb{G}^{\Gamma})$ (i) satisfies the conditions in \textbf{Definition \ref{ostarprod}}, and (ii) preserves the coproduct $\Delta_h$ and the antipode $S_h$.
    \end{enumerate}
\end{definition}
\noindent Hypothesis (H) is required in order for us to endow a {monoidal grading} (cf. \cite{SOZER2023109155}) to a bimonoidal (co)category internal to $\mathsf{Meas}$ given by a \textit{Hopf 2-algebra} $A\in\mathsf{2Vect}^{BC}$, while allowing us to keep additivity at all levels (morphisms, objects); see \S \ref{hopf2algproblems} for more details. 


\subsubsection{Categorical $R$-matrix}\label{catRmat}
Given the above setup, we shall leverage \eqref{ostardiagram} to define a \textit{categorical $R$-matrix}, also denoted by $R\in \C_q(\mathbb{G}^{\Gamma})\times \C_q(\mathbb{G}^{\Gamma})$, such that by \textbf{Definition \ref{hypH}} $\lambda$ sends it to the 2-$R$-matrix on $C_q(\mathbb{G}^{\Gamma})$ arising from the 2-Chern-Simons action. We can then write, as a slight abuse of notation, the following
\begin{equation}
   [\phi_{(e,f)},\phi_{(e',f')}]_{\ostar} = (-\otimes -)\Big(R \ostar \Delta_h(\phi_{(e,f)\cup_{h}(e',f')}) - \Delta^\text{op}_\mathrm{h}(\phi_{(e,f)\cup_{h}(e',f')})\ostar R^T\Big),\label{2quantumproduct-heis}
\end{equation}
to denote the \textit{Hilbert spaces} of sections obtained from localized 2-graph operators $\phi=\Gamma_c(H^X)[[\hbar]]$.

\medskip

We now consider some coherence conditions satisfied by the $R$-matrix $R$.  First, we will impose the natural compatibility against the pullbacks $\hat s^*,\hat t^*$ of the source and target maps $\hat s,\hat t: (\mathsf{H}\rtimes G)^{\Gamma^2}\rightrightarrows G^{\Gamma^1}$ on the 2-holonomies $\mathbb{G}^{\Gamma}$. By taking the following components
\begin{equation*}
    R = R|_{\C_q((\mathsf{H}\rtimes G)^{\Gamma^2})^{\times 2}},\qquad R_0=R|_{\C_q(G^{\Gamma^1})^{\times 2}}
\end{equation*}
of the higher $R$-matrix, this naturality condition is expressed as 
\begin{align}
    & (R\ostar-)\circ (\hat s^*\times \hat s^*)  =  (\hat s^*\times \hat s^*)\circ (R_0\ostar -),\nonumber\\
    & (R\ostar -)\circ (\hat t^*\times \hat t^*)=(\hat t^*\times \hat t^*) \circ (R_0\ostar-), \label{catRnaturality}
\end{align}
where we have abused notation and denote by the canonical monoidal structure on $\C_q(\mathbb{G}^{\Gamma})\times \C_q(\mathbb{G}^{\Gamma})$ also by $\ostar$. 



Recall the strictly coassociative and cointerchanging 2-graph coproducts on $\mathfrak{C}(\mathbb{G}^{\Gamma})$ introduced in \textbf{Definition \ref{2graphfusion}}. Their quantum versions, denoted also by $\Delta_h,\Delta_v$, must then satisfy the following \textit{intertwining relations} against the higher $R$-matrix: there exist natural sheaf isomorphisms
\begin{gather}
    (\sigma\circ\Delta_h)(\phi)\ostar R \cong R\ostar\Delta_h(\phi),\label{quantumR}\\
    \Delta_{\mathrm{v}}\circ (R\ostar -) \cong ((R\ostar -)\times (R\ostar -)) \circ\Delta_v\label{naturality}
\end{gather}
where $\sigma:\mathfrak{C}(\mathbb{G}^{\Gamma})\times\mathfrak{C}(\mathbb{G}^{\Gamma})\xrightarrow{\sim} \mathfrak{C}(\mathbb{G}^{\Gamma})\times\mathfrak{C}(\mathbb{G}^{\Gamma})$ is a swap of tensor factors.


The compatibility of these $R$-matrices with the cointerchange \eqref{cointerchange} is captured by the commutative diagrams
\begin{equation}
\begin{tikzcd}
	{(\phi_{(1)(1)}\times_h\phi_{(1)(2)})\times_v(\phi_{(2)(2)}\times_h\phi_{(2)(1)})} & {(\phi_{(1)(1)}\times_v\phi_{(2)(2)})\times_h(\phi_{(1)(2)}\times_v\phi_{(2)(1)})} \\
	{(\phi_{(1)(1)}\times_h\phi_{(1)(2)})\times_v(\phi_{(2)(1)}\times_h\phi_{(2)(2)})} & \\
	{(\phi_{(1)(2)}\times_h\phi_{(1)(1)})\times_v(\phi_{(2)(1)}\times_h\phi_{(2)(2)})} &{(\phi_{(1)(2)}\times_v\phi_{(2)(1)})\times_h(\phi_{(1)(1)}\times_v\phi_{(2)(2)})} 
	\arrow["{\beta_{12;43}}", from=1-1, to=1-2]
	\arrow["{1\times R^{3;4}}", from=2-1, to=1-1]
	\arrow["{R^{1;2}\times1}"', from=2-1, to=3-1]
	\arrow["{R^{23;41}}"', from=3-2, to=1-2]
	\arrow["{\beta_{21;34}}"', from=3-1, to=3-2]
\end{tikzcd},\label{hexagon1}
\end{equation}
\begin{equation}
\begin{tikzcd}
	{(\phi_{(1)(1)}\times_v\phi_{(2)(2)})\times_h(\phi_{(1)(2)}\times_v\phi_{(2)(1)})} & {(\phi_{(2)(2)}\times_v\phi_{(1)(1)})\times_h(\phi_{(1)(2)}\times_v\phi_{(2)(1)})} \\
	{(\phi_{(1)(2)}\times_v\phi_{(2)(1)})\times_h(\phi_{(1)(1)}\times_v\phi_{(2)(2)})} & 
    {(\phi_{(1)(2)}\times_v\phi_{(2)(1)})\times_h(\phi_{(2)(2)}\times_v\phi_{(1)(1)})}
	\arrow["{(-^\text{op})^{1;4}}", from=1-1, to=1-2]
	\arrow["{R^{23;41}}"', from=2-1, to=1-1]
	\arrow["{ (-^\text{op})^{1;4}}"', from=2-1, to=2-2]
    \arrow["{R^{23;14}}",from=2-2,to=1-2]
\end{tikzcd},\label{hexagon}
\end{equation}
Here, $-^\text{op}$ denotes the \textit{opposite cocateory} --- namely a swap of the tensor factors in the summands of the vertical coproduct, and we have used the shorthand \eqref{shorthand} to write $$(\Delta_h\otimes\Delta_h)\Delta_{\mathrm{v}}(\phi)=\Delta_h(\phi_{(1)})\times_v \Delta_h(\phi_{(2)}) = (\phi_{(1)(1)}\times_h\phi_{(1)(2)})\otimes_{\mathrm{v}}(\phi_{(2)(1)}\times_h\phi_{(2)(2)}),\qquad\text{etc.}$$ for the coproducts. These diagrams come with dual diagrams with the $\mathrm{h,v}$ swapped.

The arrows labelled by "$R^{i;j}$" implements a conjugation by the $R$-matrix \eqref{quantumR} on the $i,j$-th factor. The $\beta$'s denote the witness for the cointerchange law \eqref{cointerchange} on the 2-graph operators, which we recall can be trivialized by going on-shell of the 2-flatness condition. We will prove in \textbf{Lemma \ref{bimon}} that these quantum deformed coproducts \eqref{quantumR} are compatible in a Hopf categorical sense with a deformed monoidal structure $\otimes_q=\ostar$ on $\mathfrak{C}(\mathbb{G}^{\Gamma})$.

Note the property of having {\it two} types of (co)products and one product is shared by {\it cotrialgebras} \cite{Pfeiffer2007} (mentioned also in fig. \ref{fig:1}). However, here we have much more structure: this is the subject of the following section \S \ref{hopfopalg}.

\subsection{Higher Hopf structures on the 2-graph operators}\label{hopfopalg}
Given the above quantum deformed corpdoucts and $R$-matrices, we now investigate the structure of the 2-graph operators $\mathfrak{C}(\mathbb{G}^{\Gamma})$. Since these were induced through dualization directly from the 2-groupoid structure of the 2-group $\mathbb{G}$ or the geometry of the 2-graphs $\Gamma$. 

\subsubsection{As a Hopf internal category}
We first fix the definitions, then we get to work. 

\paragraph{Internal categories.} 
\begin{definition}\label{catinternal}
    A \textbf{category $C$ internal to $\cV$} is a strict category object in a bicategory $\cV$ with pushouts and pullbacks (such as $\cC=\mathsf{Meas}$). It consists of the data:
\begin{itemize}
    \item a pair of objects $C_1,C_0\in\cC$,
    \item a pair of \textit{fibrant} 1-morphisms $s,t: C_1\rightarrow C_0$ in $\cV$ called the \textit{source/target}, and their pullback $C_1\,_t\times_sC_1$,
    \item a 1-morphism $\circ: C_1\,_t\times_sC_1\rightarrow C_1$ in $\cV$, called the \textit{composition law}, and
    \item a 1-morphism $\eta:C_0\rightarrow C_1$, called the \textit{unit}, such that
    \begin{enumerate}
        \item the composition law $\circ$ is strictly associative: the 2-morphism
\begin{tikzcd}
	{C_1\times_{C_0}C_1\times_{C_0}C_1} & {C_1\times_{C_0}C_1} \\
	{C_1\times_{C_0}C_1} & {C_1}
	\arrow["{\id\times \circ}", from=1-1, to=1-2]
	\arrow["{\circ\times \id}"', from=1-1, to=2-1]
	\arrow["\cong", shorten <=8pt, shorten >=8pt, Rightarrow, from=1-1, to=2-2]
	\arrow["\circ", from=1-2, to=2-2]
	\arrow["\circ"', from=2-1, to=2-2]
\end{tikzcd}
is invertible, 
        \item $\circ,1$ satisfy strict unity: for each $f\in C_1$ with $s(f) = x$ and $t(f)=y$, we have invertible 2-morphisms $1_y\circ f\cong f \cong f\circ 1_x$,
        \item the invertible compositional unitors and associators satisfy 
        \begin{enumerate}
            \item the exchange equation (which we call the \textit{interchange law}), 
            \item the left- and right-pentagon equations, and
            \item the left-, middle- and right-triangle equations,
        \end{enumerate}
        on the pullbacks $C_1^{[n]}=C_1\times_{C_0}C_1\times_{C_0}\dots\times_{C_0}C_1$. 
\end{enumerate}
\end{itemize}
A \textbf{cocategory $D$ internal to $\cV$} is a strict category object in $\cV^\text{op}$. It is equipped with \textit{cofibrant} functors $u,v: D_0\rightarrow D_1$, a strict counit $\epsilon:D_1\rightarrow D_0$ and a strictly coassociative cocomposition law $\Delta_v: D_1\rightarrow D_1~_v\times_u D_1$ along the pushout.
\end{definition}
Keep in mind that internal categories do not have cocompositions, and cocategories do not have compositions.

A (strict) functor $F:C\rightarrow D$ of categories internal to $\cV$ is of course a pair of 1-morphisms $F_{i}:C_i\rightarrow D_i$ for $i=0,1$, equipped with invertible 2-morphisms
\[\begin{tikzcd}
	{C_1} && {D_1} \\
	& \cong \\
	{C_0} && {D_0}
	\arrow["{F_1}", from=1-1, to=1-3]
	\arrow["{s_C}"', shift right, curve={height=6pt}, from=1-1, to=3-1]
	\arrow["{t_C}", shift left, curve={height=-6pt}, from=1-1, to=3-1]
	\arrow["{s_D}"', shift right, curve={height=6pt}, from=1-3, to=3-3]
	\arrow["{t_D}", shift left, curve={height=-6pt}, from=1-3, to=3-3]
	\arrow["{F_0}", from=3-1, to=3-3]
\end{tikzcd},\qquad F(\circ) \cong \circ (F\times F),\qquad F_1 \eta_C \cong \eta_D F_0\]
which ensures that $F$ commutes with the fibrant source/target maps and the composition.

Let $\operatorname{Cat}_{\cV},\operatorname{Cocat}_{\cV}$ denote the categories/cocategories {internal} to $\cV$, respectively, then the canonical equivalence $\cV\simeq \cV^\text{op}$ induces $$\operatorname{Cocat}_\cV \simeq \operatorname{Cat}_{\cV^\text{op}},$$ which will play a big role in this paper.

\paragraph{Hopf internal categories.} Now suppose $\cV$ is symmetric monoidal, with a monoidal unit object $I\in\cV$. As an abuse of notation, we will also denote by $I$ its discrete category $I\rightrightarrows I$ internal to $\cV$.
\begin{definition}\label{internalhopf}
    Let $(\cV,\times,I)$ be a ($\bbC$-linear) symmetric monoidal 2-category. A \textbf{(strict) Hopf monoidal category $\cH$ in $\cV$} is a Hopf algebra object in $\operatorname{Cat}_{\cV}$. Namely, it is equipped with the following internal functors:
    \begin{enumerate}
        \item the \textit{product} $\otimes: \cH\times\cH\rightarrow\cH$ (with a unit $\iota\in \cH$),
        \item the strictly monoidal \textit{coproduct} $\Delta:\cH\rightarrow\cH\times\cH$ (with a counit $\epsilon: \cH\rightarrow I$), and
        \item the strictly op-comonoidal op-monoidal \textit{antipode} $S:\cH\rightarrow \cH^{\text{m-op},\text{c-op}}$,
    \end{enumerate}
    as well as internal natural transformations
    \begin{enumerate}
        \item the \textit{associators} $a^\otimes: \otimes\circ (\otimes\times1_\cH) \Rightarrow \otimes\circ(1_\cH\times\otimes)$ and \textit{unitors} $r^\otimes: (-\otimes \iota)\Rightarrow 1_\cH,~\ell^\otimes: (\iota\otimes -)\rightarrow 1_\cH$ satisfying the strict pentagon and triangle axioms, 
        \item the \textit{coassociators} $a^\Delta: (\Delta\times 1_\cH)\circ\Delta\Rightarrow (1_\cH\times\Delta)\Rightarrow\Delta$ and \textit{counitors} $r^\Delta:( \epsilon\times 1_\cH)\circ\Delta \Rightarrow1_\cH ,~ \ell^\Delta: (1_\cH\times \epsilon)\circ\Delta\Rightarrow 1_\cH$ satisfying the strict copentagon and cotriangle) axioms,
        \item the invertible \textit{bimonoidal natural transformations}
        \begin{equation}
            \Delta \circ \otimes \cong (1_\cH\times \sigma\times 1_\cH) \circ (\otimes\times\otimes)\circ \Delta\label{bimonoid}
        \end{equation}
        \item the \textit{antipode relations}
        \begin{equation}
            \otimes\circ (S\times 1_\cH)\circ\Delta \cong \iota\otimes\epsilon\cong \otimes\circ(1_\cH\times S)\circ\Delta,\label{antpode}
        \end{equation}
    \end{enumerate}
    such that these internal natural transformations are mutually coherent.
    
    A \textbf{(strict) Hopf comonoidal cocategory in $\cV$} is a (strict) Hopf monoidal category in $\cV^\text{op}$.
\end{definition}

The main example we shall consider in this paper is the symmetric monoidal 2-category $\cV=\mathsf{Meas}$ of Crane-Yetter measureable categories \cite{Crane:2003gk,Yetter2003MeasurableC,Baez:2012}.


\medskip

Recall that the bicategory $\mathsf{Meas}$ of Crane-Yetter measureable categories \cite{Crane:2003gk,Yetter2003MeasurableC} can be understood as $W^*$-modules over $L^\infty(X,\mu)$, which is a commutative von Neumann algebra.
\begin{definition}\label{measqdef}
     Denote by $\mathsf{Meas}_q$ the bicategory of $W^*$-module categories over the non-commutative von Neumann algebra $\mathbb{C}[[\hbar]]\otimes_\bbC L^\infty(X,\mu)$ --- namely, the objects are categories of measuareable sheaves, whose global sections are also power series in $\hbar$. 
\end{definition}
\noindent Similar framework has been used in \cite{Kristel:2023gus}, but for us the extent of non-commutativity of our von Neumann algebras is strictly controlled by the power series in $\hbar$. It can be thought of as the infinite-dimensional version of $\mathsf{2Vect}$ (see \textit{Remark \ref{E1modA}}), or as a sheafy measureable version of the bicommutant categories of Henriques-Pennys \cite{Henriques2017-gm}.

\medskip

We now work to prove the main theorem by breaking it up into a few lemmas. 
\begin{lemma}\label{cocat}
   $\C_q(\mathbb{G}^{\Gamma})$ is an additive comonoidal cocategory internal to $\mathsf{Meas}_q$, with a coproduct and cocomposition.
\end{lemma}
\begin{proof}
 Additivity follows from additivity of the target $\mathsf{Hilb}$. Let $\hat s,\hat t$ denote the source and target maps in the 2-groupoid $\mathbb{G}^{\Gamma}=\operatorname{Fun}(\Gamma^2,\mathbf{B}\mathbb{G})$ as the confluence of those in $\mathbb{G}$ and $\Gamma^2$. This confluence is well-defined on-shell of the fake-flatness condition 
    \begin{equation}
        h_{\partial f} = \mathsf{t}(b_f),\qquad \forall~ (h_e,b_f)\in\mathbb{G}^{\Gamma}.\label{fakfeflat}
    \end{equation}
 
 Since $\hat s,\hat t$ are by hypothesis surjective submersive \cite{MACKENZIE200046,Chen:2012gz}, their pullbacks induce measureable functors \cite{Baez:2012} which are cofibrant
    \begin{equation*}
        \C_q((\mathsf{H}\rtimes G)^{\Gamma^2})\xleftarrow{\hat s^*}\C_q(G^{\Gamma^1}) \xrightarrow{\hat t^*} \C_q((\mathsf{H}\rtimes G)^{\Gamma^2})
    \end{equation*}
    with a coidentity $\epsilon: \C_q((\mathsf{H}\rtimes G)^{\Gamma^2})\to \C_q(G^{\Gamma^1})$ given by pulling back the unit section $\id: g\mapsto (g,1)$ on $\mathbb{G}$. The coidentity represents the trivial face-localized state.       
    
    The cocomposition is given by the vertical coproduct $\Delta_v:\tilde{\C}_q((\mathsf{H}\rtimes G)^{\Gamma^2}) \rightarrow  \C_q((\mathsf{H}\rtimes G)^{\Gamma^2}) \times \C_q((\mathsf{H}\rtimes G)^{\Gamma^2})$, which inserts a face-localized state in between two edge-localized states. The comonoidal coproduct is an additive measureable functor $\Delta_h: \tilde{\C}_q(\mathbb{G}^{\Gamma})\rightarrow\C_q(\mathbb{G}^{\Gamma})\times \C_q(\mathbb{G}^{\Gamma})$ given by the horizontal coproduct. 
 



    By \textbf{Proposition \ref{cointerchangeprop}},  $\mathfrak{C}(\mathbb{G}^{\Gamma})$ is an additive comonoidal cocategory internal to $\mathsf{Meas}_q$.
\end{proof}


The next step is to prove that the quantum deformed product $\ostar$ on $\mathfrak{C}(\mathbb{G}^{\Gamma})$ satisfies \eqref{bimonoid}. We will check this condition is first satisfied classically, then check that the combinatorial 2-Fock-Rosly Poisson brackets \eqref{2fockrosly} are compatible.

\begin{lemma}\label{bimon}
    Provided the map $\lambda$ preserves the coproducts, then \eqref{bimonoid} holds for $\mathfrak{C}(\mathbb{G}^{\Gamma})$.
\end{lemma}
\begin{proof}
    We first establish \eqref{bimonoid} classically. On localized sections, properties of the tensor product $\otimes$ and the geometry of the 2-holonomies then provide sheaf isomorphisms
        \begin{align*}
        (\Delta(\xi \otimes\xi'))(h_e,b_f) &= \sum (\xi \otimes\xi')_{(1)}(h_{e_1},b_{f_1}) \times (\xi \otimes\xi')_{(2)}(h_{e_2},b_{f_2}) \\
        &\mapsto \sum ((\xi_{(h_{e_1},b_{f_1})})_{(1)} \otimes (\xi'_{(h_{e_1},b_{f_1})})_{(1)}) \times ((\xi_{(h_{e_2},b_{f_2})})_{(2)}\otimes (\xi'_{(h_{e_2},b_{f_2})})_{(2)}) \\
        &\mapsto \sum ((\xi_{(h_{e_1},b_{f_1})})_{(1)}\times (\xi_{(h_{e_2},b_{f_2})})_{(2)})\otimes\sum ((\xi'_{(h_{e_1},b_{f_1})})_{(1)} \times (\xi'_{(h_{e_2},b_{f_2})})_{(2)}) \\
        &= (\Delta\xi)(h_e,b_f) \otimes (\Delta\xi')(h_e,b_f) = (\Delta\xi\otimes\Delta\xi')(h_e,b_f)
    \end{align*}    
    for any sections $\xi,\xi'$ of sheaves $\phi,\phi\in\mathfrak{C}(\mathbb{G}^{\Gamma})$ localized on $(e,f)\in\Gamma^2$. Here $\Delta$ can mean either the horizontal or vertical 2-graph coproduct.
    

     
    Now semiclassically, the classical 2-Yang-Baxter equation (leading to the Lie 2-algebra cocycle condition for $\delta = dr\in Z^2(\G,\G\wedge\G)$ \cite{Bai_2013}) directly implies the multiplicativity \cite{Chen:2012gz}
    \begin{equation*}
            \Delta_h(\{\xi,\xi'\}) = \{\Delta_h(\xi),\Delta_h(\xi')\},\qquad \Delta_v(\{\xi,\xi'\}) = \{\Delta_v(\xi),\Delta_v(\xi')\}
        \end{equation*}
    of the Poisson bracket \eqref{2fockrosly} against the group/groupoid multiplication in $\mathbb{G}^{\Gamma}$, where we have implicitly applied the sheaf isomorphism \eqref{deformedtensor}. Quantizing lifts to the multiplicativity of the coproducts $\Delta_\mathrm{h,v}$ against the $\star$-product on measureable sections,
    \begin{equation*}
        \Delta_h(\xi\star\xi') = \Delta_h(\xi)\star \Delta_h(\xi'),\qquad \Delta_v(\xi\star \xi') = \Delta_v(\xi)\star \Delta_v(\xi').
    \end{equation*}
    If Hypothesis (H) holds, namely that $\lambda$ preserves the coproducts, then this multiplicativity lifts by \textbf{Definition \ref{ostarprod}} to $\ostar$ and \eqref{bimonoid} follows.
        
        

\end{proof}

Given the above result, $\mathfrak{C}_q(\mathbb{G}^{\Gamma})$ then gives rise to a \textit{non-symmetric} cocategory.
\begin{definition}\label{cobraidingdef}
    Let $\cH$ denote a Hopf cocategory with the comonoidal coproduct functor $\Delta_h: \cH\rightarrow \cH\times\cH$. A(n invertible) \textbf{cobraiding} $(\mathcal{R},\mathsf{T})$ on $\cH$ is an additive bimonoidal automorphism $\mathcal{R}:\cH\times\cH\rightarrow \cH\times\cH$ equipped with a natural transformation
    \begin{equation*}
        \mathsf{T}: \mathcal{R}\circ\Delta_h\Rightarrow \Delta_h^\text{op},
    \end{equation*}
    satisfying various Hopf coherence conditions against the higher morphisms in \textbf{Definition \ref{internalhopf}}.
\end{definition}
\noindent We finally can state the following.
\begin{theorem}\label{hopfcat}
    $\mathfrak{C}_q(\mathbb{G}^{\Gamma})$ is a \textbf{cobraided} Hopf cocategory internal to $\mathsf{Meas}_q$.
\end{theorem}
\begin{proof}
\begin{itemize}
    \item \textbf{The antipode}: Given 2-$\dagger$-unitarity \textbf{Definition \ref{2dagger}}, the antipode functors $S_{h,v}$ promoted to the quantum 2-graph operators $\mathfrak{C}_q(\mathbb{G}^{\Gamma})$ in \S \ref{2dagger}. The antipode axioms \eqref{antpode} then follow directly from the underlying geometry of $\Gamma^2$. 

    Together with \eqref{quantumR}, one can also verify that there exist (internal) natural transformations witnessing the following identity:
    \begin{equation}
        (S_h\times 1)R ~\hat\ostar~ R \cong  (1\times S_h)R \cong  1_{\bbC}\otimes 1_\bbC,\label{leftadj}
    \end{equation}
    where $1_\bbC\in\C_q(\mathbb{G}^{\Gamma})$ denotes the monoidal unit represented by the trivial measuereable field over $(\mathbb{G}^{\Gamma},\mu_{\Gamma})$.
    
    \item \textbf{The cobraiding}: We define the cobraiding $(\mathcal{R},\mathsf{T})$ on $\mathfrak{C}(\mathbb{G}^{\Gamma})=\mathfrak{C}_q(\mathbb{G}^{\Gamma})$ to be the following. Suppressing the strict associators, the automorphism $\mathcal{R} = \operatorname{ad}_{R}$ is given by a \textit{conjugation}
    \begin{equation*}
        \mathcal{R}(\phi\times\phi') = R\ostar (\phi\times\phi')\ostar R^{-1},\qquad \phi,\phi'\in\mathfrak{C}(\mathbb{G}^{\Gamma})
    \end{equation*}
    by the invertible (with respect to $\ostar^{\times2}$ on $\mathfrak{C}(\mathbb{G}^{\Gamma})\times\mathfrak{C}(\mathbb{G}^{\Gamma})$) higher $R$-matrix $R$. Each component $\mathsf{T}_\phi$ of the natural transformation are then given by \eqref{quantumR}.
    

    By definition, $\mathcal{R}$ is monoidal. We thus now need to prove the following:
    \begin{enumerate}
        \item $\mathcal{R}=\operatorname{ad}_{R}: \mathfrak{C}(\mathbb{G}^{\Gamma})\times\mathfrak{C}(\mathbb{G}^{\Gamma})\rightarrow \mathfrak{C}(\mathbb{G}^{\Gamma})\times\mathfrak{C}(\mathbb{G}^{\Gamma})$ defines a functor between cocategories, and
        \item multiplication by $R$ is comonoidal, ie. compatible with $\Delta_h$.
    \end{enumerate}
    For the first point, we need $\mathcal{R}$ to (i) commute with the cosource/cotarget maps on $\mathfrak{C}_q(\mathbb{G}^{\Gamma})$ and (ii) is natural with respect to the cocomposition $\Delta_v$. (i) is guaranteed by \eqref{catRnaturality}, while (ii) is guaranteed by \eqref{naturality} and the diagram \eqref{hexagon}. 
    

    Now for the second point, we require higher quasitriangularity relations for $R$; namely that there exist isomorphisms in $\C_q(\mathbb{G}^{\Gamma})^{\times 3}$,
    \begin{equation}
        Q_\ell: (1\times\Delta_h)R \cong R_{13}\ostar R_{23},\qquad Q_r: (\Delta_h\times 1)R\cong R_{13}\ostar R_{12},\label{higherquasi}
    \end{equation}
    which are consistent with the $\Delta_h$-coassociator. These follow from \eqref{hexagon1} (see also \textbf{Lemma \ref{Requivlemma}} later).
    
\end{itemize}

\end{proof}
\noindent By construction, the decategorification $\lambda$ of \textbf{Definition \ref{ostarprod}} simply restricts to the structure sheaf $C_q(\mathbb{G})$ and trivializes the natural transformation $\mathsf{T}$.

\begin{definition}
    The \textbf{categorical quantum coordinate ring} $\mathfrak{C}_q(\mathbb{G})$ is the 2-graph operators $\mathfrak{C}_q(\mathbb{G}^{\Gamma})$ on the 2-graph $\Gamma$ consisting of only a single face $f:e\rightarrow e$, a single edge loop $v\xrightarrow{e}v\in\Gamma^1$ and a single vertex $v\in \Gamma^0$.
\end{definition}
\noindent See also \textit{Remark \ref{not-2group}}. 

\begin{rmk}\label{quantumcobraiding}
    Let us elaborate a bit more on the cobraiding. Suppose $\mathbb{G}$ is a strict Lie 2-group such that we can suppress the co/associators. For each $\phi\in\C_q(\mathbb{G})$, a cobraiding $(\mathcal{R},\mathsf{T})$ as above is given by $\mathcal{R}(\phi\times\phi') = R\ostar (\phi\times\phi')\ostar R^{-1}$, together with measureable natural isomorphisms
    \begin{equation*}
        \mathsf{T}_\phi: R\ostar \Delta(\phi)\ostar R^{-1} \xrightarrow{\sim} \Delta^\text{op}(\phi).
    \end{equation*}
    The components $\mathsf{T}_{R_{(1)}},\mathsf{T}_{R_{(2)}}$, together with the isomorphisms $Q_\ell,Q_r$ \eqref{higherquasi}, give rise to invertible natural transformations 
    \begin{equation*}
            R_{23}\ostar (R_{13} \ostar R_{12}) \cong (R_{12}\ostar R_{13}) \ostar R_{23}
    \end{equation*}
    which witness the Yang-Baxter equations (cf. \eqref{2yb}). Such Yang-Baxter intertwiners are known \cite{Kuniba_2023} to be very closely related to solutions of the {\bf Zamolodchikov tetrahedron equations} \cite{Zamolodchikov:1980,Kapranov:1994}.
\end{rmk}

We note that the generalization to the weakly-associative case can be carried out directly, by keeping track of the appearance of $\tau$ and (its first descendant mentioned in \S \ref{descendant}) through natural ismorphisms. We expect to obtain a Hopf cocategory as well in this case, but with coassociative and cointerchange witnesses which need not only have identity components.

\subsubsection{The semiclassical limit}\label{semiclassical}
The goal of this subsection is to show that the cobraiding on the 2-graph operators comes from a \textit{quantum 2-$R$-matrix} on  $C_q(X)$, as a \textit{(quasi)triangular} Hopf 2-algebra as in \S \ref{coprod} and \cite{Chen:2023tjf}. We will use this to determine the semiclassical limit.


We now work to recover $R$ as a 2-$R$-matrix in the sense given in \S \ref{coprod}.
\begin{lemma}\label{Requivlemma}
    Provided $\lambda$ preserves the bimonoidal structure, the cobraiding on $\C_q(\mathbb{G}^{\Gamma})$ reduces to a 2-$R$-matrix on $C_q(\mathbb{G}^{\Gamma})$. 
\end{lemma}
\begin{proof}
    Without loss of generality, we work directly with $\Gamma^2$ given by the fundamental 2-graph. We prove that the component of $R$ restricted to the structure sheaf $\mathcal{O}_{\mathbb{G}} = C_q(\mathbb{G})$ is a quantum 2-$R$-matrix $R$ for the Hopf 2-algebra
    \begin{equation*}
        A = A_1= C_q(\mathsf{H})\xrightarrow{\mathsf{t}^*}A_0=C_q(G),
    \end{equation*}
where $\mathsf{t}^*$ is the pullback of the Lie group crossed-module map $\mathsf{t}: \mathsf{H}\rightarrow G$.

From the quantum $R$-matrix $\lambda(R)$, we define $\Delta'=\lambda(\Delta)$ and
    \begin{equation}
        R' = \lambda(R)|_{A_1\otimes A_0}\oplus \lambda(R)|_{A_0\otimes A_1} \equiv R^l \oplus R^r, \label{twoRs}
    \end{equation}
    such that $R' \in (C_q(\mathbb{G}^{\Gamma}))_1^{\otimes 2}$ is an element of total degree-1.

    Clearly \eqref{catRmat} implies the intertwining property of $R$. Moreover, given the characterization \textbf{Proposition \ref{2chain}}, \eqref{catRnaturality} reduces to \eqref{Rmatequiv}. It now remains to recover the quasitriangularity conditions. To do so, we apply $\lambda$ to the diagrams \eqref{hexagon}, \eqref{hexagon1}. 

    Consider the arrow $R^{23;14}_\mathrm{h}$ in \eqref{hexagon1}. It involves the quantity $(\Delta_h\otimes\Delta_h)R$, while the expressions we want are
    \begin{equation*}
        (1\otimes\Delta_h)R,\qquad (\Delta_h\otimes 1)R.
    \end{equation*}
    These can be computed from various contractions in \eqref{hexagon1}. Indeed, by contracting the tensor legs labelled by $2,3$ with $\ostar$ and putting $\phi_{(2)}\times \phi_3 =\phi$, we obtain the commutative diagram,
\[\begin{tikzcd}
	{\phi_{(1)}\times_h(\phi\times_h\phi_4)} \\
	{(\phi_{(1)}\times_h\phi_4)\times_h\phi} && {\phi\times_h(\phi_{(1)}\times_h\phi_4)}
	\arrow["{R^{13}}", from=1-1, to=2-1]
	\arrow["{R^{12}}"', from=2-3, to=1-1]
	\arrow["{(\Delta_h\otimes1)R}", from=2-3, to=2-1]
\end{tikzcd}\]
which gives, under $\lambda$ (here the superscripts on $R$ indicate which tensor factor the $R$-matrices act on, not the subscripts of the $\phi$'s),
\begin{equation*}
    (\Delta'\otimes 1)R' = R'^{13}\otimes R'^{12}.\label{quasi1}
\end{equation*}
Similarly we have
\begin{equation}
    (1\otimes\Delta' )R' = R'^{13}\otimes R'^{23}.\label{quasi2}
\end{equation}
These are nothing but the quasitriangularity conditions.
\end{proof}

The conditions \eqref{quasi1}, \eqref{quasi2} and \eqref{quantumR} were shown in {Proposition 3.13} of \cite{Chen:2023tjf} to be equivalent to the "2-Yang-Baxter equations"
\begin{equation}
    R'^{23}(R'^{13} R'^{12}) = (R'^{12} R'^{13}) R'^{23}\label{2yb}
\end{equation}
on $C_q(\mathbb{G}^{\Gamma})$.

We can now prove that $\C_q(\mathbb{G})$ admits the Lie 2-bialgebra $(\G;\delta)$ underlying the 2-Chern-Simons action \cite{Chen:2022hct} as its semiclassical limit, with $\G=\operatorname{Lie}\mathbb{G}$. Here, by a "semiclassical limit", we mean taking the "quantum deformation parameter(s)" $q\sim 1+\hbar=1+\frac{2\pi}{k}$ to first order and applying the decategorification $\lambda$.

\begin{theorem}
    Suppose $\lambda$ satisfies hypothesis (H), then the semiclassical limit of $\mathfrak{C}_{q}(\mathbb{G})$ is dual to the Lie 2-bialgebra $(\G=\operatorname{Lie}\mathbb{G};\delta)$.
\end{theorem}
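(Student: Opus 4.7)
The plan is to use \textbf{Hypothesis (H)} to reduce the statement from the categorified setting down to the level of the uncategorified Hopf 2-algebra $C_q(\mathbb{G})$ internal to $\mathsf{2Vect}^{BC}$, where the results of \cite{Chen:2023tjf} (Appendix B in particular) can be applied essentially verbatim. First, I would apply the decategorification functor $\lambda$ from \textbf{Definition \ref{ostarprod}} to the cobraided Hopf opalgebroid $\mathfrak{C}_q(\mathbb{G})$ obtained in \textbf{Theorem \ref{hopfcat}}. By Hypothesis (H), the image $\lambda(\mathfrak{C}_q(\mathbb{G})) = C_q(\mathbb{G})$ inherits the structure of a Hopf 2-algebra in $\mathsf{2Vect}^{BC}$, with degree-0 and degree-1 pieces $A_0 = C_q(\mathsf{H}),~A_1 = C_q(G)$ linked by $t^*$, and coproducts $\Delta_1,\Delta_0,\bar\Delta_0$ satisfying the equivariance condition \eqref{coprodequiv} as verified in the proof of \textbf{Lemma \ref{Requivlemma}}. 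The cobraiding $R_\mathrm{h}$ descends, via the correspondence \eqref{twoRs}, to a genuine 2-$R$-matrix $R = R^l \oplus R^r$ of total degree-1 satisfying \eqref{Rmatequiv} and the 2-Yang-Baxter equations \eqref{2yb}.

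Next, I would expand all of these structures to first order in $\hbar \sim 2\pi/k$. Concretely, write $R_\mathrm{h} = 1\otimes 1 + \hbar r_\mathrm{h} + o(\hbar^2)$ and likewise for the vertical sector (recall the cointerchange law \eqref{cointerchange} forces $q_\mathrm{h}/q_\mathrm{v}$ to be finite). The $\ostar$-deformed commutator \eqref{2quantumproduct-heis} on $\mathfrak{C}_q(\mathbb{G})$ reduces, through \textbf{Proposition \ref{sclimit}}, to the combinatorial 2-group Fock–Rosly Poisson bracket \eqref{2fockrosly-heis} with classical 2-$r$-matrix $r_\mathrm{h}$. On the other hand, the skew-symmetrization of the deformed coproducts $\Delta_\mathrm{h} - \sigma\Delta_\mathrm{h}$ of $C_q(\mathbb{G})$, taken to first order, defines the cobracket dual to a Lie 2-algebra 1-cocycle on the Lie 2-algebra $\mathfrak{G} = \operatorname{Lie}\mathbb{G}$ via the duality between the Hopf 2-algebra $C_q(\mathbb{G})$ (regarded as a quantization of the coordinate ring on $\mathbb{G}$) and $U_\hbar\mathfrak{G}$.

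The core algebraic step is then to identify this semiclassical cobracket with the cocycle $\delta = [-, r]$ of \cite{Bai_2013} associated to the classical 2-$r$-matrix $r$, and to verify the Lie 2-bialgebra compatibility relations (co-Jacobi and co-Leibniz with respect to $t$). The co-Jacobi identity follows from the 2-Yang-Baxter equation \eqref{2yb} by differentiating at $\hbar = 0$, the co-Leibniz relation follows from the $t^*$-equivariance \eqref{Rmatequiv} of $R$ linearized at the identity, and the 1-cocycle property of $\delta$ follows from the multiplicativity of the bivector field $\Pi$ underlying \eqref{2fockrosly} with respect to both the horizontal and vertical group structures of $\mathbb{G}$ (i.e.\ from \eqref{poissonequiv} and \textit{Remark \ref{2bracketfunctor}}). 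Putting these together yields the dual of the Lie 2-bialgebra structure $(\mathfrak{G};\delta)$ identified in \cite{Chen:2022hct}, and the required duality between $\mathfrak{C}_q(\mathbb{G})$ and $U_\hbar\mathfrak{G}$ at first order is then Appendix B of \cite{Chen:2023tjf} applied to the Hopf 2-algebra $C_q(\mathbb{G})$.

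The main obstacle will be controlling the vertical sector: \textit{Remark \ref{novertical}} stresses that the 2-Chern–Simons action only provides infinitesimal horizontal data, so the classical vertical $r_\mathrm{v}$ is a formal parameter and must be recovered from the horizontal one through the compatibility \eqref{2fockrosly-comp}. This is exactly the place where the cocategorical structure from \textbf{Lemma \ref{cocat}} is essential — without the cointerchange \eqref{cointerchange} there is no way to bootstrap $\{-,-\}_\mathrm{v}$ from $\{-,-\}_\mathrm{h}$, and the semiclassical limit would not close into a Lie 2-bialgebra. Hypothesis (H) is precisely what guarantees that $\lambda$ carries this internal cocategorical compatibility down to the relation between the degree-0 and degree-1 brackets on $\mathfrak{G}$, completing the identification with $(\mathfrak{G};\delta)$.
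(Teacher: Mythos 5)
Your proposal follows essentially the same route as the paper: decategorify via $\lambda$ under Hypothesis (H) to land on the Hopf 2-algebra $C_q(\mathbb{G})$ with its $\bar t$-equivariant coproducts and Fock--Rosly bracket, linearize at $\hbar=0$ so that the multiplicative graded Poisson bivector corresponds to the Lie 2-algebra cocycle $\delta$, and invoke Appendix B of \cite{Chen:2023tjf} to identify the semiclassical limit of $R_\mathrm{h}$ with the classical 2-$r$-matrix. Your additional explicit verification plan for co-Jacobi and co-Leibniz, and your handling of the vertical sector via the linearization of the cocomposition into addition, are consistent with (and slightly more detailed than) what the paper delegates to its cited references.
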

\begin{proof}
    By \textbf{Lemma \ref{Requivlemma}}, $\lambda$ sends the categorical $R$-matrix on $\C_q(\mathbb{G})$ to a 2-$R$-matrix on $C_q(\mathbb{G})$. The statement then follows upon taking the limit $q\rightarrow 1$ by the results of Appendix B in \cite{Chen:2023tjf}.

\end{proof}
\noindent In other words, through hypothesis (H), the 2-graph operators described above does indeed give a quantization of 2-Chern-Simons theory on the lattice.

\begin{rmk}\label{hypHconstruction}
    We emphasize here that the notion of a Hopf (co)category, possibly equipped with a cobraiding, that we have defined in \S \ref{hopfopalg} is completely independent of Hypothesis (H). \textbf{Definition \ref{hypH}} serves to relate the quantum 2-holonomy operators described by the Hopf category to the higher-gauge fields described by the Hopf 2-algebra. If one is content with constructing TQFTs directly from the 2-holonomies/Hopf category, then one would not need to invoke Hypothesis (H) at all.
\end{rmk}



\subsubsection{The 4d categorical ladder}

The above set of results is a realization of the categorical ladder proposal \cite{Pfeiffer2007,Baez:1995xq,Crane:1994ty}, which states that 4d TQFTs are determined by a Hopf monoidal category; see fig. \ref{fig:1}. In this context, the Hopf category $\mathfrak{C}(\mathbb{G}^{\Gamma})$ plays a role analogous to the Hopf algebra of Chern-Simons observables defined on the lattice in \cite{Alekseev:1994au}. 

\begin{figure}
    \centering
    \includegraphics[width=0.8\linewidth]{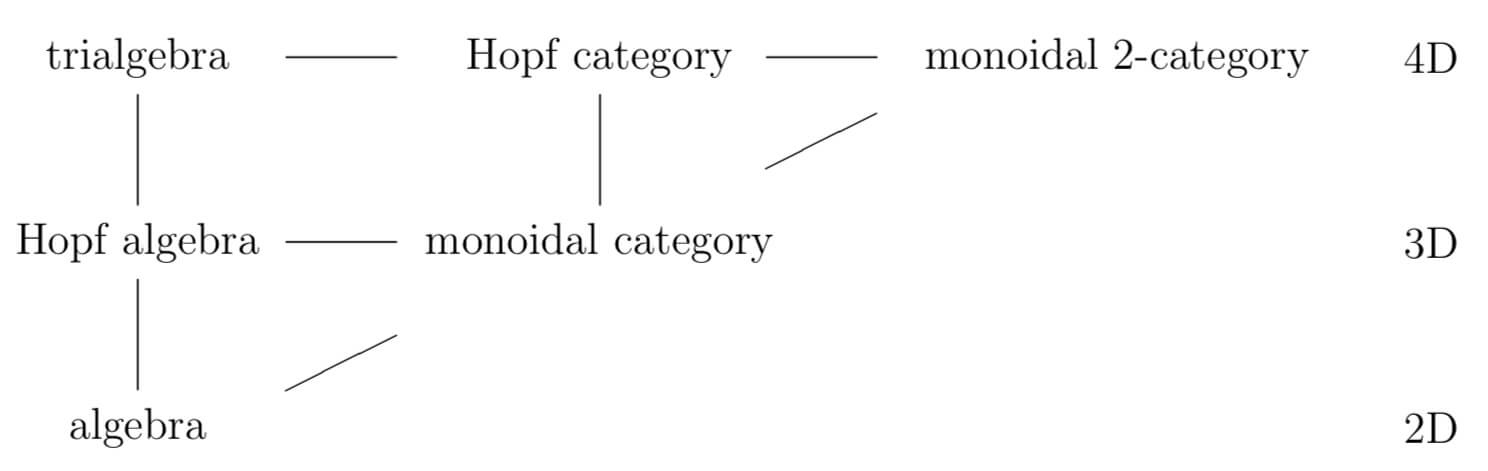}
    \caption{The categorical ladder as proposed in \cite{Pfeiffer2007,Baez:1995xq,Crane:1994ty}, which gives a prescription for how the observables in a higher-dimensional TQFT should behave. Here, the vertical axis is the dimension and the horizontal axis denotes the operation of taking modules.}
    \label{fig:1}
\end{figure}

As such, in analogy with the seminal work of Witten \cite{WITTEN1990285}, observables of 2-Chern-Simons theory (modulo 2-gauge transformations) should then be described by an assignment of elements of the categorified quantum coordinate ring $\mathfrak{C}_q(\mathbb{G})$ to the so-called "2-ribbons" \cite{BAEZ2003705,CARTER19971,kapovich1996} --- namely surfaces $\Sigma$ bounding incoming and outgoing links/tangles --- this will be made precise in the next paper in the series. 

\medskip

The key question now is to characterize the \textit{quantum 2-gauge transformations}, denoted by $\tilde{\cC}=\mathbb{G}^{\Gamma^1}$, on the 2-graph operators; we will show in \S \ref{regrep} that, under appropriate coherence conditions, $\mathfrak{C}_q(\mathbb{G}^{\Gamma})$ is a \textit{monoidal} measureable module over $\tilde{\cC}=\mathbb{G}^{\Gamma^1}$. The companion paper \cite{Chen:2025?} then extracts a {\it canonical higher ribbon} structure out of the corresponding braided tensor 2-representation 2-category $\operatorname{2Rep}(\tilde{\cC};\tilde R)$.

The 2-tangle hypothesis \cite{BAEZ2003705} then implies that such an assignment completes --- {\` a} la, for instance, a 4-dimensional version of handlebody surgery \cite{Freedman:1982,Casson:1986,Bennett_2016} --- into a \textbf{2-Chern-Simons TQFT}
\begin{equation}
    Z_{2CS}^\mathbb{G}: \operatorname{Bord}^O_{\langle 4,3\rangle+\epsilon}\rightarrow\mathsf{Vect}.\label{d2CS}
\end{equation}
This is a direct categorification of the Reshetikhin-Turaev construction \cite{Reshetikhin:1990pr,Reshetikhin:1991tc,Turaev:1992hq,Turaev:1992} to 4-dimensions. The ultimate goal of this project is to construct the functor \eqref{d2CS}.





\section{Categorical quantum 2-gauge transformations}\label{quantum2gautransfo}
Equipped with the knowledge that 2-graph operators form various interrelated Hopf structures, we are going to introduce a Hopf structure on 2-gauge transformations such that $\mathfrak{C}(\mathbb{G}^{\Gamma})$ consist of {\it covariant} elements under the 2-gauge transformation $\mathbb{G}^{\Gamma^1}$-representation. Further, the Hopf structures on the two sides shall be compatible, in the sense that $\Lambda$ defines a Hopf module structure.

The bounded linear operators $U$ making the 2-gauge transformations $\Lambda$ concrete strictly speaking now act on spaces of formal power series of sections over $X$. In this way, the groupoid of 2-gauge parameters $\mathbb{G}^{\Gamma^1}$ themselves acquire a dependence on the formal parameter $\hbar$, as hence are themselves operator-valued formal power series. However, as most of what we will prove in the following is algebraic, this will not play a major role, hence we shall keep the dependence on $\hbar$ and $q$ implicit.

\begin{rmk}
    In the following, we shall consider an \textit{additive} extension of the (geometric/classical) 2-gauge transformations $\tilde{\cC}$. This is defined as an {additive} $\mathbb{G}^{\Gamma^1}$-graded Hopf category internal to $\mathsf{Meas}$; see \cite{SOZER2023109155} for a definition of a monoidal category graded by a 2-group. The homogeneous elements $\zeta\in\tilde{\cC}$ are determined by an element $\{(a_v,\gamma_e)\}_{(v,e)}\in\mathbb{G}^{\Gamma^1}$, hence we will often make arguements directly with $\mathbb{G}^{\Gamma^1}$.  \textbf{Lemma \ref{2-groupgraded}} later will treat this in more detail.
\end{rmk}

\subsection{Coproducts on the 2-gauge transformations}\label{covariance}
Recall the $\mathbb{G}^{\Gamma^1}$-module structure of the 2-graph operators is defined as a map $\Lambda$ from the decorated 1-graphs $\mathbb{G}^{\Gamma^1}$ into automorphisms of the 2-graph operators $\mathfrak{C}(\mathbb{G}^{\Gamma})$, which is realized concretely on each 2-graph operator as sheaves of bounded linear operators $\Gamma_c(H^X)\rightarrow \Gamma_c((\Lambda H)^X)$. We have previously noted that there are sheaf isomorphisms witnessing the compositions of 2-gauge transformations 
\begin{equation*}
    \Lambda_{(a_v,\gamma_{e})} \cdot \Lambda_{(a_{v}',\gamma'_{e})} \cong \Lambda_{(a_va_v',\gamma_{e}(a_v\rhd \gamma_{e}'))}
\end{equation*}
horizontally, and also vertically
\begin{equation*}
    \Lambda_{(a_v,\gamma_{e_1})}\circ \Lambda_{(a_{v'},\gamma_{e_2})} = \Lambda_{(a_v,\gamma_{e_1}\gamma_{e_2})},\qquad a_{v'}=a_vt(\gamma_e)
\end{equation*}
on adjacent 1-graphs $v\xrightarrow{e_1}v'\xrightarrow{e_2}v''$.

Now the point is that $\Lambda$ should endow $\mathfrak{C}(\mathbb{G}^{\Gamma})$ with the structure of a Hopf module over the 2-gauge transformations $\mathbb{G}^{\Gamma^1}$ (or its additive completion $\tilde{\cC}$). In order for this to be the case, a \textit{coproduct}
$$\tilde \Delta: \tilde{\cC}\rightarrow \tilde{\cC}\times\tilde{\cC}$$
 akin to \S \ref{2graphcoprod}, must be specified.
\begin{definition}\label{derivationpropertydefinition}
    We say the action functor $\Lambda:\mathbb{G}^{\Gamma^1}\times\mathfrak{C}(\mathbb{G}^{\Gamma})\rightarrow\mathfrak{C}(\mathbb{G}^{\Gamma})$ has the \textbf{categorical quantum derivation property} iff there exist sheaf identifications such that
\begin{equation}
    \Lambda_-\circ (-\ostar-) \cong (-\ostar-)\circ (\Lambda\otimes\Lambda)_{\tilde\Delta}\,.\label{covar}
\end{equation}
\end{definition}
\noindent The reason why this condition is named such is given in \textit{Remark \ref{derivation}}. We shall prove in {\bf Proposition \ref{moduleassoc}} later that \eqref{covar} is indeed necessary for the monoidal module structure.



\medskip

In the following, we shall instead focus on a more geometric interpretation of the coproduct $\tilde \Delta$ on $\tilde{\cC}$.

\begin{rmk}\label{derivation}
    Let us return for the moment to the case of the ordinary compact semisimple Lie group $G$. Recall the coproduct on $U\g$ is primitive $\tilde\Delta(X) = X\otimes 1 + 1\otimes X$. The condition analogous to \eqref{covar} in the decategorified classical case then reproduces the \textit{Leibniz rule}
    \begin{equation*}
        \Lambda_X(fg) = (\Lambda_Xf)g + f(\Lambda_Xg),\qquad X\in U\mathfrak{g},\qquad f,g\in C(G),
    \end{equation*}
    which identifies $\Lambda: U\g\otimes C(G)\rightarrow C(G)$ as the canonical action of $U\g$ by derivations on the functions $C(G)$ \cite{Semenov1992}. This explains why \eqref{covar} was called the "{quantum derivation}" property: the action of $\mathbb{U}_q\G$ on $\mathfrak{C}_q(\mathbb{G})$ behaves like a "derivation", and give further support for the interpretation that $\mathbb{U}_q\G$ is the categorical version of the quantum enveloping algebra. See also \S \ref{lattice2alg} later.
\end{rmk}

\subsection{Geometry of the coproduct $\tilde\Delta$}\label{2gtgeometry}
As we have mentioned, the condition \eqref{covar} is necessary for the Hopf category $\mathfrak{C}(\mathbb{G}^{\Gamma})$ to be a Hopf module (as a measureable category) under 2-gauge transformations $\Lambda:\mathbb{G}^{\Gamma^1}\times\mathfrak{C}(\mathbb{G}^{\Gamma})\rightarrow\mathfrak{C}(\mathbb{G}^{\Gamma})$. Geometrically, this condition also has an interpretation in terms of the composition of faces and edges in the graph complex $\Gamma$, similar to what was described in \S \ref{2graphcoprod}.

Let us for now focus on the classical case to make the geometry more explicit. If we write, in Sweedler notation,
\begin{equation*}
    \tilde\Delta_\zeta= \sum \zeta_{(1)}\times\zeta_{(2)},\qquad \Lambda_{\tilde\Delta_\zeta} = \sum \Lambda_{\zeta_{(1)}}\times \Lambda_{\zeta_{(2)}},
\end{equation*}
then the point of \eqref{covar} (as well as the introduction of $\tilde\Delta$) is to ensure that the geometric picture in fig. \ref{fig:2gaugesplitting} is consistent.
\begin{figure}
    \centering
    \includegraphics[width=0.7\linewidth]{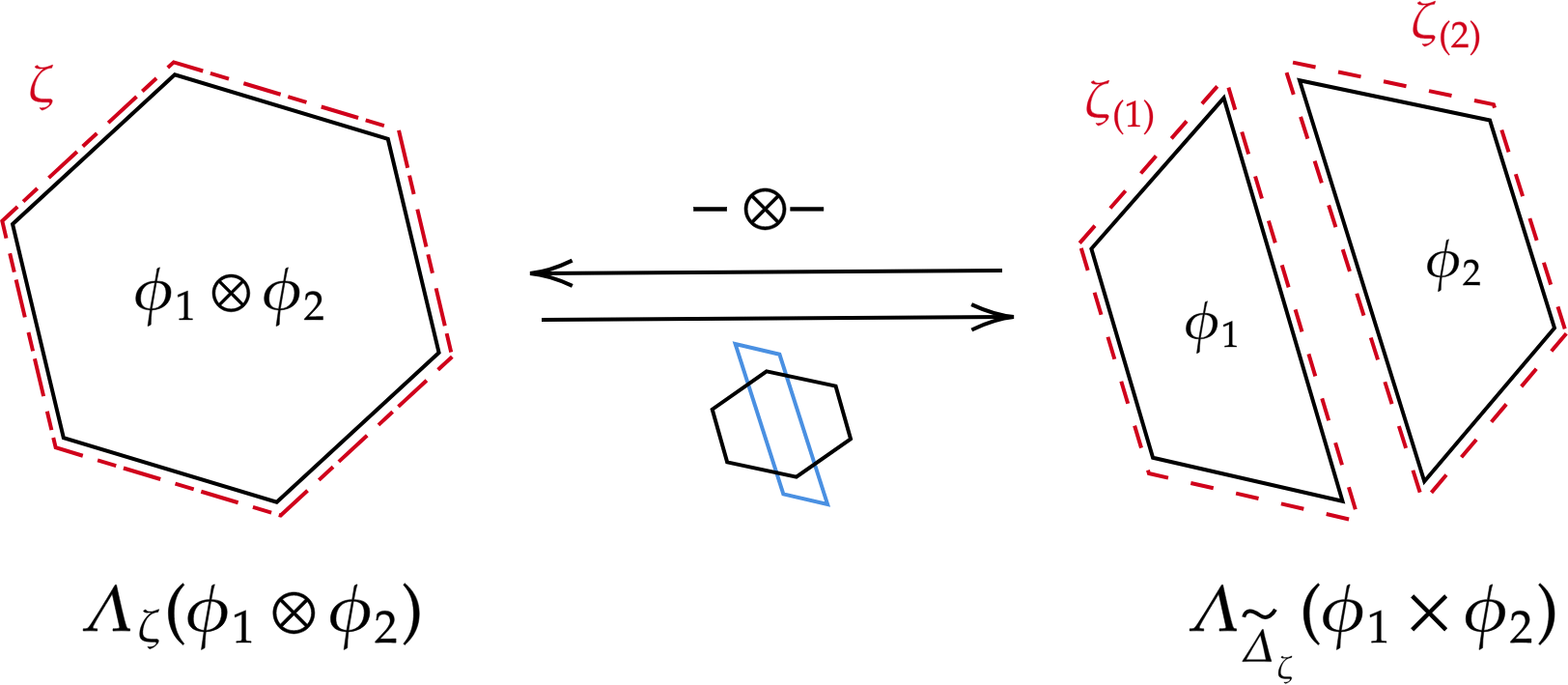}
    \caption{A coproduct $\tilde\Delta$ on the 2-gauge transformation parameters $\zeta\in\tilde{\cC}$ manifests naturally from the composition of the 2-graphs.}
    \label{fig:2gaugesplitting}
\end{figure}
Let us make this more explicit in the following.

\subsubsection{Graph and 2-gauge transformations}\label{2gausplitting}
Consider a face $(e,f)\in\Gamma^2$ which is obtained from the horizontal composition of two half-faces $(e_1,f_1),(e_2,f_2)$. The 2-graph operators $\phi_{(e_1,f_1)},\phi'_{(e_2,f_2)}\in\C(\mathbb{G}^{\Gamma})$  localized on such a face must then agree on the attaching edge $(e_1,f_1)\cap (e_2,f_2)$. 

We wish to examine the 2-gauge transformation properties of such a configuration of classical, geometric 2-graph operators. For this, it would be useful in the following to introduce the following notation which detects the proximity of a 2-gauge transformation from the decorated 2-graph localized at the face $(e,f)$. 

For each face $(e,f)\in\Gamma^2$ with root edge $e$, we define a 2-gauge action $\Lambda^{(e,f)}$ as follows,
\begin{equation*}
    \Lambda^{(e',f')}_{(a_v,\gamma_{e})} = \begin{cases}
        \Lambda_{(a_v,\gamma_{e'})} &; e'=e \\
        \Lambda_{(a_{v},\gamma_{e'})}^{-1} &; e' = e\ast \partial f \\
        \Lambda_{(a_v,1)} &; v = t(e') = s(e) \\ 
        \Lambda_{(a_v,{\bf 1}_{a_v})}^{-1} &; v=s(e') = t(e) \\
        \Lambda_{(1_v,({\bf 1}_{1_v})_e)} &; \text{otherwise}
    \end{cases},\qquad (a_v,\gamma_e)\in\mathbb{G}^{\Gamma^1}
\end{equation*}
where $\bar e$ denotes the orientation reversal of the edge $e$, whose source $s(\bar e)$ is the target $t(e)$ of $e$. This notation. Thus, if $(e,f)$ is composite, then we are able to deduce its action by 2-gauge transformations locally by looking at how its 1-graph boundary $\Gamma^1$ is composite.

Denote by 
\begin{equation*}
    \phi\mapsto \Lambda_{(a_{v_1},\gamma_{e_1})}^{(e,f)}\phi
\end{equation*}
the \textit{localized} 2-gauge transformation action on a 2-graph operator at $(e,f).$ Given the composite $(v,e) = (v_1,e_1)\cup (v_2,e_2)$ of the boundary edge, we have
\begin{equation*}
    \Lambda_{(a_{v_1},\gamma_{e_1})}^{(e_1,f_1)}\phi,\qquad \Lambda^{(e_2,f_2)}_{(a_{v_2},\gamma_{e_2})}\phi,
\end{equation*}
where $v_{1,2}$ denotes the source vertex of the edge $e_{1,2}$. Recall that the left-action $\Lambda$ is defined by precomposing with a horizontal conjugation action \eqref{2gt}, hence we see at the level of the decorated 2-graphs that
\begin{equation*}
    \operatorname{hAd}^{-1}_{(a_{v_1},\gamma_{e_1})}(h_{e_1},b_{f_1})\cdot_h\operatorname{hAd}^{-1}_{(a_{v_2},\gamma_{e_2})}(h_{e_2},b_{f_2}) = \operatorname{hAd}^{-1}_{(a_{v_1},\gamma_{e_1}\gamma_{e_2})}(h_e,b_f),    
\end{equation*}
where we have noted that $v_2$ must also be the target vertex of $e_1$ from the geometry, and by definition $(h_{e_1},b_{f_1})\cdot_h(h_{e_1},b_{f_1}) = (h_e,b_f)$. This dictates how the 2-gauge transformations act on tensor products of 2-graph operators: evaluating this identity yields the \textit{covariance} of the graph-cutting coproduct:
\begin{align*}
    \Lambda_{(a_{v_1},\gamma_{e})}^{(e,f)}\phi&\cong \left(\bigoplus_{\substack{v_2 = s(e_2) \\ \gamma_e=\gamma_{e_1}\gamma_{e_2}}}\Lambda_{(a_{v_1},\gamma_{e_1})}^{(e_1,f_1)}(\phi_{(1)}) \otimes \Lambda^{(e_2,f_2)}_{(a_{v_2},\gamma_{e_2})}(\phi_{(2)})\right),
\end{align*}
from which we can deduce the coproduct
\begin{equation}
    \tilde\Delta_{\zeta_{(v,e)}} = 
        \bigoplus_{\zeta_{(v_1,e_1)}\circ \zeta_{(v_2,e_2)}=\zeta_{(v,e)}}\zeta_{(e_1,f_1)}\times \zeta_{(e_2,f_2)}  \label{compat1}
\end{equation}
for 2-gauge parameters localized on composite 1-graphs. 

In this classical undeformed setting, the $R$-matrix is trivial, as such the expressions \eqref{compat1} give precisely the {classical} version of \eqref{covar}. The Sweedler notation for the coproduct is also "grouplike". We shall see in the following section that, in order to have a  quantum version of \eqref{covar}, we must have a non-trivial $R$-matrix on the 2-gauge parameters $\tilde{\cC}$ as well.

\subsubsection{The induced $R$-matrix on $\tilde{\cC}$}
To promote the compatibility conditions \eqref{compat1} to the quantum theory, we are going to assume that there exist elements $\tilde R\in\tilde{\cC}\times\tilde{\cC}$ such that there are sheaf identifications
\begin{align}
    & R \otimes \big(\Lambda_{\tilde\Delta_\zeta}(\phi \times \phi' )\big) \cong \Lambda_{\tilde R\cdot \tilde \Delta_\zeta}(\phi \times\phi') \nonumber\\ 
    & \big(\Lambda_{\tilde\Delta_\zeta}(\phi \times\phi' )\big)\otimes R \cong \Lambda_{\tilde\Delta_\zeta\cdot\tilde R}(\phi\times\phi') \label{inducrmat}
\end{align}
for any $\phi\times\phi'\in (\mathfrak{C}_q(\mathbb{G}^{\Gamma}))^{\times 2}$ and homogeneous elements $\zeta\in\tilde{\cC}$ determined by $\mathbb{G}^{\Gamma^1}$.\footnote{The "$\cdot$" on the right-hand sides denote the product of 2-gauge transformations in $\mathbb{G}^{\Gamma^1}$.} 

The purpose of this condition is that, if $\tilde R$ induces a cobraiding for the quantum 2-gauge parameters $\tilde{\cC}$ --- that is, there are natural transformations 
\begin{equation}
    \tilde{\mathsf{T}}_{(a_v,\gamma_e)}: (\tilde\Delta^\text{op})(a_v,\gamma_e)\cdot \tilde R \cong \tilde R\cdot\tilde\Delta_h(a_v,\gamma_e),\label{2gtR}
\end{equation}
for each $\zeta=(a_v,\gamma_e)\in\mathbb{G}^{\Gamma^1}$, then \eqref{covar} follows. We will now prove this.

\begin{theorem}
    Suppose there exists $\tilde R \in\tilde{\cC}\times\tilde{\cC}$ and the natural transformation \eqref{2gtR} which identifies a \textrm{cobraiding} on $\tilde{\cC}$. If the identifications \eqref{inducrmat} exist, then the 2-gauge transformation functor $\Lambda$ has the categorical quantum derivation property as in \textbf{Definition \ref{derivationpropertydefinition}}.
\end{theorem}
\begin{proof}
    We are going to leverage the expression \eqref{2quantumproduct-heis} for the quantum tensor product $\ostar$. If $(e,f),(e',f')$ are delocalized faces (namely their intersection is empty), then from \eqref{2fockrosly-heis} the Poisson bracket is trivial, hence the product reduces to the classical one $\ostar=\otimes$. Thus suppose from now on that $(e,f),(e',f')$ are not delocalized.
    
    Let $\phi_{(e'',f'')}$ be a localized 2-graph operator for which $(e'',f'') = (e,f)\cup (e',f')$ is composite face. Then beginning from the right-hand side of \eqref{covar}, we have
    \begin{align*}
        (-\ostar-)\big(\Lambda_{\tilde\Delta_\zeta}(\phi_{(e,f)}\otimes\phi_{(e',f')})\big) &= (-\otimes-)[R,\Lambda_{\tilde\Delta_\zeta}\Delta_h\phi_{(e'',f'')}]_c \\
        &\cong (-\otimes-)\big(R \otimes \Lambda_{\tilde\Delta_\zeta}\Delta_h\phi_{(e'',f'')} - (\Lambda_{\tilde\Delta_\zeta}\Delta_h \phi_{(e'',f'')})^\text{op}\otimes \tilde R^T\big),
    \end{align*}
    where the superscript "op" means that the two tensor factors are swapped (note $\Delta^\text{op}$ is the same as $\sigma\circ\Delta$). Consider the first term: using the condition \eqref{inducrmat}, then \eqref{2gtR} and \eqref{inducrmat} again gives us
    \begin{align*}
        R \otimes \Lambda_{\tilde\Delta_\zeta}\Delta_h\phi_{(e'',f'')} &\cong \Lambda_{\tilde R\cdot \tilde\Delta_\zeta}\Delta_h\phi_{(e'',f'')} \cong\Lambda_{\tilde\Delta_\zeta^\text{op}\cdot\tilde R}\Delta_h\phi_{(e'',f'')}\\
        &\cong \big(\Delta^\text{op}_\mathrm{h}\phi_{(e'',f'')}\Lambda^T_{\tilde\Delta_\zeta^\text{op}\cdot\tilde R}\big)^\text{op}=\big((\Delta^\text{op}_\mathrm{h}\phi_{(e'',f'')}\otimes \tilde R^T) \Lambda^T_{\tilde\Delta_\zeta^\text{op}}\big)^\text{op}\\
        &\cong \Lambda_{\tilde\Delta_\zeta}\big((\Delta_h^\text{op}\phi_{(e'',f'')}\otimes R^T) \big)^\text{op},
    \end{align*}
    where "$\otimes$" here denotes the monoidal structure on $\C_q(\mathbb{G}^{\Gamma})^{\times 2}$.

    If we perform a contraction $(-\otimes-)$, then we can make use of \eqref{compat1} to get
    \begin{align*}
        \Lambda_\zeta \big((-\otimes -)(\Delta^\text{op}_\mathrm{h}\phi_{(e'',f'')}\otimes R^T)^\text{op} \cong \Lambda_\zeta (-\otimes-)(R \otimes \Delta_h\phi_{(e'',f'')}),
    \end{align*}
    which is nothing but the first term of $\Lambda_\zeta (\phi_{(e,f)}\ostar\phi_{(e',f')})$. The same argument takes care of the second term, whence we finally achieve \eqref{covar}
    \begin{equation*}
        (-\ostar-)\big(\Lambda_{\tilde\Delta_\zeta}(\phi_{(e,f)}\otimes\phi_{(e',f')})\big) \cong \Lambda_\zeta(\phi_{(e,f)}\ostar\phi_{(e',f')}),
    \end{equation*}
    as desired.
\end{proof}


To summarize, we have essentially promote the results of \S \ref{2gausplitting} to the quantum case by inducing a $R$-matrix on $\tilde{\cC}$. We will see what \textbf{Definition \ref{derivation}} means categorically in \S \ref{lattice2alg}.

\subsection{Hopf structure on the quantum 2-gauge transformations}\label{2gthopf}
Let us now introduce an analogous set of compatibility conditions for $\tilde R$ on $\tilde{\cC}$, in analogy with \S \ref{catRmat}. Putting $\tilde{\cC}_0$ as the component additively generated by the vertex gauge transformation parameters $G^{\Gamma^0}$, we require the restriction $\tilde R|_{\tilde \cC_0\times\tilde \cC_0}=\tilde R_0\in \tilde{\cC}_0\times \tilde{\cC}_0$ to satisfy (cf. \eqref{catRnaturality})
\begin{align}
    (\bar s\times \bar s) \circ (\tilde R\cdot-) = (\tilde R_0\cdot -)\circ (\bar s\times \bar s)\qquad (\bar t\times \bar t) \circ (\tilde R\cdot-) = (\tilde R_0\cdot -)\circ (\bar t\times \bar t).\label{2gtRnaturality}
\end{align}
Next (cf. \eqref{hexagon} and \ref{hexagon1}), we require the comonoidality 
\begin{equation}
    (\tilde \Delta\times 1)\tilde R \cong \tilde R^{13}\cdot \tilde R^{12},\qquad 
    (1\times\tilde\Delta)\tilde R \cong \tilde R^{13}\cdot \tilde R^{23},\label{2gtquasitri}
\end{equation}
as usual, but also the condition that, given the vertical "antipode" involution $\tilde S_v$ on the 2-gauge transformations given in \textbf{Definition \ref{unitary2hol}}, the categorical $R$-matrix $(\tilde S_v\times 1)\tilde R=\tilde R^T=(1\times \tilde S_v)\tilde R$ satisfies the naturality condition
\begin{equation}
    (\tilde R^T\cdot-)\circ  -^\text{op} \cong -^\text{op}\circ (\tilde R\cdot -),\label{2gtnat}
\end{equation}
where $-^\text{op}:\tilde{\cC}\rightarrow \tilde{\cC}^\text{op}$ takes the opposite category.

\begin{rmk}
    This condition \eqref{2gtnat} is akin to the fact that, if $R$ is a quasitriangular $R$-matrix for an ordinary Hopf algebra $A$, then $R^{-1}$ is a quasitriangular $R$-matrix for the opposite algebra $A^\text{opp}$. However, here $-^\text{op}$ swaps the composition of the \textit{morphisms}, and does not swap the monoidal product. We shall denote $\tilde{\cC}$ with the reverse monoidal structure by $\tilde{\cC}^\text{m-op}$.
\end{rmk}

\subsubsection{Antipode on the 2-gauge transformations}
The antipodes $\tilde S_h,\tilde S_v$  are, similar to the unitarity of the 2-holonomies \textbf{Definition \ref{unitary2hol}}, geometric in nature. From the computations in \S \ref{2gtgeometry} with the 1-graph coproducts, we impose the following conditions
 \begin{align}
    & (-\cdot-) \circ(\tilde S_h \times 1) \circ\tilde\Delta_h \cong (-\cdot-)\circ (1\times\tilde S_h)\tilde\Delta_h \cong\tilde \eta \cdot \tilde{\epsilon},\nonumber\\
     & (\tilde S_v\gamma\circ \gamma) \cong \id_{\bar s\gamma},\qquad (\gamma\circ \tilde S_v\gamma)\cong \id_{\bar t\gamma},\qquad \forall~\gamma\in\tilde{\cC}_1\label{antpod}
\end{align}
where $\tilde \eta,\tilde\epsilon$ are the co/monoidal co/unit on $\tilde{\cC}$, and $\id_a$ is the compositional unit on $a\in \tilde{\cC}_0$. Here, $\tilde S_v$ acts as the identity on objects of $\tilde{\cC}$

Moreover, the horizontal antipodes $S_h,\tilde S_h$ should be compatible with respect to the $\tilde{\cC}$-module structure: there are sheaf isomorphisms
\begin{equation*}
    S_h(\Lambda_\zeta \phi)^* \cong \Lambda_{\tilde S_h\zeta} (S_h\phi^*)
\end{equation*}
which satisfy the obvious coherence conditions against the module coherence data.

Notice here that $\tilde S_v$ acts essentially like a dagger involution on $\tilde{\cC}$; it is an equivalence by construction. On the other hand, we will assume that the "true" antipode $\tilde S_h=\tilde S$ is also an equivalence,\footnote{This is natural, as in the case of an ordinary Hopf algebra, the bijectivenss of the antipode can be deduced from the relation $(S\otimes S)R=R$ as well as the quasitriangularity of the $R$-matrix \cite{Majid:1996kd}.} but \textit{not} necessarily involutive. From the 2-dagger geometry, we have that these antipodes strongly commute $$\tilde S_h^{\text{op}}\circ \tilde S_v = \tilde S_v^\text{m-op,c-op}\circ \tilde S_h.$$ This property will come back to us in \S \ref{*op}.


\subsubsection{2-gauge transformations as a Hopf category}\label{2gthopfalgbd}
Recall the notion of a Hopf internal category in \textbf{Definition \ref{internalhopf}}. We wish to prove the following.
\begin{lemma}\label{2-groupgraded}
    Consider a category $\tilde{\cC}$ graded by the 2-gauge parameters $\mathbb{G}^{\Gamma^1}$. Then $\tilde C$ is additive bimonoidal.
\end{lemma}
\begin{proof}
By a monoidal category $C$ graded by a monoid $A$, we mean a set/class of \textit{homogeneous objects} $C^\text{hom}$ and a map $|-|: C^\text{hom}\rightarrow A$, called a grading, which preserves the source/target maps; see also \cite{SOZER2023109155}.

    Define $\tilde{\cC}^\text{hom}=\mathbb{G}^{\Gamma^1}$, then we make $\tilde{\cC}$ as the additive completion of $\tilde{\cC}^\text{hom}$ by setting $\Lambda_{\zeta\oplus\zeta'} = \Lambda_\zeta\oplus \Lambda_{\zeta'}$. The source and target maps $\bar s,\bar t$ on $\tilde{\cC}$ are defined on homogeneous elements such that the conditions
    \begin{equation*}
        \hat s^*(\Lambda_\zeta\phi) = \Lambda_{\bar s\zeta}\hat s^*\phi,\qquad \hat t^*(\Lambda_\zeta\phi) = \Lambda_{\bar t\zeta}\hat t^*\phi
    \end{equation*}
    are satisfied, where we recall $\hat s^*,\hat t^*$ are the cosouce/cotarget maps on the 2-graph operators $\mathfrak{C}(\mathbb{G}^{\Gamma})$ (see \textbf{Lemma \ref{cocat}}). 

We specify the monoidal structure on $\tilde{\cC}$ on its homogeneous elements. This comes from the composition of 2-gauge transformations $\Lambda_\zeta\cdot \Lambda_{\zeta'} \cong \Lambda_{\zeta\cdot\zeta'}$ for $\zeta,\zeta'\in\tilde{\cC}=\mathbb{G}^{\Gamma^1}$, as explained in \S \ref{covarrep}. Similarly, the comonoidal functor is given by the horizontal coproduct $\tilde\Delta_h$ on $\tilde{\cC}^\text{hom}$, via $\Lambda_{\tilde\Delta_\zeta} = (\Lambda\times\Lambda)_{\tilde\Delta_\zeta}$.
    
 We now need to verify the bimonoidal axioms \eqref{bimonoid}. For each 2-graph operator $\phi\in\cC=\mathfrak{C}(\mathbb{G}^{\Gamma})$, we have from \eqref{covar} that
    \begin{align*}
        \Delta(\Lambda_\zeta (\Lambda_{\zeta'}\phi)) &\cong \Lambda_{\tilde\Delta_\zeta}(\Delta(\Lambda_{\zeta'}\phi)) \cong \Lambda_{\tilde\Delta_\zeta}\cdot \Lambda_{\tilde\Delta_{\zeta'}}(\Delta\phi) \\
        &\cong \big((\Lambda_{\zeta_{(1)}}\cdot\Lambda_{\zeta'_1})\times(\Lambda_{\zeta_{(2)}}\cdot\Lambda_{\zeta_{(2)}'})\big)\Delta(\phi),
    \end{align*}
    whereas
    \begin{align*}
        \Delta(\Lambda_{\zeta\cdot\zeta'} \phi)&\cong \Lambda_{\tilde\Delta_{\zeta\cdot\zeta'}}(\Delta\phi)\cong (\Lambda_{(\zeta\cdot\zeta')_1} \times \Lambda_{(\zeta\cdot\zeta')_2})\Delta(\phi).
    \end{align*}
    These two expressions, which describe spaces of  global measureable sections, are equivalent up to possibly a projective phase $c$ mentioned in \textit{Remark \ref{projrep}}, hence 
    \begin{equation}
        (1\otimes\sigma\otimes1)(-\cdot- \times -\cdot -)(\tilde\Delta_\zeta\otimes \tilde\Delta_{\zeta'}) \cong\zeta_{(1)}\cdot\zeta_{(1)}' \times \zeta_{(2)}\cdot \zeta_{(2)}' =\tilde\Delta_{\zeta\cdot\zeta'}\label{2gtbimon}
    \end{equation}
    as desired.
    
\end{proof}
    

Now recall the notion \textbf{Definition \ref{concretify}} that 2-gauge transformations $\Lambda_\zeta$ are by definition realized concretely by bounded linear operators $U_\zeta$. Given the Haar measure $\mu$ on $\mathbb{G}$, we can endow a Haar measure $\mu_{\Gamma^1}$ on $\mathbb{G}^{\Gamma^1}$ by
\begin{equation*}
    d\mu(a,\gamma)_{\Gamma^1} = \prod_{v\in\Gamma^0}d\sigma(a_v) \prod_{e\in\Gamma^1}d\nu^{a_v}(\gamma_e),
\end{equation*}
where in the edges $e$ in the second factor has the vertex $v$ as its source. 
\begin{theorem}\label{2gthopfalgd}
    Under the regularity assumption that the assignments $U:\zeta\mapsto U_\zeta$ define \textit{measureable fields} of ($\mu_{\Gamma}$-essentially) bounded linear operators over $(\mathbb{G}^{\Gamma^1},\mu_{\Gamma^1})$, then $\tilde{\cC}$ is a cobraided Hopf category internal to $\mathsf{Meas}_q$.
\end{theorem}
\begin{proof}
    
    By the regularity assumed in the hypothesis, \textbf{Lemma \ref{2-groupgraded}} can be proven within $\mathsf{Meas}_q$. The antipode $\tilde S$ and the cobraiding $(\tilde{\mathcal{R}},\tilde{\mathsf{T}})$ induced by the $R$-matrices $\tilde R$ are already described in the previous sections. 
    
    
\end{proof}

As suggested by its structures, this Hopf category "$\tilde{\cC}$" is supposed to model a categorification of the quantum enveloping algebra of $\G$.
\begin{definition}
    The \textbf{categorical quantum enveloping algebra} $\mathbb{U}_q\G$ is the quantum 2-gauge transformation $\tilde{\cC}$, which is an additive Hopf category internal to $\mathsf{Meas}_q$ monoidally graded (cf. \cite{SOZER2023109155})by the 2-group $\mathbb{G}^{\Gamma^1}$ on the 1-graph $\Gamma^1$  consisting of a single edge loop $v\xrightarrow{e}v$, and a marked vertex $v\in\Gamma^0$ as its 0-skeleton. 
\end{definition}
Clearly, the categorical quantum coordinate ring $\mathfrak{C}_q(\mathbb{G})$ is a monoidal module cocategory over $\mathbb{U}_q\G$ through $\Lambda$.

\section{Lattice 2-algebra of 2-Chern-Simons theory}
Given what we have found in the previous sections, we now demonstrate how the Hopf categorical structure of 2-gauge parameters can be "combined" with that of the 2-graph operators. This is done by describing the way in which $\mathfrak{C}(\mathbb{G}^{\Gamma})$ can be seen a a "regular" representation under $\tilde{\cC}$.

Then, we will define a Hopf category $\mathscr{B}^\Gamma$ encapsulating all of the degrees-of-freedom and gauge symmetries of 2-Chern-Simons theory on the lattice. This is accomplished through a categorical semidirect product construction, which will by definition embed the 2-graph operators $\mathfrak{C}(\mathbb{G}^{\Gamma})$ into $\mathscr{B}^\Gamma$ as a subcategory. We call this Hopf category $\mathscr{B}^\Gamma$ the \textbf{lattice 2-algebra} of 2-Chern-Simons theory, which can be understood as a categorified 4d analogue of the lattice algebra constructed in \cite{Alekseev:1994pa}.

\subsection{Bimodule structure of the 2-graph operators}\label{regrep}
Let $\mathbb{G}$ be a compact matrix Lie 2-group, in the sense that $G,\mathsf{H}$ are both compact matrix groups. We now put the Hopf categories $\mathfrak{C}(\mathbb{G}^{\Gamma})$ of 2-graph operators and their 2-gauge transformations $\tilde{\cC}$ on them together. To do so, we first need a notion of "regularity" for the $\tilde{\cC}$-module structure of the 2-graph operators.

To begin, we first recall the bounded linear operator $U_{(a_v,\gamma_e)}:\Gamma_c(H^X)\rightarrow \Gamma_c((\Lambda_{(a_v,\gamma_e)}H)^X)$ making $\Lambda$ concrete on the 2-graph operators $\phi$. By \textbf{Proposition \ref{pullbackmeas}}, $\Lambda$ is naturally isomorphic to the pullback of the \textit{conjugation} action of $\mathbb{G}^{\Gamma^1}$ on $\mathbb{G}^{\Gamma}$. By looking at \textit{translation} actions insread, we can induce a $\tilde{\cC}$-bimodule structure on $\mathfrak{C}(\mathbb{G}^{\Gamma})$. More precisely, for each $\zeta=(a_v,\gamma_e)\in\tilde{\cC}$, the left/right 2-group multiplication $\zeta\cdot-,~-\cdot \zeta$ on $\mathbb{G}^{\Gamma}$ induce measureable automorphisms 
\begin{equation*}
    -\bullet\zeta\cong (\zeta\cdot-)^{-1},~\zeta\bullet-\cong (-\cdot \zeta)^{-1}: \mathfrak{C}(\mathbb{G}^{\Gamma})\rightarrow\mathfrak{C}(\mathbb{G}^{\Gamma})
\end{equation*}
on the 2-graph operators, for which there exists a natural measureable transformation
\begin{equation*}
    (\zeta\bullet-) \circ (-\bullet \zeta)\cong (\zeta\bullet-)\circ (-\bullet \zeta).
\end{equation*}
Clearly, these module structures come with their natural module associators. 

We shall in the following collectively denote by this bimodule action by $\zeta\bullet\phi$ and $\phi\bullet \zeta\in\mathfrak{C}(\mathbb{G}^{\Gamma})$ for $\zeta=(a_v,\gamma_e)$. Now the point is that we should be able to recover the 2-gauge transformations $\Lambda$ from these bimodule actions. However, due to the shift gauge symmetry $h_e\mapsto \mathsf{t}(\gamma_e)h_e$ present in the edge holonomies $\{h_e\}_{e\in\Gamma^1}$, we must divide $\gamma_e$ into two decorated half-edges.

\medskip

This is done using the coproduct $\tilde\Delta$ on $\tilde{\cC}$. This motivates the following notion analogous to that given in \cite{Alekseev:1994pa}.
\begin{definition}
    A 2-graph operator $\phi\in \C_q(\mathbb{G}^{\Gamma})$ is called \textbf{left-covariant} iff for each $\zeta$, there is an isomorphism of sheaves such that we have the following sheaf isomorphisms
    \begin{equation}
        \phi\bullet (a_v,\gamma_e)\cong (1 \otimes U)_{\tilde\Delta(a_v,\gamma_e)}\bullet\phi,\qquad \forall (a_v,\gamma_e)\in\tilde{\cC}\label{leftreg}
    \end{equation}
    of spaces of continuous measureable sections over $X=\mathbb{G}^{\Gamma}$, where "$\cong$" denotes the appearance of module associators, which we shall suppress in the following.
    \end{definition}
\noindent Let us obtain the desired natural isomorphism $$\zeta^{-1}-\bullet -\zeta\xRightarrow{\sim} \Lambda_\zeta$$ from \eqref{leftreg} in the classical case. Here, we can write more explicitly in Sweedler notation,
    \begin{equation*}
     \phi \bullet {(a_v,\gamma_e)} \cong \bigoplus_{\substack{a^1_{v}a^2_{v}=a_v \\ \gamma^1_{e}\gamma^2_{e} = \gamma_e}} (((a^1_v,\gamma^1_{e}) \bullet- ) (U_{(a_{v}^2,\gamma^2_{e})}))\bullet \phi \cong \bigoplus_{\substack{a^1_{v}a^2_v=a_v \\ \gamma^1_{e}\gamma^2_{e} = \gamma_e}} (a^1_{v},\gamma^2_{e}) \bullet (U_{(a_{v}^2,\gamma^2_{e})}\phi).
\end{equation*}
By composing on the left by the antipode $\tilde S_h=\tilde S$ of $(a_{v}^1,\gamma^1_{e})$, the antipode axiom \eqref{antpod} leads to
\begin{equation*}
    \bigoplus_{\substack{(a_v^1,\gamma_{e_1})^{-1}\cdot(a_v,\gamma_e)=(a_v^2,\gamma_{e_2})}} (\tilde S(a^1_{v},\gamma^1_{e}))\bullet \phi\bullet (a_v,\gamma_e) \cong U_{(a_{v_2},\gamma_{e_2})}\phi,
\end{equation*}
which reads as the desired covariance condition,
\begin{equation*}
    U_{(a_v,\gamma_e)} \phi \cong \bigoplus_{\substack{(a_v^1,\gamma_{e_1})^{-1}\cdot (a_v^2,\gamma_{e_2})=(a_v,\gamma_{e})}} (\tilde S_h(a_{v}^1,\gamma_{e}^1))\bullet \phi \bullet (a_{v}^2,\gamma_{e}^2),
\end{equation*}
upon a quick change of variables, where we have neglected the module associator $\alpha^\bullet$.

\begin{proposition}\label{rightreg}
    Every left-covariant 2-graph operator $\phi$ is also right-covariant,
    \begin{equation*}
        (a_v,\gamma_e)\bullet\phi \cong \phi\bullet(1\otimes \bar U)_{\tilde\Delta(a_v,\gamma_e)},\qquad \forall~(a_v,\gamma_e)\in\tilde{\cC},
    \end{equation*}
    where $\bar U_{(a_v,\gamma_e)} = U_{\tilde S^{-1}(a_v,\gamma_e)}^\dagger$ is the dual contragredient representation.
\end{proposition}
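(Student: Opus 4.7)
The strategy is to derive the right-covariance from the left-covariance relation \eqref{leftreg} via the antipode $\tilde S$ on the quantum 2-gauge transformations $\tilde{\cC}$ (established in \textbf{Theorem \ref{2gthopfalgd}}), together with the 2-$\dagger$ unitarity of the 2-holonomy operators (\textbf{Definition \ref{unitary2hol}}). The algebraic step is to substitute $\zeta \mapsto \tilde S^{-1}\zeta$ in \eqref{leftreg} and use the anti-comultiplicativity $\tilde\Delta\circ \tilde S^{-1}= (\tilde S^{-1}\otimes\tilde S^{-1})\circ\sigma\circ\tilde\Delta$ of the antipode, so that \eqref{leftreg} takes the form
\begin{equation*}
    \phi\bullet\tilde S^{-1}\zeta \;\cong\; \sum\tilde S^{-1}\zeta_{(2)}\bullet U_{\tilde S^{-1}\zeta_{(1)}}\phi.
\end{equation*}
Multiplying both sides by the appropriate components of the $\tilde\Delta^{(2)}$-expansion of $\zeta$ and invoking the antipode axiom \eqref{antpod}, namely $\sum\zeta_{(1)}\cdot\tilde S^{-1}\zeta_{(2)}=\tilde\eta\tilde\epsilon(\zeta)$, together with strict coassociativity, the composite $\bullet$-action on the left collapses and isolates the pure left-translation $\zeta\bullet\phi$ on one side.

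The second step is to recognize the operator appearing on the right-hand side as the contragredient $\bar U$. This is where the 2-$\dagger$ unitarity is essential: combining $U_{\tilde S_h\zeta}=\bar U_{\zeta^{\dagger_1}}$ and $U_{\tilde S_v\zeta}=U_{\zeta^{\dagger_2}}^\dagger$ from \textbf{Definition \ref{unitary2hol}}, together with the strong commutation \eqref{commutedagger} and the involutivity $\dagger_2^2=\mathrm{id},~\dagger_1^2\cong\mathrm{id}$, yields precisely the identity $\bar U_\zeta = U_{\tilde S^{-1}\zeta}^\dagger$ for the composite antipode $\tilde S=\tilde S_h\circ\tilde S_v$. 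Substituting this into the rearranged relation produces the desired right-covariance $\zeta\bullet\phi \cong \phi\bullet (1\otimes\bar U)_{\tilde\Delta\zeta}$, once one notes that Hermitian adjunction $(-)^\dagger$ is exactly what swaps the orientation of the bimodule action $\bullet$ between its left and right components.

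The main anticipated difficulty is the analytic content rather than the algebraic bookkeeping. As highlighted in \textbf{Remark \ref{antipodeequiv}}, the well-definedness of the antipodes $\tilde S_{h,v}^{-1}$ at the operator level relies on density of the inverse and adjoint orbit measureable fields $\mathcal{O}^{-1}H^X,\bar{\mathcal{O}}H^X\subset \mathcal{O}H^X$ inside the orbit field $\mathcal{O}H^X\subset \mathscr{A}^0$ (see \S \ref{invar}). I expect that the identifications above will first hold only on a $\mu_{\Gamma^1}$-measureable dense subset --- essentially on the continuous measureable sections $\Gamma_c(H^X)$ --- and must then be extended to the $L^2$-norm completion $\Gamma(H^X)$ by appealing to boundedness of the operators $U_\zeta$ and $U_\zeta^\dagger$. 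The faithfulness of $\Lambda$ (cf.\ \textit{Remark \ref{projrep}}) then promotes the resulting sheaf isomorphism to a genuine equality of bimodule structures on the whole of $\tilde{\cC}\times\mathscr{A}^0$, up to the same projective 2-cocycle $c_\phi\in Z^2(\mathbb{G}^{\Gamma^1},U(1))$ already present in the left-covariance relation.
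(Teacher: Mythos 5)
Your proposal follows essentially the same route as the paper's proof: derive right-covariance from left-covariance by composing with antipode-twisted coproduct legs so that the antipode axiom \eqref{antpod} and coassociativity of $\tilde\Delta$ collapse the unwanted factors, then convert the residual left action of $U_{\tilde S^{-1}\zeta_{(2)}}$ into a right action of $\bar U_{\zeta_{(2)}}$ by Hermitian adjunction in the fibres. Two minor remarks: the paper performs the collapse \emph{inside} the conjugation operators (i.e. $U_{\zeta_{(2)(1)}\tilde S^{-1}\zeta_{(2)(2)}}=U_{\epsilon(\zeta_{(2)})}$) rather than among the $\bullet$-translations, which is the cleaner bookkeeping and avoids the non-adjacent Sweedler legs your substitution produces; and $\bar U_\zeta=U^\dagger_{\tilde S^{-1}\zeta}$ is the \emph{definition} of the contragredient given in the proposition statement, so your re-derivation of it from 2-$\dagger$-unitarity is unnecessary (though harmless).
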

\begin{proof}
    Let us use the shorthand $\zeta$ for transformations by elements in $\tilde{\cC}$. From the left-covariance of $\phi$ at the element $\zeta_{(1)}$, we have
    \begin{equation*}
        \phi \bullet \zeta_{(1)}\cong \bigoplus \zeta_{(1)(1)} \bullet (U_{\zeta_{(1)(2)}}\phi),
    \end{equation*}
    where we have used the Sweedler summation notation over the coproduct components of $\tilde\Delta_{\zeta_{(1)}}$. By applying from the left the operator $U_{\tilde S^{-1}\zeta_{(2)}}$ and summing, we find 
    \begin{align*}
        \bigoplus U_{\tilde S^{-1}\zeta_{(2)}}(\phi\bullet\zeta_{(1)})& \cong \bigoplus \bigoplus U_{\tilde S^{-1}\zeta_{(2)}}\left(\zeta_{(1)(1)} \bullet (U_{\zeta_{(1)(2)}}\phi)\right) \\
        &\cong \bigoplus \bigoplus\zeta_{(1)} \bullet (U_{\zeta_{(2)(1)}\tilde S^{-1}\zeta_{(2)(2)}}\phi) \\ 
        &\cong \bigoplus\zeta_{(1)} \bullet (U_{\epsilon(\zeta_{(2)})}\phi) \cong \zeta\bullet \phi,
    \end{align*}
    where we have used the coassociator of $\tilde\Delta$ and the antipode axiom \eqref{antpod}. Now going back to the top at the left-hand side, taking a conjugation in the fibre spaces that $U$ acts on, we have
    \begin{align*}
        \bigoplus U_{\tilde S^{-1}\zeta_{(2)}}(\phi\bullet\zeta_{(1)}) &\cong \bigoplus (\phi\bullet\zeta_{(1)})U_{\tilde S^{-1}\zeta_{(2)}}^\dagger \\
        & \cong \bigoplus \phi ((\bullet\zeta_{(1)})(\bar U_{\zeta_{(2)}}))= \phi\bullet (1\otimes \bar U)_{\tilde\Delta(\zeta)},
    \end{align*}
    where "$\cong$" in the second line denotes the appearance of a projective phase $c$, coming from module associativity. This proves the statement.
\end{proof}
A simple computation analogous for the left-covariance condition brings right-covariance to the form
\begin{equation*}
    \phi U_{\tilde S^{-1}\zeta} \cong \bigoplus_{\gamma_{\bar e_1}\gamma_{e_1}=\gamma_{\bar e}}(S\zeta_{(1)})^\dagger \bullet\phi\bullet \zeta_{(2)}.
\end{equation*}
which if we replace $\zeta \mapsto \tilde S\zeta$ we achieve
\begin{equation*}
    \phi U_\zeta \cong \bigoplus_{\gamma_{e_1}\gamma_{\bar e_2} = \gamma_e} (\tilde S^2\zeta_{(1)})^\dagger \bullet\phi \bullet(\tilde S\zeta_{(2)}).
\end{equation*}
Note this is {\it not} the same as left-covariance under the adjoint of $\Lambda$ (ie. a right 2-gauge action), since $\tilde S_h^2=\tilde S^2$ is in general not naturally isomorphic to the identity.


\subsubsection{Categorical definition of the lattice 2-algebra}\label{lattice2alg}
We are finally ready to define the lattice 2-algebra for 2-Chern-Simons theory. At this point, we are going to construct it categorically from a {\it semidirect product} operation on monoidal categories. We shall revisit the lattice 2-algebra later and describe it more concretely once we have understood the representation theory of $\tilde{\cC}$.

Let us first recall the definition of a semidirect product of monoidal categories, following \cite{Fuller:2015}.
\begin{definition}
    Let $\cC,\cD$ denote two monoidal categories with $\cD$ equipped with a (strong)\footnote{Here "strong" means the module associators and unitors are invertible.} $\cC$-module structure $\lhd:\cD\times\cC\rightarrow\cD$. The \textbf{semidirect product} $\cD\rtimes \cC$ is a \textit{monoidal} category consisting of pairs $(D,C)\in \cD\times\cC$ equipped with the monoidal structure
    \begin{equation*}
        (D,C)\otimes(D',C') = (D\otimes (D'\lhd C),C\otimes C').
    \end{equation*}
\end{definition}
\noindent Let $\varrho: (-\lhd -)\lhd - \Rightarrow -\lhd (-\otimes-)$ denote the module associator, and let $\varpi: (-\otimes -)\lhd- \Rightarrow (-\lhd -)\otimes (-\lhd -)$ be the module tensorator. The associator morphism on $\cD\rtimes\cC$ is given by $(\bar\alpha_\cD,\alpha_\cC)$, where $\alpha_{\cC,\cD}$ are the associators on $\cC,\cD$ respetively and
\begin{equation*}
    \bar\alpha_\cD= \varpi_{C_1}(D_2,D_3\lhd C_2)\circ {\bf 1}_{D_2\lhd C_1}\otimes \varrho_{C_1,C_2}(D_3) \circ \alpha_\cD(D_1,D_2\lhd C_1,D_3\lhd (C_1\otimes C_2)).
\end{equation*}
The fact that $\cD\rtimes\cC$ forms a monoidal category is proven in \cite{Fuller:2015}.


Taking $\cD = \mathfrak{C}(\mathbb{G}^{\Gamma})$ to be the Hopf category of the 2-graph operators and $\cC = \tilde{\cC}$ to be the 2-gauge transformations. We take the action functor $\lhd=\bullet$ to be the right-regular representation, and form the semidirect product $\cD\rtimes \cC$ equipped with the tensor product
\begin{equation*}
    (\phi,\zeta)\ostar (\phi',\zeta') \cong (\phi\ostar (\phi'\bullet\zeta),\zeta\cdot\zeta'),
\end{equation*}
where $\circ$ is the left-composition of 2-gauge transformations in $\tilde{\cC}$. By strict monoidality of $\mathbb{G}$, the module associator $\varrho_{\zeta,\zeta'}(\phi): (\phi\bullet\zeta)\bullet\zeta' \rightarrow \phi\bullet(\zeta\circ\zeta')$ is an invertible measureable morphism for each $\phi\in\mathfrak{C}(\mathbb{G}^{\Gamma})$. We shall denote by $\varrho_{\zeta,\zeta'}^* =\varrho_{\tilde S\zeta,\tilde S\zeta'}$ for the module associator under the antipode funcor $\tilde S$ on $\tilde{\cC}$.


\medskip

We now need the module tensorator $\varpi$. However, in our case, $\varpi$ takes a different form --- it must involve the coproduct $\tilde\Delta$.
\begin{proposition}\label{moduleassoc}
    Suppose $\mathfrak{C}(\mathbb{G}^{\Gamma})$ is generated by left-covariant 2-graph operators, then the derivation property \eqref{covar} provides precisely the  module tensorator $\varpi: (-\ostar-)\bullet-\Rightarrow (-\ostar-)\circ \big((-\times-)\bullet\tilde\Delta_-\big)\,.$ 
\end{proposition}
\begin{proof}
    Recall that for left-covariant 2-graph operators $\mathfrak{C}(\mathbb{G}^{\Gamma})$, the 2-gauge transformation $\Lambda$ is written equivalently in terms of the bimodule structure $\bullet$ via \eqref{leftreg}. The left-/right-module actions $\bullet$ strongly commute, hence if $\Lambda$ has the categorical quantum derivation property \eqref{covar}, so does $\bullet$.
    
    Now given $\bullet$ satisfies \eqref{covar} with $\bullet$ in place of $\Lambda$,
    $$    (-\ostar-) \bullet -\cong (-\ostar-)\circ (-\times-)\bullet \tilde\Delta,$$ then the sheaf identification underlying it
\begin{equation}
        \varpi_{\phi,\phi'}(\zeta): (\phi\ostar\phi')\bullet\zeta \cong (-\ostar-)\big((\phi\times \phi')\bullet\tilde\Delta_\zeta\big)\label{tensorator}
\end{equation}
are precisely the components of the module tensorator, where $\phi,\phi'\in\mathfrak{C}(\mathbb{G}^{\Gamma})$ and $\zeta\in\tilde{\cC}$. Naturality is clear. 
\end{proof}
\noindent Strictly speaking, it should be this bimodule structure $\bullet$ that appears in the categorical quantum derivation property.

The lattice 2-algebra (cf. \cite{Alekseev:1994pa}) is thus the semidirect product category satisfying additional conditions.
\begin{definition}
    Let $\Gamma$ denote a 2-graph embedded in a 3d Cauchy slice $\Sigma\subset M^4$. The \textbf{lattice 2-algebra} $\mathscr{B}^\Gamma$ of 2-Chern-Simons theory on $\Gamma$ is the Hopf monoidal semidirect product $(\mathfrak{C}(\mathbb{G}^{\Gamma})\rtimes\tilde{\cC},\bullet,\varrho,\varpi)$ as defined above, such that the following holds.
    \begin{enumerate}
        \item Each $\phi\in\mathfrak{C}(\mathbb{G}^{\Gamma})$ is left-covariant \eqref{leftreg} (and hence also right-covariant).
        \item As $\tilde{\cC}$-modules, we have the \textbf{braid relation}
        \begin{equation}
            \phi\times\phi' \cong (\Lambda\times \Lambda)_{\tilde R}(\phi'\times\phi)\label{braid}
        \end{equation}
        under the 2-gauge actions $U,U'$, which is natural with respect to measureable morphisms. Here, $\tilde R$ is localized on the 1-graph intersection of the supports of $\phi,\phi'$ on $\Gamma^2$.
    \end{enumerate}
\end{definition}
Let us explain briefly about the braid relation \eqref{braid}. As representations of the Hopf category $\tilde{\cC}$, the functor $\phi\ostar\phi'\rightarrow \phi'\ostar\phi$, given by swapping the tensor factors \textit{and} acting with the categorical $R$-matrix $R$, is intertwining iff the relations \eqref{quantumR} hold. Based on the definition of $\tilde R$ \eqref{inducrmat}, this then assures that both sides of \eqref{braid} furnish the same $\tilde{\cC}$-representation.

\begin{rmk}
    Recall from {\it Remark \ref{quasihopf}} that, in the weakly-associative case, $\mathfrak{C}(\mathbb{G}^{\Gamma})$ acquires witnesses for coassociativity, which can be viewed as an internal natural transformation $\alpha^\Delta:(\Delta\times 1)\Delta \Rightarrow (1\times\Delta)\Delta$ on $\mathfrak{C}(\mathbb{G}^{\Gamma})$. In this case, the aforementioned functor $c_{\phi,\phi'}$ is a $\mathbb{G}^{\Gamma^0}$-intertwiner up to homotopy.
\[\begin{tikzcd}
	{\phi\times\phi’} & {\phi\times\phi’} \\
	{\phi’\times\phi} & {\phi’\times\phi}
	\arrow[ from=1-1, to=1-2]
	\arrow[from=1-1, to=2-1]
	\arrow[from=1-2, to=2-2]
	\arrow[shorten <=4pt, shorten >=4pt, Rightarrow, from=2-1, to=1-2]
	\arrow[from=2-1, to=2-2]
\end{tikzcd}\]
    As such, both sides of the condition \eqref{braid} only determines the same $\mathbb{G}^{\Gamma^0}$-module element up to homotopy presented by a sheaf morphism. 
\end{rmk}

The braid relations \eqref{braid} can be interpreted as a an identification of sheaves over $(\mathbb{G}^{\Gamma},\mu_{\Gamma})$, which are only required to be defined \textit{$\mu_{\Gamma}$-a.e.} on the measureable global sections.


\subsubsection{Invariant 2-graph operators and the 2-Chern-Simons observables}
With the lattice 2-algebra $\mathscr{B}^\Gamma$ in hand, we can now define the obseravbles in discretized 2-Chern-Simons theory in an algebraic manner.
\begin{definition}
    The \textbf{observable 2-subalgebra} $\mathscr{O}^\Gamma\subset\mathscr{B}^\Gamma$ is the subspace generated by 2-graph operators $\phi$ satisfying the \textit{invariance} condition
    \begin{equation}
        \phi\bullet \zeta \cong \zeta\bullet\phi,\qquad \forall~ \zeta\in A\label{invarstatement}
    \end{equation}
    for all measureable subsets $A\subset\tilde{\cC}$.
\end{definition}
In other words, $\mathscr{O}^\Gamma$ is the space of (a.e.) invariants under the $\tilde{\cC}$-module structure $\Lambda$. We now prove that $\mathscr{O}^{\Gamma}$ inherits the additive cocategory structure from $\mathfrak{C}(\mathbb{G}^{\Gamma})$.

\begin{definition}
    By a \textbf{homotopy fixed point} $\cC^A$ of a cocategory $\cC= \cC_1\overset{\hat s}{\underset{\hat t}{\leftleftarrows}}\cC_0$ under the action of an additive monoidal category/algebraoid $A$, we mean the cocategory such that, for each $a\in A$ and coarrows $f\in\cC_1$, there exist an object $x_a\in\cC_0$ such that $f=\hat s(x_a)\in\cC_1$ is the cosource and $a\rhd f = \hat t(x_a)\in\cC_1$ is the cotarget. The assignment $a\mapsto x_a$ also satisfies the usual monoidality coherence axioms.
\end{definition}

\begin{proposition}\label{homfix}
    Consider the 2-graph operators $\cC=\mathfrak{C}(\mathbb{G}^{\Gamma})$ as a Hopf cocategory internal to $\mathsf{Meas}_q$ as in \S \ref{hopfcat}. The invariant states are precisely the homotopy fixed points $\cC^{\tilde{\cC}}$.
\end{proposition}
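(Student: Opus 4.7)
The plan is to interpret the homotopy fixed point structure $\cC^{\tilde{\cC}}$ explicitly in terms of the bimodule action $\bullet$ and the left-covariance relation \eqref{leftreg}, then match it against the invariance condition \eqref{invarstatement}. Since $\cC=\mathscr{A}^0$ is a Hopf cocategory acted on by the Hopf algebroid $\tilde{\cC}$, a homotopy fixed point should be an object $\phi\in\cC$ equipped with a coherent $\mu_{\Gamma^1}$-measureable system of natural isomorphisms $\theta_\zeta:U_\zeta\phi\Rightarrow \phi$, indexed by $\zeta\in\tilde{\cC}$, that is compatible with the composition of 2-gauge transformations (a cocycle relation against $\tilde\Delta$, using the module associator \eqref{factorproj}) and with the antipode axiom \eqref{antpod}. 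First I would spell out this cocycle structure, noting that the left-covariance of \S \ref{regrep} rewrites $U_\zeta\phi \cong \sum(\tilde S\zeta_{(1)})\bullet\phi\bullet\zeta_{(2)}$, so that a fixed-point datum $\theta$ is the same as a coherent comparison between $\phi$ and its conjugation under $\bullet$.

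Next, I would prove the easy direction: if $\phi$ is invariant in the sense of \eqref{invarstatement}, namely $\phi\bullet\zeta=\zeta\bullet\phi$ on every $\mu_{\Gamma^1}$-measureable subset $A\subset\tilde{\cC}$, then swapping $\zeta_{(2)}$ past $\phi$ gives the natural sheaf identification
\begin{equation*}
    U_\zeta\phi \;\cong\; \sum(\tilde S\zeta_{(1)})\bullet\phi\bullet\zeta_{(2)} \;=\; \sum (\tilde S\zeta_{(1)})\bullet\zeta_{(2)}\bullet\phi \;\stackrel{\eqref{antpod}}{\cong}\; \tilde\epsilon(\zeta)\cdot\phi \;\cong\;\phi,
\end{equation*}
defining a canonical witness $\theta_\zeta$. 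Coassociativity of $\tilde\Delta_\mathrm{h},\tilde\Delta_\mathrm{v}$ together with \eqref{factorproj} ensures that the $\theta_\zeta$ satisfy the cocycle/cointerchange relations required of a homotopy fixed point, and measureability of $\theta$ follows from that of the underlying operator norm maps $\zeta\mapsto|U_\zeta|$.

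For the converse, suppose $\phi\in\cC^{\tilde{\cC}}$ is equipped with coherent isomorphisms $\theta_\zeta:U_\zeta\phi\Rightarrow\phi$. Running the above computation backwards and applying $\theta$ to $\zeta_{(1)}$ and $\zeta_{(2)}$ separately, the cocycle relation forces the identity
\begin{equation*}
    \sum(\tilde S\zeta_{(1)})\bullet(\phi\bullet\zeta_{(2)}) \;\cong\; \sum(\tilde S\zeta_{(1)})\bullet(\zeta_{(2)}\bullet\phi)
\end{equation*}
$\mu_{\Gamma^1}$-a.e.\ in $\zeta$. Composing on the left with $\zeta_{(0)}$ and summing, the antipode axioms collapse both sides to $\phi\bullet\zeta$ and $\zeta\bullet\phi$ respectively. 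Finally, invoking the faithfulness of $\Lambda$ (as in \emph{Remark \ref{antipodeequiv}}) --- so that bijectivity of $\tilde S$ and of the operators $U_\zeta$ on the orbit bundle $\mathcal{O}H^X$ lets us cancel the outer factors on any set of positive measure --- yields \eqref{invarstatement} on every $\mu_{\Gamma^1}$-measureable subset.

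The main obstacle I foresee is analytic rather than algebraic: the homotopy fixed point data is only defined up to $\mu_{\Gamma^1}$-a.e.\ equivalence on a Grothendieck pretopology of measureable opens of $\tilde{\cC}$, while the bimodule invariance \eqref{invarstatement} is stated on every measureable subset. Matching these two notions precisely amounts to verifying that the sheaf isomorphisms produced by $\theta$ descend to the orbit equivalence classes of \S \ref{invar} and conversely that a.e.\ bimodule invariance lifts to a genuine coherent cocycle --- this uses the density assumption on the image of $\tilde\Delta$ made after \eqref{horcop2}, together with the fact that the bounded operators $U_\zeta$ vary smoothly in $\zeta$, so that equality on a dense measureable subset of $\tilde{\cC}$ extends by continuity.
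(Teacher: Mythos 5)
Your proof establishes an equivalence between the invariance condition \eqref{invarstatement} and a notion of homotopy fixed point that is not the one the paper uses, so as written it does not prove the proposition. You take $\cC^{\tilde{\cC}}$ to consist of objects $\phi$ equipped with coherent isomorphisms $\theta_\zeta: U_\zeta\phi\Rightarrow\phi$ --- the standard equivariantization-style definition. The paper, however, defines the homotopy fixed point of a \emph{cocategory} dually: for each gauge parameter $a$ and each coarrow (face state) $\phi\in\cF=\cC_1$, there must exist an \emph{object} (edge state) $\psi_a\in\cE=\cC_0$ whose cosource is $\phi$ and whose cotarget is $a\rhd\phi$. The witness lives in $\cC_0$, not in the hom-spaces of the ambient measureable category, and producing it is the entire content of the paper's proof: one puts a pure-gauge holonomy $h_e=a_v^{-1}a_{\bar v}$ on the edge $e:v\rightarrow\bar v$ so that the vertex action becomes a whiskering $h_e\rhd b_f$, and takes $\psi_v$ to be the edge state supported on such pure-gauge configurations; the invariance condition then identifies its cosource and cotarget as $\phi$ and $a_v\rhd\phi$, and compatibility with composition of gauge transformations follows from properties of whiskering. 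None of this appears in your argument.

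The algebra you do carry out --- rewriting $U_\zeta\phi\cong\sum(\tilde S\zeta_{(1)})\bullet\phi\bullet\zeta_{(2)}$ via left-covariance and collapsing it with the antipode axiom \eqref{antpod} --- is essentially the paper's preliminary step (it recasts \eqref{invarstatement} as $U_\zeta\phi\cong\phi$, i.e.\ $U=\id$ a.e.), and your converse direction is a sound Hopf-algebraic manipulation for your version of the statement. But to close the gap you would still need to translate ``there exists an isomorphism $U_{a}\phi\cong\phi$'' into ``there exists an edge state in $\cC_0$ with cosource $\phi$ and cotarget $a\rhd\phi$'', which is exactly the pure-gauge/whiskering construction you are missing. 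Your closing analytic concern about a.e.\ versus everywhere invariance is legitimate, but the paper defers that issue to the subsequent proposition identifying $\mathscr{O}^\Gamma$ with $\mathscr{A}^1$; it plays no role in the proof of this statement.
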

\begin{proof}
    Neglecting the module associators for the moment, let us use the covariance condition \eqref{leftreg} and \textbf{Definition \ref{concretify}} to rewrite the invariance condition in the following way
    \begin{equation*}
        U_\zeta \phi  \cong \phi.
    \end{equation*}
    It is convenient to denote by $\cV,\tilde{\cE}$ the (additive completion of the) \textit{vertex/edge parameters}, whose 2-gauge transformations $\Lambda_\zeta$ are localized respectively on the vertices $v$ and edges $e$ of the 1-graph $\Gamma^1$. These then allows us to describe $\tilde{\cC}$ as,
    \begin{equation*}
        \bar s,\bar t: \tilde{\cE}\rightrightarrows \cV,
    \end{equation*}
      in terms of the source/target structure maps $\bar s,\bar t$. Note that, if we set $\mathsf{H}=1$, the 2-gauge action $\Lambda$ recovers the notion of gauge transformations in the usual (lattice) 1-gauge theory. 

    If $\phi$ satisfies \eqref{invarstatement}, then of course $U=\id$ is the identity operator $\mu_{\Gamma}$-a.e. Hence for invariant 2-graph operators, a vertex parameter $a_v\in\cV$ acts $\phi$ as given in \eqref{2gt},
    \begin{equation*}
        (a_v\rhd \phi)(\{b_f\}_f) \cong  \phi(\{a_v^{-1}\rhd(a_{\bar v}\rhd b_f)\}_f),
    \end{equation*}
    where $\bar v$ is the target vertex of the source edge $e$ of the face $f$. The goal is therefore to find a edge state $\psi_v$ such that $\hat s^*\psi_v=\phi$ and $\hat t^*\psi_v =a_v\rhd\phi$. 
    
    If we put a "pure-gauge" $h_e = a_v^{-1}a_{\bar v}$ on the edge $e:v\rightarrow\bar v$, then we have $a_v^{-1}\rhd(a_{\bar v}\rhd b_f) = (a_v^{-1}a_{\bar v})\rhd b_f = h_e\rhd b_f$, which is nothing but a \textit{whiskering} operation \cite{Baez:2004,Baez:2004in,Chen:2024axr}. Now let $\psi_v$ denote an edge state that has support only on such pure gauge configurations $h_e = a_v^{-1}a_{\bar v}$, then the invariance condition \eqref{invarstatement} identifies $\psi_v$ to have cosource $\phi$ and cotarget $a_v\rhd\phi$, as desired. In other words, for each $v\in\cV$ and invariant coarrow $\phi\in\cF$, there exists an object
    \begin{equation*}
        a_v\rhd \phi \leftarrow \psi_v  \rightarrow\phi
    \end{equation*}
    trivializing the $\cV$-action. It is clear from the properties of whiskering that this assignment $v\mapsto \psi_v$ respects the composition of gauge transformations. 

\end{proof}

The idea that "gauging a symmetry" is the same as taking the "equivariantization/homotopy fixed points" of the given theory is well-known throughout recent literature; see eg. \cite{Delcamp:2023kew}. Here we provide a similar characterization of the 2-Chern-Simons observables: they are precisely the \textit{equivariantization} $\cC^{\tilde{\cC}}$ of the 2-graph operators $\mathcal{C}=\mathfrak{C}(\mathbb{G}^{\Gamma})$ with respect to the categorical quantum symmetry $\tilde{\cC}$.

\subsection{*-operation in the lattice 2-algebra $\mathscr{B}^\Gamma$}\label{*op}
In the final section of this paper, we now study a *-operation on $\mathscr{B}^\Gamma$. As inspired by \S 6 of \cite{Alekseev:1994pa}, this *-operation will be induced by the orientation and framing properties \cite{ferrer2024daggerncategories} of $\Gamma^2$. We now work to make this idea more precise.

Naturally, this makes the *-operations tied inherently to the antipodes introduced in \S \ref{2dagger}. As such, we will make use of 2-$\dagger$-unitarity and much of the ideas introduced there.

\subsubsection{Orientation and framing}
For each quantum 2-graph operator $\phi=\Gamma_c(H^X)[[\hbar]]\in\mathfrak{C}_q(\mathbb{G}^{\Gamma})$, we introduce natural $\bbC[[\hbar]]$-linear measureable sheaf morphisms  $\eta_\mathrm{h,v}: \Gamma_c(H^X)[[\hbar]]\rightarrow \Gamma_c(H^{\overline{X}^\mathrm{h,v}})[[\hbar]]$ induced on the 2-graph operators by the 2-$\dagger$ structure of $\Gamma$. 

\begin{definition}\label{daggerpair}
    We say the pair $(\eta_\mathrm{h},\eta_\mathrm{v})$ is a \textbf{2-$\dagger$-intertwining pair} iff for each $\zeta\in\mathbb{U}_q\G^{\Gamma^1}$, we have
    \begin{equation*}
    \eta_\mathrm{h} \circ U_\zeta   = U_{\bar \zeta}\circ  \eta_\mathrm{h} ,\qquad \eta_\mathrm{v} \circ  U_\zeta = U_{\zeta^T} \circ \eta_\mathrm{v}\phi
\end{equation*}
as operators acting any $\phi\in\C_q(\mathbb{G}^{\Gamma})$, where $\bar\zeta$ denotes the 2-gauge parameter assigned to the orientation reversal $\overline{v\xrightarrow{e}v'} = v'\xrightarrow{\bar e}v$, and $\zeta^T$ denotes that assigned to the \textit{frame rotation} $(v,e)^T = (v,e^T)$.
\end{definition}

We are finally ready to state the *-operations on the 2-graph operators and the 2-gauge transformations. Suppose the $R$-matrix $\tilde R$ on $\mathbb{U}_q\G^{\Gamma_1}$ is invertible, in the sense that the induced cobraiding natural transformations $\tilde\Delta\Rightarrow \tilde\Delta^\text{op}$ are invertible.

By locality, it suffice to define the *-operations on local pieces.
\begin{definition}\label{starop}
Let $(v,e) = v\xrightarrow{e}v'\in\Gamma^1$ denote a 1-graph, and  let $(e,f)\in\Gamma^2$ denote a 2-graph, with source and target edges $e,e':v\rightarrow v'$.
    \begin{enumerate}
        \item The \textbf{*-operations} on localized homogeneous elements in $\tilde{\cC}$ are given by
    \begin{equation}
        \zeta^{*_1}_{(v,e)} = \bar \zeta ,\qquad \bar\zeta^{*_2}_{(v,e)} = \zeta^T \label{dagger2-gau}
    \end{equation}
    where $v'\xrightarrow{\bar e}v$ is the orientation-reversal and $v\xrightarrow{e^T}v'$ is the framing rotation.
    \item Given the 2-$\dagger$-intertwining pairs in \textbf{Definition \ref{daggerpair}}, the \textbf{*-operations} on localized 2-graph operators $\phi_{(e,f)}\in\mathfrak{C}(\mathbb{G}^{\Gamma})=\mathfrak{C}_q(\mathbb{G}^{\Gamma})$ are given by 
    \begin{align*}
        & \phi_{(e,f)}^{*_1} = (\Lambda\times 1)_{\tilde R^{-1}}(\phi_{(\bar e',\bar f)}) \eta_\mathrm{h},\\
        & \phi_{(e,f)}^{*_2} = (\phi_{(e',\bar f)})^\text{op} \eta_\mathrm{v},
    \end{align*}
    where $(\bar e',\bar f) = (e,f)^{\dagger_1}$ and $(e',\bar f) = (e,f)^{\dagger_2}$. Here,  the $\tilde R$-matrix is localized on $\partial f$.
    \item The regular $\bullet$-module structure on $\mathfrak{C}(\mathbb{G}^{\Gamma})$ over $\tilde{\cC}$ is *-compatible: there exist natural measureable isomorphisms
    \begin{equation*}
        (\phi\bullet\zeta)^{*_{1,2}} \cong \zeta^{*_{1,2}} \bullet\phi^{*_{1,2}},\qquad \forall~ \phi\in\mathfrak{C}(\mathbb{G}^{\Gamma}),\quad \zeta\in\tilde{\cC},
    \end{equation*}
    satisfying the obvious coherence conditions against the $\bullet$-module associator and the tensorator \eqref{tensorator}.
    \end{enumerate}       
\end{definition}
\noindent These can be understood as a 2-dimensional version of the *-operation on the holonomies defined in \cite{Alekseev:1994pa}, (4.14). 

\begin{rmk}
    The geometry of this $*$-operation is clear: they are directly induced from the 2-$\dagger$ structure on $\Gamma^2$. However, the appearance of the $R$-matrices $\tilde R$ is a purely quantum phenomenon, as one needs to "pass" the target edge $e'$ through the source edge $e$ of the face $f$ upon an framing rotation. In the usual 3d lattice Chern-Simons case, the appearance of the $R$-matrices is kept track of by the so-called auxiliary "cilia" on the graphs \cite{Alekseev:1994pa}. Similarly, we can introduce 2-dimensional "2-cilia" as extensions on our 2-graph $\Gamma^2$, which could help visualize some of the computations below.
\end{rmk}



Note these orientation reversals are anti-homomorphisms, in the sense that\footnote{Keep in mind $\mathfrak{C}(\mathbb{G}^{\Gamma})$ is a \textit{co}category without composition!}
\begin{equation*}
    -^{*_1}: \mathfrak{C}(\mathbb{G}^{\Gamma})\rightarrow (\mathfrak{C}(\mathbb{G}^{\Gamma}))^{\text{m-op},\text{c-op}},\qquad -^{*_2}: \mathfrak{C}(\mathbb{G}^{\Gamma})\rightarrow (\mathfrak{C}(\mathbb{G}^{\Gamma}))^{\text{op}}
\end{equation*}
consistent with \eqref{antilinear}, they will swap the left- and right-actions in the $\tilde{\cC}$-bimodule structure of $\mathfrak{C}(\mathbb{G}^{\Gamma})$.
\begin{proposition}
    The *-operations on $\mathfrak{C}(\mathbb{G}^{\Gamma})$ strongly commute: there exist measureable natural isomorphisms $(\phi^{*_1})^{*_2} \cong (\phi^{*_2})^{*_1}$. 
\end{proposition}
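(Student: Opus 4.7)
The plan is to expand both sides of $(\phi^{*_1})^{*_2} \cong (\phi^{*_2})^{*_1}$ using Definition \ref{starop}, and then reduce their comparison to three ingredients: the strong commutation \eqref{commutedagger} of the framing rotation and orientation-reversal daggers on $\Gamma^2$, the compatibility of the two $R$-matrices $\tilde R_\mathrm{h},\tilde R_\mathrm{v}$ encoded in diagram \eqref{hexagon}/\eqref{2gthexagon1}, and the commutation of the 2-$\dagger$-intertwining morphisms $\eta_\mathrm{h},\eta_\mathrm{v}$.

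First I would unfold the left-hand side on a localized sheaf $\phi_{(e,f)}$. Applying $*_1$ gives $(\Lambda\otimes 1)_{\tilde R_\mathrm{h}^{-1}}(\phi_{(e,f)^{\dagger_1}})\,\eta_\mathrm{h}$, and applying $*_2$ afterwards yields
\begin{equation*}
(\Lambda\otimes 1)_{\tilde R_\mathrm{v}^{-1}} \circ (\Lambda\otimes 1)_{\tilde R_\mathrm{h}^{-1}}\bigl(\phi_{(e,f)^{\dagger_2\dagger_1}}\bigr)\,\eta_\mathrm{h} \circ \eta_\mathrm{v}.
\end{equation*}
An analogous expansion of the right-hand side produces
\begin{equation*}
(\Lambda\otimes 1)_{\tilde R_\mathrm{h}^{-1}} \circ (\Lambda\otimes 1)_{\tilde R_\mathrm{v}^{-1}}\bigl(\phi_{(e,f)^{\dagger_1\dagger_2}}\bigr)\,\eta_\mathrm{v} \circ \eta_\mathrm{h}.
\end{equation*}
The underlying stalks are then matched using \eqref{commutedagger}, which supplies a canonical measureable isomorphism $\phi_{(e,f)^{\dagger_1\dagger_2}} \cong \phi_{(e,f)^{\dagger_2\dagger_1}}$ modulo the opposite framing labelling; this labelling is what controls how the transposed $R$-matrix legs permute. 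The commutation $\eta_\mathrm{h} \circ \eta_\mathrm{v} \cong \eta_\mathrm{v} \circ \eta_\mathrm{h}$ is immediate from Definition \ref{daggerpair}, since both natural transformations implement the strongly-commuting daggers $\dagger_1,\dagger_2$ on stalks, and by inspection neither interferes with the $\tilde R_{\mathrm{h,v}}$-conjugations of the other.

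The remaining step, which I expect to be the main obstacle, is to show that the two nested $R$-matrix conjugations agree up to a natural measureable isomorphism. This is precisely the inverted form of the commutative square \eqref{2gthexagon1} (equivalently \eqref{hexagon}) that witnesses the compatibility of $\tilde R_\mathrm{h}$ with $\tilde R_\mathrm{v}$ against the cointerchange of $\Delta_\mathrm{h}$ and $\Delta_\mathrm{v}$. In the strict Lie 2-group setting under consideration, the cointerchanger $\beta$ is trivial on-shell of 2-flatness, so the diagram closes strongly and the two composites of $R$-matrix conjugations are identified. The careful bookkeeping lies in matching which tensor legs the transposed $R$-matrices act on once the reshuffling $\dagger_2\dagger_1 = \dagger_1^{\text{op}}\dagger_2$ is invoked, but this matching is forced by the geometry of the transversal intersection of the 2-cells implementing $\dagger_1$ and $\dagger_2$ with the face $(e,f)$. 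In the weakly-associative generalisation one would additionally need to keep track of the first descendant of the Postnikov class $\tau$ (cf. Remark \ref{descendant}), so that the commutation of $*_1$ and $*_2$ would hold only up to a higher coherence witness.
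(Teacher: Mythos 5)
Your proposal follows essentially the same route as the paper's proof: both reduce the claim to the strong commutation \eqref{commutedagger} of the two daggers (which supplies the stalk identification $\phi_{(e,f)^{\dagger_1\dagger_2}}\cong\phi_{(e,f)^{\dagger_2\dagger_1}}$), the strong commutation of the intertwiners $\eta_\mathrm{h},\eta_\mathrm{v}$, and the permutability of $\tilde R_\mathrm{h}$ with $\tilde R_\mathrm{v}$ encoded in the commutative square for the 2-gauge-transformation $R$-matrices. The one place the paper is more careful than your displayed formulas is that applying $*_2$ to the composite $\phi^{*_1}$, being an anti-homomorphism, inserts the transposed matrix $(\tilde R_\mathrm{h}^{-1})^T$ rather than $\tilde R_\mathrm{h}^{-1}$ into the nested conjugation --- a point you do flag in prose as the bookkeeping of transposed legs, so the idea is intact.
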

\begin{proof}
    Let $\bar\phi^\mathrm{h,v},\bar\Lambda^\mathrm{h,v}$ denote the evaluation of $\phi$ on, and the 2-gauge transformations $\Lambda$ in the proximity of, the 2-graphs living in the horizontal/vertical orientation reversal of $\Gamma^2$. It is clear from the strong commutativty \eqref{commutedagger} of the 2-$\dagger$ structure on $\Gamma^2$ that there exists an isomorphism of measureable fields 
    \begin{equation*}
        \overline{(\bar\phi^\mathrm{h})}^\mathrm{v}\cong\overline{(\bar\phi^\mathrm{v})}^\mathrm{h},
    \end{equation*}
    and that the intertwiners $\eta$ are idempotent and strongly commute $\eta_\mathrm{h}^{*_\mathrm{v}}\circ\eta_\mathrm{v} = \eta_\mathrm{v}^{*_\mathrm{h}}\circ\eta_\mathrm{h}$. Given the concrete realization of $\Lambda$ by $U$ via \textbf{Definition \ref{concretify}}, and given that $*_\mathrm{h,v}$ are anti-homomorphisms, we have
    \begin{align*}
        (\phi^{*_1})^{*_2} &= (\bar\phi^\mathrm{h}\bullet(1\times \bar U^\mathrm{h})_{\tilde R^{-1}}\eta_\mathrm{h})^{*_2} \\
        &\cong \eta_\mathrm{h}^{*}\circ ((1\times \bar U^\mathrm{h})_{(\tilde R^{-1})^T} \bullet-)\circ \eta_\mathrm{v}\circ(\overline{\bar\phi^\mathrm{h}}^{\mathrm{v}})\\
        &\cong \eta_\mathrm{h}^{*_\mathrm{v}}\eta_\mathrm{v}\Big( (1\times \overline{\bar U^\mathrm{h}}^{\mathrm{v}})_{((\tilde R^T)^{-1}\cdot -^\text{op})} \bullet \overline{\bar\phi^\mathrm{h}}^{\mathrm{v}}\Big)\\
        &\cong \eta_\mathrm{v}^{*_\mathrm{h}}\eta_\mathrm{h}(1\times \overline{\bar U^\mathrm{v}}^\mathrm{h})_{\big((-)^\text{op}\circ (\tilde R^T\cdot-)\big)^{-1}} \bullet \overline{\bar\phi^\mathrm{v}}^{\mathrm{h}}.
    \end{align*}
    The statement then follows once we can pass $\tilde R \circ -^\text{op}\cong -^\text{op}\circ (\tilde R^T\cdot-)$, but this is precisely the naturality of $\tilde R$ against $\tilde S_v=-^\text{op}$ mentioned in \S \ref{2gthopf}. 
\end{proof}

\subsubsection{Extending the *-operations to the lattice 2-algebra}
We are now in a position to extend this *-operation to the entirety of $\mathscr{B}^\Gamma$. In order to do so, we must prove that the relations, namely the covariance condition \eqref{regrep} and the braiding relation \eqref{braid}, must be preserved. 

\begin{proposition}
    The *-operations preserves the left-covariance condition \eqref{leftreg} (and hence also the right-covariance condition).
\end{proposition}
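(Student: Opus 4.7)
The plan is to apply $*_{1,2}$ to both sides of the left-covariance identity $\phi\bullet\zeta = \sum \zeta_{(1)}\bullet (U_{\zeta_{(2)}}\phi)$ and show that the result is precisely the left-covariance relation for $\phi^{*_{1,2}}$ over $\zeta^{*_{1,2}}$. The main ingredients are: (i) the *-compatibility of $\bullet$ stated at the end of \textbf{Definition \ref{starop}}, $(\phi\bullet\zeta)^{*_{1,2}}\cong \zeta^{*_{1,2}}\bullet\phi^{*_{1,2}}$; (ii) the 2-$\dagger$-intertwining property from \textbf{Definition \ref{daggerpair}}, $\eta_\mathrm{h}\circ U_\zeta = U_{\bar\zeta}\circ\eta_\mathrm{h}$ and $\eta_\mathrm{v}\circ U_\zeta = U_\zeta\circ\eta_\mathrm{v}$; and (iii) the $R$-matrix intertwining relations \eqref{2gtR} between $\tilde R_{\mathrm{h},\mathrm{v}}$ and $\tilde\Delta_{\mathrm{h},\mathrm{v}}$.

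First I would treat the $*_1$ case. The left-hand side $(\phi\bullet\zeta)^{*_1}$ becomes $\zeta^{*_1}\bullet\phi^{*_1}$ directly by *-compatibility. For the right-hand side $\sum(\zeta_{(1)}\bullet U_{\zeta_{(2)}}\phi)^{*_1}$, the anti-homomorphism property lands us in $(\mathscr{A}^0)^{\mathrm{c\text{-}op}_h}$, so the $\bullet$-factors swap and the Sweedler components appear in the opposite order. Using \eqref{dagger2-gau} to identify $\bar\zeta_{(2)} = \zeta_{(2)}^{*_1}$ and the intertwiner property $(U_{\zeta_{(2)}}\phi)^{*_1}\cong U_{\zeta_{(2)}^{*_1}}\phi^{*_1}$ obtained by transporting $U_{\zeta_{(2)}}$ through the $\eta_\mathrm{h}$ in the definition of $\phi^{*_1}$, one obtains $\sum U_{\zeta_{(2)}^{*_1}}(\phi^{*_1})\bullet\zeta_{(1)}^{*_1}$ with reversed Sweedler order. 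The $\tilde R_\mathrm{h}^{-1}$ factor in the definition of $\phi^{*_1}=(\Lambda\otimes 1)_{\tilde R_\mathrm{h}^{-1}}(\phi_{(\bar e',\bar f)})\eta_\mathrm{h}$ is then precisely what is needed: by \eqref{2gtR}, $\tilde R_\mathrm{h}^{-1}\cdot(\sigma\tilde\Delta_\mathrm{h}) \cong \tilde\Delta_\mathrm{h}\cdot\tilde R_\mathrm{h}^{-1}$, so conjugation by $\tilde R_\mathrm{h}^{-1}$ undoes the swap and converts $\sigma\tilde\Delta_\mathrm{h}(\zeta^{*_1})$ back to $\tilde\Delta_\mathrm{h}(\zeta^{*_1})$. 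Reading off the result, one obtains $\phi^{*_1}\bullet\zeta^{*_1} \cong \sum \zeta^{*_1}_{(1)}\bullet (U_{\zeta^{*_1}_{(2)}}\phi^{*_1})$, as desired.

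The $*_2$ case proceeds analogously, but is simpler because $\eta_\mathrm{v}$ commutes strictly with $U_\zeta$, so no relabelling of Sweedler factors is induced by the intertwiner; the $\tilde R_\mathrm{v}^{-1}$-factor in $\phi^{*_2}$ again absorbs the coproduct-opposite induced by the anti-homomorphism property via the vertical component of \eqref{2gtR}. Right-covariance of $\phi^{*_{1,2}}$ then follows automatically from \textbf{Proposition \ref{rightreg}}. The main obstacle will be the careful bookkeeping of the several sources of measureable natural isomorphisms that appear --- in particular, the projective-phase factor $c$ from \eqref{factorproj} and the $\bullet$-module tensorator \eqref{tensorator} must be verified to assemble coherently with the intertwining $2$-cells coming from \eqref{2gtR}, so that the chain of identifications closes strictly in the strict Lie 2-group setting (and up to a computable 2-cell in the weakly-associative setting, via \textit{Remark \ref{descendant}}).
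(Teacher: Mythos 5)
Your proposal is correct and follows essentially the same route as the paper's proof: both apply the $*$-operation to the covariance identity, use the anti-homomorphism property to produce the opposite coproduct $\sigma\tilde\Delta$, exploit the 2-$\dagger$-intertwining of $\eta_{\mathrm{h},\mathrm{v}}$ with $U$, and then use the intertwining relation \eqref{2gtR} so that the $\tilde R^{-1}$ factor in the definition of $\phi^{*}$ converts $\sigma\tilde\Delta(\zeta^{*})$ back to $\tilde\Delta(\zeta^{*})$, after which right-covariance follows from \textbf{Proposition \ref{rightreg}}. The only cosmetic difference is that the paper runs the horizontal and vertical cases simultaneously with a unified notation, whereas you treat them separately.
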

\begin{proof}
    By {\bf Proposition \ref{rightreg}}, it suffice to prove the statement for left-covariance. We will do this for the horizontal and the vertical orientation reversals at the same time. Towards this, let us introduce the notation $-^*$ to denote either $*_1$ or $*_2$, and denote $\bar \phi,\bar\Lambda$ the evaluation of 2-graph operators on, and the 2-gauge transformations in the proximity of, the corresponding orientation reversed 2-graphs. The only caveat is that for 2-gauge transformations, the *-operation $\zeta\mapsto \zeta^*$ comes with a {\it 1-graph} orientation reversal.

    We will treat $-^*$ as a anti-homomorphism also for the semidirect product structure $\lhd$ for $\mathscr{B}^\Gamma$; see \S \ref{lattice2alg}. Recall $\tilde\Delta^\text{op} = \sigma\tilde\Delta$, we then compute for each $\zeta\in\tilde{\cC}$ and right-covariant $\phi\in\mathfrak{C}(\mathbb{G}^{\Gamma})$, using the intertwining properties \eqref{2gtR} (and neglecting the $\mathrm{h}$-subscripts),
    \begin{align*}
        ((1\times U)_{\tilde\Delta(\zeta)}\bullet \phi)^* &\cong \phi^*\bullet (1\times U)_{\tilde\Delta^\text{op}({\zeta^*})} \\
        &\cong \bar\phi((-\bullet (1\times \bar U)_{\tilde R^{-1}})\circ  \eta \circ (-\bullet (1\times U)_{(\sigma\tilde\Delta)(\zeta^*)})) \\ 
        &\cong (\bar\phi\bullet (1\times \bar U)_{\tilde R^{-1}\cdot (\sigma\tilde\Delta)({\zeta^*})})\eta \\
        &\cong (\bar\phi\bullet(1\times \bar U)_{\tilde\Delta({\zeta^*}) \cdot \tilde R^{-1}})\eta \cong (\zeta^*\bullet\bar\phi)\bullet (1\times \bar U)_{\tilde R} \eta \\ 
        &\cong \zeta^*\bullet(\bar\phi \bullet (1\times\bar  U)_{\tilde R^{-1}}\eta) = \zeta^*\bullet \phi^*,
    \end{align*}
    where in the fourth line we have used the right-covariance property \eqref{leftreg} and in the fifth line a bimodule associator $(-\bullet \phi)\bullet - \cong -\bullet(\phi\bullet -)$.
\end{proof}

We now also need to show that these *-operations preserves the braid relation \eqref{braid}. This can be done in a completely analogous way as in the latter half of the proof of Lemma 8 in \cite{Alekseev:1994pa}. To do this, we first note that the quantum $R$-matrices $\tilde R$ are compatible with the antipode. Then,  by making use of the left-covariance property to pass $\tilde R$ to the left,
\begin{equation*}
    \phi\bullet(1\times \Lambda)_{\tilde R^{-1}} \cong (1\times \Lambda\times \Lambda)_{(1\otimes\tilde\Delta)\tilde R^{-1}}\bullet \phi,
\end{equation*}
a series of computations similar to the above proposition can be performed to show that the *-operation indeed preserves the braid relations.

The compatibility between the *-operation and the bimodule structure $\bullet$ then implies that the invariance condition \eqref{invarstatement} is also preserved.
\begin{theorem}
    The above *-operation extends to strongly-commutative functors 
    $$-^{*_1}:\mathscr{B}^\Gamma\rightarrow(\overline{\mathscr{B}}^{\Gamma})^{\text{m-op},\text{c-op}},\qquad ,-^{*_2}:\mathscr{B}^\Gamma\rightarrow(\overline{\mathscr{B}}^{\Gamma})^{\text{op}}.$$ Further, they descend to the observables $\mathscr{O}^\Gamma$.
\end{theorem}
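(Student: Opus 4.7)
The plan is to bootstrap the statement from the three already-established inputs: (i) \textbf{Definition \ref{starop}} gives $*_1,*_2$ on local pieces of $\mathscr{A}^0$ and $\tilde{\cC}$, (ii) the preceding proposition shows that the $\tilde{\cC}$-bimodule structure $\bullet$ and hence the left-/right-covariance conditions \eqref{leftreg} are preserved, and (iii) the braid relation \eqref{braid} is preserved by the argument sketched above (analogous to Lemma~8 of \cite{Alekseev:1994pa}, powered by the quasitriangularity \eqref{2gtquasitri} and the antipode-$R$ compatibility \eqref{antiRmat}). The goal is to upgrade these pointwise/local data into genuine contravariant functors on the semidirect product $\mathscr{B}^\Gamma=\mathscr{A}^0\rtimes\tilde{\cC}$.

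First, I would define $*_i$ on objects by the anti-homomorphism formula
\begin{equation*}
(\phi,\zeta)^{*_i} \;=\; (\zeta^{*_i}\bullet\phi^{*_i},\;\zeta^{*_i})
\end{equation*}
and on morphisms by transport along the measureable sheaf isomorphisms in \textbf{Definition \ref{starop}}. Well-definedness requires compatibility with the monoidal product $(\phi,\zeta)\otimes(\phi',\zeta')=(\phi\ostar(\phi'\bullet\zeta),\;\zeta\cdot\zeta')$ of $\mathscr{A}^0\rtimes\tilde{\cC}$: I would verify
\begin{equation*}
\bigl((\phi,\zeta)\otimes(\phi',\zeta')\bigr)^{*_i}\;\cong\;(\phi',\zeta')^{*_i}\otimes(\phi,\zeta)^{*_i}
\end{equation*}
by unwinding the definition and invoking (a) the $*$-compatibility of $\bullet$, (b) the anti-homomorphism property of $\zeta\mapsto\zeta^{*_i}$ on $\tilde{\cC}$, and (c) the interaction of $*_i$ with the $\ostar$-product, which on localized 2-graph states reduces via \eqref{2quantumproduct-heis} to moving $\tilde R^{\pm 1}_{\mathrm{h,v}}$ past $\Delta_\mathrm{h,v}$, i.e. to the intertwining relations \eqref{quantumR}. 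The module associator $\varrho$ and tensorator $\varpi$ \eqref{tensorator} on $\mathscr{B}^\Gamma$ must then be checked to be $*_i$-natural; the relevant coherence diagrams commute because the $R$-matrix components $\tilde R$ used in defining $*_i$ are themselves comonoidal (this is exactly \eqref{2gtquasitri}), and because $\bullet$ is $*$-compatible by the last clause of \textbf{Definition \ref{starop}}.

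Strong commutativity $-^{*_1}\circ -^{*_2}\cong -^{*_2}\circ -^{*_1}$ on the $\mathscr{A}^0$-factor has already been proven (the previous proposition uses the hexagon \eqref{2gthexagon} to commute $\tilde R_\mathrm{h}$ past $\tilde R_\mathrm{v}$). On the $\tilde{\cC}$-factor the commutation is immediate from \eqref{dagger2-gau} together with the strong commutation $\dagger_2\circ\dagger_1=\dagger_1^\mathrm{op}\circ\dagger_2$ on $\Gamma^2$ recalled in \eqref{commutedagger}. Since $*_i$ is defined diagonally on pairs $(\phi,\zeta)$ via the anti-homomorphism formula above and since $\bullet$ is $*$-compatible on both sides, strong commutativity descends to $\mathscr{B}^\Gamma$ after a short diagram chase that keeps track of $\eta_\mathrm{h}^{*_\mathrm{v}}\circ\eta_\mathrm{v}\cong\eta_\mathrm{v}^{*_\mathrm{h}}\circ\eta_\mathrm{h}$ (the 2-$\dagger$-intertwining pair condition of \textbf{Definition \ref{daggerpair}}).

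Finally, descent to $\mathscr{O}^\Gamma$ follows formally from $*$-compatibility of $\bullet$: if $\phi$ satisfies $\phi\bullet\zeta\cong\zeta\bullet\phi$ $\mu_{\Gamma^1}$-a.e., then applying $*_i$ reverses the order of the two sides to give $\zeta^{*_i}\bullet\phi^{*_i}\cong\phi^{*_i}\bullet\zeta^{*_i}$ a.e. on the image subset, so $\phi^{*_i}\in\mathscr{O}^\Gamma$; the isomorphism $\mathscr{O}^\Gamma\simeq\mathscr{A}^1$ is the content of the preceding proposition. I expect the main obstacle to be the coherence bookkeeping in the first step---specifically, ensuring that the module associator $\varrho$ (whose explicit form involves the projective phase formula \eqref{factorproj}) and the tensorator $\varpi$ conspire so that the conjugations by $\tilde R^{-1}_\mathrm{h,v}$ built into $*_i$ produce a \emph{strictly} functorial (and not merely lax) extension; if this fails, one instead gets a $*$-structure that is functorial only up to a measureable natural isomorphism, and one would need to verify that this isomorphism itself satisfies a cocycle condition witnessed by the descendant of the Postnikov class (cf. \textit{Remark \ref{descendant}}).
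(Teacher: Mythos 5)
Your proposal follows essentially the same route as the paper: the theorem there is established by verifying that the *-operations preserve the defining relations of the semidirect product $\mathscr{A}^0\rtimes\tilde{\cC}$ --- left/right-covariance (proven), the braid relation (via the AGS-Lemma-8-style computation powered by \eqref{2gtquasitri} and \eqref{antiRmat}), and the invariance condition --- with strong commutativity already secured on $\mathscr{A}^0$ through the hexagon \eqref{2gthexagon}. The only substantive differences are that you make the extension formula on pairs $(\phi,\zeta)$ and the anti-monoidality check explicit where the paper leaves them implicit, and you omit the paper's caveat that descent to $\mathscr{O}^\Gamma$ additionally requires faithfulness of $U$ and the density condition of \textit{Remark \ref{antipodeequiv}}.
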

\noindent This result is important for constructing scattering amplitudes on the lattice in a future work.
 



\section{Conclusion}\label{conclusion}
Given a Lie 2-group $\mathbb{G}$, this paper lays the foundation upon which the 4d 2-Chern-Simons theory can be quantized on the lattice. Based on the notion of measureable categories, we have introduced structures which capture the kinematical lattice degrees of freedom of the theory, and categoriefied the notion of quantum groups to the context of Hopf categories. This substantiates the expectations from the categorical ladder proposal of Baez-Dolan \cite{Baez:1995ph}. 

Based on the framework introduced in this paper, we provided a categorified notion of quantum groups, and described their Hopf (co)categorical structures. These are higher structures internal to the 2-category $\mathsf{Meas}$ (or the non-commutative version $\mathsf{Meas}_q$; see \S \ref{hopfopalg}), the measureable categories of Crane-Yetter, which are the natural background for the representation theory of Lie 2-groups \cite{Crane:2003gk,Yetter2003MeasurableC,Baez:2012}. We have also shown how, under certain technical assumptions, these categorical quantum group reduces to the known Lie 2-bialgebra symmetries of the 2-Chern-Simons action. 

\medskip 

As mentioned in the introduction, this work is part of a series towards the computation of 4-simplex scattering amplitudes for 2-Chern-Simons theory, and this shall remain the central goal.  Towards this, the companion paper \cite{Chen:2025?} examines the geometric, $SO(3)$-volutive aspects of the 2-representations of the categorified quantum enveloping algebra, and show that they form, in a suitable sense, a \textit{ribbon tensor 2-category}. Future work in the series will construct lattice scattering amplitudes from these 2-representations in order to resolve \textbf{Conjecture \ref{baezconjecture}}. 

We have also made numerous comments about how our framework can be directly applied to quantize the weak/semistrict 2-Chern-Simons theory \cite{Soncini:2014} on the lattice. The resulting lattice scattering amplitudes would serve as a vast generalization beyond the familiar 4d Crane-Yetter TQFT.



\subsubsection*{Higher-integrable boundaries of 2-Chern-Simons theory}
An interesting prospect is to study the boundary theories of 2-Chern-Simons theory through this lattice theoretic, categorical approach. Recent works \cite{Chen:2024axr,Schenkel:2024dcd} have examined the analogue of the localization procedure of Costello-Yamazaki \cite{Costello:2019tri} for 2-Chern-Simons theory, and they seem to host \textit{derived} current algebras as studied in the literature \cite{FAONTE2019389,Kapranov2021InfinitedimensionalL,Garner:2023zqn,alfonsi2024raviolo,Alfonsi:2024qdr}.  It would therefore be interesting to study how representations of $\mathbb{U}_q\G$ are related to the operator algebras of these 3d boundary theories,




\newpage

\appendix

\section{Hopf 2-algebras and the necessity of categorification}\label{catfyHopf2alg}
The central goal of this paper is to construct, in the combinatorial setting, the quantized algebra of observables in the 2-Chern-Simons theory. If we begin from the semiclassical perspective as inspired by \cite{Drinfeld:1986in,Jimbo:1985zk}, then we are prompted to consider a quantization of the Lie 2-bialgebra symmetries (and its classical 2-$r$-matrix) \cite{Bai_2013,chen:2022} underlying the 2-Chern-Simons action.

This led to the development of \textbf{Hopf 2-algebras}, which had appeared in various guises in the literature \cite{Wagemann+2021,Chen:2023tjf,LU2002119,Majid:2012gy}. The formulation which we shall consider are 2-term \textit{Hopf $A_\infty$-algebras}, fitting into the following diagram
\[
\begin{tikzcd}
	{\text{Lie bialgberas}} && {\text{Hopf algberas}} \\
	{L_\infty\text{-bialgebras}} && {\text{Hopf }A_\infty\text{-algebras}}
	\arrow[from=1-1, to=1-3]
	\arrow[from=1-1, to=2-1]
	\arrow[from=1-3, to=2-3]
	\arrow[from=2-1, to=2-3]
\end{tikzcd}\]
for which one can describe an analogue of the "universal envelop" of the Lie 2-algebra $\G=\h\xrightarrow{\mu_1}\g$. 

\subsection{Universal enveloping 2-algebra}
We first pin down the structures we wish to study.
\begin{definition}
    Let $\G=\operatorname{Lie}\mathbb{G} = \h\xrightarrow{\mu_1}\g$ denote the Lie 2-algebra associated to the Lie 2-group $\mathbb{G}$. The \textbf{universal enveloping algebra} of $\G$, $$U\G=U\h\xrightarrow{D\mu_1}U\g,$$ is the tensor $A_2$-algebra (ie. a 2-term chain complex with differential graded algebra structure)
    \begin{equation*}
        \bigoplus_n \h^{\otimes n} \xrightarrow{D_{\mu_1}}\bigoplus_n \g^{\otimes n},\qquad D_{\mu_1} = \sum_i(-1)^{i-1} 1\otimes\dots\otimes \mu_1\otimes\dots\otimes 1
    \end{equation*}
    freely generated by the Lie algebras $\h,\g$, subject to the following relations on homogeneous elements
    \begin{align*}
        &x\otimes x'-x'\otimes x = [x,x'],\qquad x,x'\in U\g,\\
        &y\otimes y' - y'\otimes y = [y,y'],\qquad y,y'\in U\h,\\
        & x\otimes y - y\otimes x = \mu_2(x,y),\qquad x\in U\g,~y\in U\h,
    \end{align*}
    where $x\in \g\subset U\g$ is given the degree 1 and $y\in \h\subset U\h$ is given a degree 2. We denote the unique element with homogeneous degree-0 by "$1$".
\end{definition}
\noindent It can be seen that $U\G$ has the "minimal" amount of relations to admit a canonical injection $\iota:\G\hookrightarrow U\G$. This can be stated as a universal property.

Now since 2-term chain complexes can be equivalently thought of as categories internal to $\mathsf{Vect}$ --- aka. the so-called \textit{Baez-Crans} 2-vector spaces $\mathsf{2Vect}^{BC}=\operatorname{Cat}_\mathsf{Vect}$ \cite{Baez:2003fs}, the following construction was given in \cite{Chen:2023tjf}.
\begin{proposition}
    $U\G$ is a \textbf{Hopf 2-algebra}: namely a Hopf category internal to $\mathsf{Vect}$.
\end{proposition}
\noindent The theory of \textit{weak} $A_\infty$-algebras and their modules were studied in \cite{Chen:2023tjf}. They serve as the foundation for the universal envelopes $U\G$ of the so-called weak Lie 2-algebras \cite{Chen:2013}.

\begin{rmk}\label{Ufunctor}
    Without specifying the coproduct, the above construction has also appeared previously in \cite{Wagemann+2021} as a "universal enveloping functor"
\begin{equation*}
    U: \mathsf{Lie2Alg}\rightarrow \mathsf{2Alg},\qquad \G\mapsto U\G
\end{equation*}
which sends a Lie 2-algebra to a graded associative (ie. $A_2$-)algebra. Since (Lie) 2-algebras are equivalent to (Lie) algebra objects in $\operatorname{Cat}_\mathsf{Vect}=\mathsf{2Vect}^{BC}$ \cite{Baez:2003fs}, we can understand $U$ as a (strict) functor of internal categories.
\end{rmk}

We can prove the graded analogue of the following property of the universal enveloping algebra \cite{Hall2015}.
\begin{proposition}\label{derivationaction}
    There is a canonical isomorphism of $A_2$-algebras $U\G\xrightarrow{\sim}\operatorname{Diff}(\mathbb{G})^{\mathbb{G}}$, where $\operatorname{Diff}(\mathbb{G})^{\mathbb{G}}$ is the $A_2$-algebra of invariant (under both group and groupoid multiplication) derivations on the $A_2$-algebra of functions $C(\mathbb{G})$ on $\mathbb{G}$.
\end{proposition}
\begin{proof}
    Let $\mathfrak{X}_\mathbb{G}\in\mathsf{Lie2Alg}$ denote respectively the $L_2$-algebras of left-invariant (again, under both group and groupoid multiplication) vector fields on $\mathbb{G}$, and $\operatorname{Der}(\mathbb{G})\in\mathsf{Lie2Alg}$ the graded derivations on the functions $C(\mathbb{G})$.
    
    The fact that we have a canonical $L_2$-algebra isomorphism $\mathfrak{X}_\mathbb{G}\xrightarrow{\sim}\operatorname{Der}(\mathbb{G})$ is known in \cite{Chen:2012gz}. At the same time, $\G = \mathfrak{X}_\mathbb{G}|_{(1,1)}$ is by definition the left-invariant vectors over the 2-group unit $(1,1)=(1_\mathsf{H},1_G)\in\mathbb{G}$, hence this $L_2$-algebra isomorphism restricts to $\G\xrightarrow{\sim}\operatorname{Der}(\mathbb{G})^{\mathbb{G}}$, where the target $\operatorname{Der}(\mathbb{G})^{\mathbb{G}}$ consist of the $\mathbb{G}$-invariant derivations on $C(\mathbb{G})$.

    Define $\operatorname{Diff}(\mathbb{G})^{\mathbb{G}}\in\mathsf{2Alg}$ the $A_2$-algebra of invariant differential operators on $C(\mathbb{G})$. By applying the functor $U$, the above $L_2$-isomorphism lifts to a $A_2$-algebra homomorphism $U\G\rightarrow \operatorname{Diff}(\mathbb{G})^\mathbb{G}$. It is not hard to show that this map is bijective.
\end{proof}
\noindent This is how  $U\G$ (or $U_q\G$) acts canonically on $C(\mathbb{G})$ (or $C_q(\mathbb{G})$). This fact was used in \S \ref{categoricalquantumdeformations}.

There is a left adjoint functor to the universal envelop $U$ defined in \textit{Remark \ref{Ufunctor}}, which is the \textit{Lie-ification} functor $L: \mathsf{2Alg}\rightarrow \mathsf{Lie2Alg}$ \cite{Wagemann+2021}. It was proven in the appendix B of \cite{Chen:2023tjf} that, if $(A;R)$ is a Hopf 2-algebra equipped with a $R$-matrix (see \S \ref{coprod}), then $L(A,R) = (L(A);r)$ is a Lie 2-bialgebra equipped with a classical 2-$r$-matrix.


\subsection{Problems with Hopf 2-algebras}
The formulation above is fine by itself, but there are several problems suffered by Baez-Crans 2-vector spaces (i) in regards to its modules, and (ii) in regards to its additivity structures.

\subsubsection{Representation theory in $\mathsf{2Vect}^{BC}$ and additivity at all levels}\label{hopf2algproblems}
By definition, a module $V\in\operatorname{Mod}_{\mathsf{2Vect}^{BC}}(A)$ of an $A_2$-algebra $A$ is a Baez-Crans 2-vector space $V\in\mathsf{2Vect}^{BC}$ equipped with an $A_2$-algebra map $\rho: A\rightarrow \operatorname{End}(V)$, where $\operatorname{End}(V)$ is the endomorphism 2-algebra \cite{Angulo:2018}. It was proven in \cite{Chen:2023tjf} that the 2-category $\operatorname{Mod}_{\mathsf{2Vect}^{hBC}}(A)$ of modules of a Hopf 2-algebra $A$, equipped with a 2-$R$-matrix. is \textit{braided monoidal} \cite{BAEZ1996196,GURSKI20114225,neuchl1997representation}.

However, such 2-representations are known to not carry any non-trivial $k$-invariants \cite{heredia2016representations2groupsbaezcrans2vector}; that is, all modules of $A$ are isomorphic to a trivial one. A generalization that remedies this issue was introduced in \cite{Chen:2023tjf}, which consist of $\mathsf{Vect}$-internal categories whose algebra objects are \textit{lax} (ie. \textit{pseudo}algebras \cite{Fiore2004PseudoLB}) --- namely the (Lie) algebra objects are 2-term ($L_\infty$-)$A_\infty$-algebras \cite{Stasheff:1963}. 

This led to the definition of the \textit{homotopy refinement} $\mathsf{2Vect}^{hBC}$ of Bae-Crans 2-vector spaces, and it can be shown that the modules $V\in\operatorname{Mod}_{\mathsf{2Vect}^{hBC}}(A)$ of a 2-term Hopf $A_\infty$-algebra do indeed  carry non-trivial $k$-invariants in .

\medskip

The second problem concerning additivity, on the other hand, is a separate issue. When viewed as $\mathsf{Vect}$-internal categories in $\mathsf{2Vect}^{BC}$ (or even $\mathsf{Vect}$-internal psedo-catgories in $\mathsf{2Vect}^{hBC}$), the composition of their morphisms are given simply by vector space \textit{addition}
\begin{equation*}
    x\xrightarrow{y}x'\xrightarrow{y'}x'' = x\xrightarrow{y+y''}x''.
\end{equation*}
This has the following unsavoury consequence: that a $\mathsf{Vect}$-internal category $A= A_1\otimes A_0\rightrightarrows A_0$ cannot be made additive. Since the origin $0\in A_1$ already occupies the role of the unit morphism over $A_0$, there is no way to define a "zero morphism".

This is of particular problem for the group 2-algebra $A=\bbC[\mathbb{G}]$ of a finite 2-group $\mathbb{G}$ specifically. Indeed, one cannot view $A$ as a category internal to $\mathsf{Vect}$ while simultaneously keeping the \textit{multiplicative} groupoid prdouct in $\mathbb{G}$ as the composition in $A$. We either have to forgo the Baez-Crans 2-vector space formulation altogether (cf. the (2-)groupoid linearization in \cite{Bullivant:2019tbp}), or one must perform a highly non-canonical quotient \cite{Wagemann+2021,Chen:2023tjf}.


\subsubsection{Quantum states and correlation functions of Wilson surfaces}\label{2holstates}
The above is not just a mathematical problem, but in fact a {physical} one. Indeed, if we take inspiration from \cite{Alekseev:1994pa,WITTEN1990285,Majid:2000}, then "Hopf 2-algebras" should model the structure of the \textit{correlation functions} in 2-gauge theory. To be more explicit, consider the surface-/path-ordered 2-holonomies \cite{Baez:2004in,schreiber2013connectionsnonabeliangerbesholonomy,Yekuteli:2015,Chen:2024axr} $S\exp \int_\Sigma B, ~P\exp\int_\gamma A$, which can be written as a 2-functor (see also \cite{Kim:2019owc}) 
\begin{equation*}
    P^2M^4\rightarrow (\mathbb{G},\ast),\qquad (\Sigma,\gamma)\mapsto \left(S\exp \int_\Sigma B, ~P\exp\int_\gamma A\right)
\end{equation*}
2-path groupoid $P^2M^4$ (also called the "surface 2-groupoid") on $M^4$. Given some appropriate categorical notion of "2-trace" \cite{Ganter:2006,Ganter:2014,Bartlett:2009PhD,Huang:2024} associated to a higher irreducible representation $\rho$ of $\mathbb{G}$, the associated Wilson surface correlation functions can be formally write as a path integral (see also \cite{Wen:2019} for the finite 2-group case)
\begin{equation*}
    \langle W_{\rho}(\Sigma,\gamma)\rangle = \frac{1}{Z}\int D[A,B] \operatorname{Tr}_\rho\left[\left(S\exp \int_\Sigma B\right)\cdot \left(P\exp\int_\gamma A\right)\right]e^{i2\pi k S_{2CS}[A,B]},
\end{equation*}
which should inherit the \textit{multiplicative} gluing composition $\Sigma\circ\Sigma'=\Sigma\cup_\gamma\Sigma'$ of surfaces from the 2-holonomies. 

However, as we have seen above, one cannot keep both additivity and such multiplicative vertical products in the framework of Baez-Crans 2-vector spaces --- Wilson surface correlators \textit{cannot} be modelled by Hopf 2-algebras!

\medskip

The point of categorifying to the context of the measureable categories $\mathsf{Meas}$, as well as the technical conditions in "Hypothesis (H)" \textbf{Definition \ref{hypH}}, is to remedy this problem. The decategorification 2-functor $\lambda: \mathsf{Cat}_\mathsf{Meas}\rightarrow\mathsf{Cat}_\mathsf{Vect}=\mathsf{2Vect}^{BC}$ serves as a way to lift Hopf 2-algebras to the structure of the 2-graph operators $\mathfrak{C}_q(\mathbb{G}^{\Gamma})$, which has enough room for multiplicative composition laws without sacrificing additivity.

\medskip

We also mention that the choice $\mathsf{Meas}$ is not random. The measureable categories are the infinite-dimensional analogues of Baez's 2-Hilbert spaces $\mathsf{2Hilb}$ \cite{Baez1996HigherDimensionalAI}, and it serves as the natural backdrop for the representation theory of \textit{Lie} 2-groups \cite{Crane:2003gk,Baez:2012} and Lie groupoids \cite{TRENTINAGLIA2010750}. As we have mentioned in \S \ref{hopfopalg}, $\mathsf{Meas}$ is also closely related to the frameworks of \cite{Henriques2017-gm,Kristel:2023gus}.

\newpage

\printbibliography

\end{document}